\documentclass[sigconf, nonacm]{acmart}
\usepackage{array}             
\usepackage{algorithm2e}
\usepackage{float}
\usepackage{rotating}
\usepackage{appendix}
\usepackage{listings}
\usepackage[english]{babel}
\usepackage{mathtools}         
\usepackage{enumitem}
\usepackage{bm}
\usepackage{amsthm}

\newtheorem{theorem}{Theorem}[section]
\newtheorem{lemma}[theorem]{Lemma}

\setlist{nosep}

\title{Inverse Quadratic Decay in Random Subset Sum}

\author{Edwin Chen}
\affiliation{
  \institution{Dept. of Electrical and Computer Engineering, Portland State University}
  \city{Portland}
  \state{OR}
  \country{USA}
}
\email{echen1ffa@gmail.com}

\author{Christof Teuscher}
\affiliation{
  \institution{Dept. of Electrical and Computer Engineering, Portland State University}
  \city{Portland}
  \state{OR}
  \country{USA}
}
\email{teuscher@pdx.edu}

\ccsdesc[500]{Theory of computation~Approximation algorithms analysis}
\ccsdesc[300]{Mathematics of computing~Probability and statistics}

\keywords{Subset Sum, Beam Search, Random Instances, Error Decay, Heuristics, Meet-in-the-Middle}

\begin{document}

\begin{abstract}
The Subset Sum Problem is a fundamental NP-complete problem in cryptography and combinatorial optimization, with many real-world applications. The Random Subset Sum Problem (RSSP) is a more applicable version of subset sum, where numbers are drawn from some i.i.d input distribution. We present an algorithm that, with probability $1-\delta$, constructs the same $O(B/w)$ mesh as Da Cunha et al.~\cite{DaCunhaRSSPMesh}, while trimming to $w$ elements throughout and running in $O(w\log w)$ time. Then, we present a novel beam search heuristic running in linearithmic time w.r.t list size $n$ and beam width $w$ using the mesh that gives an expected error of $O\!\left(\frac{B}{nw^2}\right)$ under a standard mean-field assumption with equal standard deviation, demonstrating the practical effectiveness of meshing to achieve error decay. The algorithm is empirically robust to multiple input distributions and can naturally extend to variants with simple changes to the scoring heuristic, establishing a new practical baseline for robust subset sum error decay and $\epsilon$-approximation theory.
\end{abstract}

\maketitle
 
\section{Introduction}

The Subset Sum problem is a classic NP-complete problem \cite{garey_computers_1979} with applications in resource allocation \cite{abboud2019andsubsetsum}, cryptography \cite{bonnetain_improved_2020}, financial auditing \cite{biesner_subset_2022}, and combinatorial optimization. Given a multiset \(S \subset \mathbb{Z}\) and a target \(T \in \mathbb{Z}\), the goal is to find a subset \(V \subseteq S\) whose sum is as close as possible to \(T\):
\begin{equation}
\text{Answer}(S, T) \;=\; \min_{V \subseteq S} \left| \sum_{v \in V} v - T \right|.
\end{equation}
We study the Random Subset Sum Problem (RSSP), where \(n\) i.i.d.\ weights are drawn from a fixed distribution, typically \(U(1,B)\), focusing on expected error decay for RSSP, which is useful well beyond cryptographic settings. 

We present and evaluate a simple beam search heuristic and prove:
\begin{itemize}[leftmargin=*]
\item \textbf{Explicit-Constant meshing bound:}
  Phase~A fills all $w$ buckets with probability at least $1-\delta$ in
  $7.96\,\log_2 w + 5.19\,\log_2(1/\delta)+O(1)$ iterations, implying an
  $O(B/w)$ mesh constructed in $O(w\log w)$ time with width-$w$ trimming at
  every step.
\item \textbf{MITM beam:} \(\mathbb{E}[\text{error}] = O\!\bigl(B/(n\,w^2)\bigr)\) with asymptotically equal standard deviation.
\item \textbf{Complexity:} \(O(nw\log w)\) time and \(O(w)\) memory for search; exact reconstruction in \(O(nw)\) time with \(O(w\sqrt{n})\) memory.
\item \textbf{Robustness:} The guarantees hold, up to constants, for a broad class of i.i.d. input distributions; empirical results match the theory.
\end{itemize}

To our knowledge, prior work has not emphasized expected error-decay rates for RSSP; many papers focus on uniform inputs (Sections \ref{sec:rep}, \ref{sec:attacks}) or worst-case settings (Section \ref{sec:exact}). For the methods that have a similar scope to our heuristic (Section \ref{sec:heuristics}), our analysis and experiments suggest the proposed heuristic is a practical baseline for robust Subset Sum approximation.

\section{Related Works}

\subsection{Exact and Approximate Algorithms for the Deterministic Version}
\label{sec:exact}
When all elements of \(S\) are nonnegative, both exact and approximation schemes are well studied. The classic dynamic program runs in pseudo-polynomial time \(\widetilde{O}(n\,w_{\max})\), where \(w_{\max}=\max S\). Chen et al.~\cite{chen_faster_2023} improved this to \(O(n + w_{\max}^{3/2})\). Fully Polynomial-Time Approximation Schemes (FPTAS) such as Gens and Levner~\cite{gens_approximation_1979} and Chen et al.~\cite{chen_partition_2024} offer guarantees with near-optimal dependence on \(\varepsilon\), the relative error. Table~\ref{tab:exact} summarizes representative results we compare against.

\begin{table}[htbp]
\centering
\caption{Exact and approximate algorithms for Subset Sum (nonnegative case).}
\label{tab:exact}
\begin{tabular}{|>{\raggedright\arraybackslash}p{0.18\linewidth}|>{\raggedright\arraybackslash}p{0.18\linewidth}|>{\raggedright\arraybackslash}p{0.15\linewidth}|>{\raggedright\arraybackslash}p{0.16\linewidth}|>{\raggedright\arraybackslash}p{0.18\linewidth}|}
\hline
\textbf{Type} & \textbf{Reference} & \textbf{Time} & \textbf{Space} & \textbf{Notes} \\
\hline
Exact (DP)               & Classical DP (Bellman '56) & $\widetilde{O}(n\,w_{\max})$ & $O(w_{\max})$ & Pseudo-polynomial; $w_{\max}=\max S$ \\
\hline
Exact (DP, randomized)   & Bringmann '17              & $\widetilde{O}(n + T)$       & $O(T)$        & Near-linear randomized DP \\
\hline
Exact (DP, deterministic)& Koiliaris \& Xu '17        & $\widetilde{O}(\sqrt{n}\,T)$ or $\widetilde{O}(T^{4/3})$ & $O(T)$ & Subquadratic DP \\
\hline
FPTAS (trimming)         & Kellerer \& Mansini        & $\widetilde{O}(n + \varepsilon^{-2})$  
& $O(1/\varepsilon)$ & Classic list trimming \\
\hline
FPTAS (near-linear)      & Chen et al. '24            & $\widetilde{O}(n + 1/\varepsilon)$ & $O(n + 1/\varepsilon)$ & Weak scheme variant \\
\hline
FPTAS (trimming)         & Gens \& Levner '79         & $O(n/\varepsilon)$           & $O(1/\varepsilon)$ & Empirically Fast Baseline \\
\hline
Exact (offset DP)        & Offset-based DP            & $\widetilde{O}(n\,R)$        & $O(R)$        & $R=\sum_{x\in S}|x|$; handles negatives \\
\hline
Probabilistic & \textbf{Ours}
& $O(n w \log w)$
& $O(w)$ (or $O(w\sqrt{n})$ with reconstruction)
& Expected error $O(B/(n w^2))$
\\
\hline
\end{tabular}
\end{table}

\subsection{Heuristic Methods}
\label{sec:heuristics}
Heuristics are attractive for large or mixed-sign instances where exact or FPTAS methods can be slow or inapplicable. Representative approaches include Genetic Algorithms~\cite{nguyen_genetic_2004, goldberg_genetic_1989}, Simulated Annealing~\cite{kirkpatrick_optimization_1983}, Particle Swarm Optimization (PSO)~\cite{kennedy_particle_1995}, Tabu Search~\cite{glover_tabu_1989, glover_tabu_1990, glover_future_1986}, and the Arithmetic Optimization Algorithm (AOA)~\cite{madugula_aoa_2022}. Table~\ref{tab:heuristic} summarizes typical characteristics.

\begin{table}[htbp]
\centering
\caption{Heuristic algorithms for Subset Sum.}
\label{tab:heuristic}
\begin{tabular}{|>{\raggedright\arraybackslash}p{0.25\linewidth}|>{\raggedright\arraybackslash}p{0.24\linewidth}|>{\raggedright\arraybackslash}p{0.20\linewidth}|>{\raggedright\arraybackslash}p{0.20\linewidth}|}
\hline
\textbf{Method} & \textbf{Typical Runtime} & \textbf{Memory} & \textbf{Notes} \\
\hline
Genetic Algorithm & $O(n\cdot gen\cdot pop)$ & $O(n\cdot pop)$ & Evolutionary, tunable \\
\hline
Simulated Annealing & $O(n\cdot \log T)$ & $O(n)$ & Escapes local minima \\
\hline
Particle Swarm Optimization (PSO) & $O(n\cdot particles\cdot iter)$ & $O(n\cdot particles)$ & Swarm-based \\
\hline
Arithmetic Optimization Algorithm (AOA) & $O(n\cdot pop\cdot iter)$ & $O(n\cdot pop)$ & Good for high density \\
\hline
Tabu Search & $O(n\cdot iter)$ & $O(n\cdot tabu)$ & Memory-guided local search \\
\hline
\textbf{Hyper-heuristics} & \emph{Varies} & \emph{Varies} & Combines multiple heuristics \\
\hline
\textbf{Beam Search (Ours)} & $O(n\cdot w\log w)$ & $O(w)$ (or $O(\sqrt{n}\,w)$ with reconstruction) & Deterministic, \textbf{proven error and variance decay} \\
\hline
\end{tabular}
\end{table}

\subsection{Representation Techniques for Subset Sum}
\label{sec:rep}

\begin{table}[htbp]
\centering
\caption{Representation-based algorithms for Subset Sum (average case).}
\label{tab:representation-techniques}
\begin{tabular}{|>{\raggedright\arraybackslash}p{0.23\linewidth}|>{\raggedright\arraybackslash}p{0.20\linewidth}|>{\raggedright\arraybackslash}p{0.23\linewidth}|>{\raggedright\arraybackslash}p{0.22\linewidth}|}
\hline
\textbf{Technique} & \textbf{Alphabet / Params} & \textbf{Complexity (time / space)} & \textbf{Model \& Reference} \\
\hline
Baseline MITM / Schroeppel--Shamir &
$A=\{0,1\}$; 2-way split; hashing/merge &
$\tilde O(2^{n/2})$ time; $\tilde O(2^{n/4})$ space &
Worst case; \cite{horowitz_sahni_1974,schroeppel_shamir_1981} \\
\hline
Representation + dissection (HGJ) &
$A=\{-1,0,1\}$; modular balancing; sparse sampling &
$\tilde O(2^{0.337\,n})$ time; $\tilde O(2^{0.337\,n})$ space &
Uniform Random; \cite{howgravegraham_joux_2010} \\
\hline
Optimized representation (BCJ) &
$A=\{-1,0,1\}$; multi-level dissection; tuned moduli &
$\tilde O(2^{0.291\,n})$ time; $\tilde O(2^{0.291\,n})$ space &
Uniform Random; \cite{becker_coron_joux_2011} \\
\hline
Larger alphabet (classical) &
$A=\{-1,0,1,2\}$ &
$\tilde O(2^{0.283\,n})$ time; $\tilde O(2^{0.283\,n})$ space &
Uniform Random; \cite{bonnetain_improved_2020} \\
\hline
Larger alphabet + Grover (quantum) &
$A=\{-1,0,1,2\}$ &
$\tilde O(2^{0.236\,n})$ time; quasilinear space &
Uniform Random (quantum); \cite{bonnetain_improved_2020} \\
\hline
Larger alphabet + quantum walk &
$A=\{-1,0,1,2\}$ &
$\tilde O(2^{0.216\,n})$ time; quasilinear space &
Uniform Random (quantum); \cite{bonnetain_improved_2020} \\
\hline
\end{tabular}
\end{table}

For uniform random inputs, a powerful line of work accelerates meet-in-the-middle by allowing redundant encodings of a \(0/1\) solution using a larger coefficient alphabet (e.g., \(A=\{-1,0,1\}\) or \(A=\{-1,0,1,2\}\)), as summarized in Table~\ref{tab:representation-techniques}. 

However, there is comparatively little discussion of how such representation-based methods behave when viewed as approximators rather than exact algorithms. While representation-based algorithms implicitly discard large portions of the search space, their analyses typically focus on success probability and running time, rather than on how the induced approximation error decays as a function of retained state size. Because representation techniques filter partial sums using exact modular congruences, they do not naturally preserve 'close' approximations, making their direct adaptation into an expected-error heuristic non-trivial. As a result, the relationship between these techniques and expected-error frameworks has not been systematically analyzed. We view these representation-based techniques as orthogonal to our expected-error framework, rather than directly competing approaches.

\subsection{Attacks on Randomized Subset Sum}
\label{sec:attacks}
RSSP admits probabilistic attacks in certain density regimes. Let the (base-2) density be \(d := n/\log_2 B\). Some ranges are believed “easy,” though algorithms that solve these "easy" ranges have high polynomial degree and/or rely on Shortest Vector Problem (SVP) oracles. The methods of \cite{lagarias_solving_1985} and \cite{coster_improved_1992} rely on SVP oracles. 

\begin{table}[htbp]
\centering
\caption{Density ranges where random Subset Sum is expected polynomial-time solvable.}
\label{tab:attacks-on-randomized}
\begin{tabular}{|>{\raggedright\arraybackslash}p{0.25\linewidth}|>{\raggedright\arraybackslash}p{0.20\linewidth}|>{\raggedright\arraybackslash}p{0.20\linewidth}|>{\raggedright\arraybackslash}p{0.25\linewidth}|}
\hline
\textbf{Density regime} & \(\boldsymbol{\log_2 B}\) & \textbf{Algorithm} & \textbf{Reference} \\
\hline
Very low: \(d < 0.6463\) & \(\log_2 B > 1.546\,n\) & LLL-based lattice reduction & \cite{lagarias_solving_1985} \\
\hline
Low--Moderate: \(0.6463 \le d < 0.94\) & \(1.064\,n < \log_2 B \le 1.546\,n\) & Improved lattice embeddings & \cite{coster_improved_1992} \\
\hline
Quasi-polylog: \(d = \Theta\!\bigl(n/(\log n)^2\bigr)\) & \(\log_2 B = O((\log n)^2)\) & Recursive bit-peeling & \cite{flaxman_solving_2005} \\
\hline
Very high: \(d \ge \Theta\!\bigl(n/\log n\bigr)\) & \(\log_2 B = O(\log n)\) & Classic pseudo-polynomial DP & --- \\
\hline
\end{tabular}
\end{table}

Large density ranges remain challenging in practice, as shown in Table \ref{tab:attacks-on-randomized}.

\section{General Beam Search}

Beam Search is a heuristic graph traversal algorithm that strikes a balance between breadth-first search and greedy best-first search. It has been used extensively in NLP tasks \cite{wiseman_sequence--sequence_2016} \cite{huang_when_2017}. At each step, instead of expanding all possible children (as in breadth-first search), it maintains only a fixed number $w$ of the most promising candidates—known as the \textit{beam width}.

The algorithm proceeds in levels: at each level, every current candidate is expanded by generating its successors. All successors are scored using a heuristic function (e.g., cost, distance, or error), and only the top $w$ are retained for the next level. This pruning mechanism reduces memory and computation compared to exhaustive methods, while preserving some diversity in the search.

Beam Search is widely used in tasks such as sequence decoding in natural language processing (e.g., machine translation), planning, combinatorial optimization, and search in large discrete spaces.

\subsection*{Pseudocode for Beam Search}
\begin{algorithm}[htbp]
  \caption{Generic Beam Search}
  \label{alg:beam-search-generic}
  \KwIn{Initial state $s_0$; beam width $w$; successor function $\texttt{Succ}(s)$; scoring function $\texttt{Score}(s)$; termination condition $\texttt{Done}(\mathcal{W})$}
  \KwOut{Best state found}

  $\mathcal{W} \leftarrow [\,s_0\,]$\tcp*{Beam: current candidates}
  $\texttt{best} \leftarrow s_0$\;

  \While{$\mathcal{W}\neq []$ \textbf{and} not $\texttt{Done}(\mathcal{W})$}{
    $\mathcal{C} \leftarrow []$\tcp*{All successors}
    \ForEach{$s \in \mathcal{W}$}{
      \ForEach{$s' \in \texttt{Succ}(s)$}{
        append $s'$ to $\mathcal{C}$\;
      }
    }
    \If{$\mathcal{C} = []$}{
      \textbf{break}\tcp*{No successors; dead end}
    }
    sort $\mathcal{C}$ in descending order by $\texttt{Score}(\cdot)$\;
    $\mathcal{W} \leftarrow$ first $\min(w,|\mathcal{C}|)$ elements of $\mathcal{C}$\tcp*{Keep top candidates}
    \If{$\texttt{Score}(\mathcal{W}[1]) > \texttt{Score}(\texttt{best})$}{
      $\texttt{best} \leftarrow \mathcal{W}[1]$\;
    }
  }
  \Return{$\texttt{best}$}\;
\end{algorithm}

\noindent
In this formulation:
\begin{itemize}
    \item \texttt{Succ(node)} generates all immediate successors of \texttt{node}.
    \item \texttt{Score(node)} evaluates how close a node is to the goal or optimality.
    \item \texttt{Done} returns \texttt{true} if the final layer of the search DAG has been reached.
    \item $\mathcal{W}$ contains at most $w$ candidates.
\end{itemize}

Beam Search is not guaranteed to find the optimal solution, but often finds high-quality approximations quickly. Its performance depends heavily on the choice of beam width and scoring heuristic \cite{zhou_beam_2020}.

\subsection{Generalizing Beam Search for Subset Sum}

To generalize Beam Search for the Subset Sum Problem, we treat the problem as a sequence of decisions: at each index in the input list, we can either include the element in the subset or exclude it. Each partial solution is a path in a binary tree, where each node represents a partial subset and maintains a running sum.

The algorithm proceeds iteratively, maintaining a beam of the $w$ most promising partial solutions at each step. The score for each partial solution is determined by its absolute deviation from the target. At each step, the beam is expanded by including or excluding the next element in the list, and the top $w$ branches (by smallest error) are retained.

\begin{algorithm}[htbp]
\caption{Beam Search for Closest Subset Sum}
\label{alg:beam-subset-sum}
\KwIn{Set of integers $S = \{s_1, \dots, s_n\}$; target $T$; beam width $w$}
\KwOut{Subset sum closest to $T$}

$\mathcal{W} \leftarrow \{0\}$  

\For{$i \leftarrow 1$ \KwTo $n$}{
    $\mathcal{W} \leftarrow \mathcal{W} \,\cup\, \{x + s_i \mid x \in \mathcal{W}\}$ \\
    Truncate $\mathcal{W}$ to its $w$ closest elements
}

\Return{$\arg\min_{x \in \mathcal{W}} |x - T|$}
\end{algorithm}

This approach requires $O(w)$ space and $O(n \cdot w \log w)$ time per run, because truncating $\mathcal{W}$ to its $w$ closest elements requires sorting. In practice, Beam Search provides strong empirical performance, often outperforming more complex metaheuristics in both accuracy and runtime on randomly generated subset sum instances.

\section{Optimizations for Beam Search}

\subsection{Reconstructing the Optimal Subset}

Although not the main focus of this paper, reconstructing the actual
subset corresponding to the best sum is important in practice. A naive
implementation keeps full parent pointers for all $n$ layers, using
 $O(nw)$ memory, or recomputes layers one by one, taking $O(n^2w)$ time.

\begin{lemma}[Invertibility of Beam Iteration]\label{lem:beam-inv}
Given the beam state \(W_i \in \mathbb{R}^w\) and the step size \(s_i \in \mathbb{R}\), recovering the previous beam \(W_{i-1}\) requires at least \(\Omega(w)\) bits of information.
\end{lemma}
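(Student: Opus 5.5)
The plan is a counting (pigeonhole) argument. I will show that the map $\Phi_{s_i}\colon W_{i-1}\mapsto W_i$ from Algorithm~\ref{alg:beam-subset-sum} (union with the shift by $s_i$, then truncation to the $w$ elements closest to $T$) is far from injective. Concretely, I will exhibit a single output beam $W_i$ with at least $2^{cw}$ distinct preimages $W_{i-1}$, for some constant $c>0$. Once that is done, any procedure that recovers $W_{i-1}$ from $(W_i,s_i)$ must read an auxiliary advice string that distinguishes all these preimages. Such a string needs at least $\log_2 2^{cw}=cw=\Omega(w)$ bits, which is the claim. Since the lemma is stated over $\mathbb{R}$, I am free to pick $T$, $s_i$ and the beam values, so an explicit construction suffices.

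The construction uses the fact that each retained element of $W_i$ is either an old value $x$ or a shifted value $x+s_i$. The truncation discards information about which of the two it came from, and about which of the $2w$ candidates were dropped. Take $s_i>0$ and build a target window containing $w$ retained points that come in $m=\lfloor w/2\rfloor$ disjoint ``chains'' $\{a_j,\;a_j+s_i\}$, with the chains far apart relative to $s_i$. For each chain I want two preimage options that produce the same pair $\{a_j,a_j+s_i\}$ in the union. One option is that $W_{i-1}$ contains both $a_j$ and $a_j+s_i$, generating $\{a_j,a_j+s_i,a_j+2s_i\}$. The other is that $W_{i-1}$ contains $a_j$ and $a_j-s_i$, generating $\{a_j-s_i,a_j,a_j+s_i\}$. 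In either case the two middle points survive and the outer point ($a_j+2s_i$ or $a_j-s_i$) is discarded. I will arrange $T$ and the spacing so that these outer points, together with the unused mass of $W_{i-1}$, are exactly the $w$ points farthest from $T$. The choices are independent across chains, which gives $2^{m}=2^{\Omega(w)}$ preimages, all of size $w$, each mapping to the same $W_i$.

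The main obstacle is the step that keeps the truncation consistent: I must check that in every one of the $2^m$ combinations, precisely the intended $w$ points are the closest to $T$, with no tie-breaking or ordering issues. My first attempt is to place $T$ at the center of the window and put the chains symmetrically in nested shells $|a_j-T|$ increasing in $j$, alternating sides. The discarded outer points would then lie at distance beyond some radius $R$, the kept points inside $R$, with $s_i\ll$ the gap between shells. If the symmetric placement makes the ``$+2s_i$'' versus ``$-s_i$'' outer points cross the radius inconsistently, I would instead pad $W_{i-1}$ with $\lceil w/2\rceil$ filler values placed very far from $T$. The fillers and their shifts are always discarded, so the count of discarded points stays exactly $w$ regardless of the per-chain choices. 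With the truncation verified, the pigeonhole count finishes the proof.
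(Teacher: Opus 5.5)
Your counting framework matches the paper's: exhibit one $W_i$ with exponentially many preimages, then take $\log_2$ of the count. The construction, however, breaks at exactly the step you flag as the main obstacle, and the fallback does not repair it.

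For chain $j$ to yield the retained pair $\{a_j,a_j+s_i\}$ under both options, the discarded point must be no closer to $T$ than the retained ones. Under the second option this needs $|a_j-s_i-T|\ge|a_j+s_i-T|$, i.e.\ $a_j\le T$. Under the first option it needs $|a_j+2s_i-T|\ge|a_j-T|$, i.e.\ $a_j\ge T-s_i$. So every chain must straddle $T$, with all $a_j\in[T-s_i,T]$. The chains can therefore be neither far apart relative to $s_i$ nor placed on alternating sides in nested shells.

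Concretely, in your shell placement all three candidates of chain $j$ lie at distance $|a_j-T|+O(s_i)$. Truncation therefore keeps whole inner chains, including their choice-dependent outer points $a_j+2s_i$ or $a_j-s_i$, and drops the outer chains. The $2^m$ preimages thus do not share the single image you describe. (Pigeonhole over your family would still produce collisions, but only because the outermost chains vanish from $W_i$ entirely; that is a different argument from yours.)

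The filler padding does not help. It changes how many candidates are discarded, not which near candidates survive, and it breaks the count. With $f$ always-discarded fillers, the $w=2m+f$ retained points must come from the $3m$ chain points, which forces $f=0$ or the retention of some choice-dependent outer point. Your specific choice also overshoots, since $2\lfloor w/2\rfloor+\lceil w/2\rceil>w$.

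The missing idea is the reverse scale separation, which is exactly the paper's hypothesis $4\max_j|(W_i)_j|\le s_i$ with $T=0$. Put every retained point within $s_i/4$ of $T$, so that every candidate of the form $x\pm s_i$ is at distance at least $3s_i/4$. In that regime chains are unnecessary. Each retained $x$ can independently come from parent $x$ (whose shift $x+s_i$ is discarded) or from parent $x-s_i$ (which is itself discarded). This gives $2^w$ distinct preimages of the same $W_i$, hence $w$ bits.

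Your chains can be rescued in this regime: cluster all $a_j$ in a window much narrower than $s_i$ and put $T$ midway between the cluster and its shift. But that yields only $2^{\lfloor w/2\rfloor}$ preimages and depends on the union being a set. With multiplicities, the first option duplicates $a_j+s_i$ and the second duplicates $a_j$, so the retained multisets differ. The paper's one-parent-per-point construction creates no duplicates and works under either reading.
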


\begin{proof}[Proof]
Given \(W_i\) and \(s_i\), consider all possible predecessor candidates
\[
W_{i-1} = W_i - s_i z_i, \quad z \in \{0,1\}^w.
\]
A candidate \(W_{i-1}\) is valid if
\[
(W_{i-1} + s_i) \cup W_{i-1} = W_i.
\]
Without loss of generality, assume the target value \(T = 0\).  
Then, for each coordinate \(1 \le j \le w\):
\[
\text{if } z_j = 1, \quad |(W_{i-1})_j + s_i| \le |(W_{i-1})_j|, 
\quad \text{else } |(W_{i-1})_j| \le |(W_{i-1})_j + s_i|.
\]
In words, applying the inverse operation \(-z_j s_i\) should preserve the property that the chosen element remains the one closest to the target.  

If \(4\max_j |(W_i)_j| \le s_i\), then all \(2^w\) candidates \(W_{i-1} = W_i - s_i z\) satisfy the above condition, implying that \(W_{i-1}\) cannot be uniquely determined from \(W_i\). Hence, reconstructing \(W_{i-1}\) requires at least \(\Omega(w)\) bits to specify which of the \(2^w\) possible configurations is valid. 
\end{proof}

\paragraph{Checkpointing idea.}  

This motivates the following checkpointing approach: Every $m$ steps we store the current beam. During
reconstruction, we backtrack from the final state to the nearest
checkpoint, then locally recompute at most $m$ layers. This costs
$O(w\cdot n/m)$ memory for the checkpoints and $O(wm)$ memory for the
temporary recomputation. Choosing $m=\sqrt{n}$ balances the two terms,
giving $O(w\sqrt{n})$ memory and $O(nw)$ time overall.

\begin{algorithm}[htbp]
\caption{Beam Search Forward with Checkpoints}
\label{alg:beam-forward-checkpoints}
\KwIn{Integers $S=\{s_1,\dots,s_n\}$; target $T$; beam width $w$}
\KwOut{Best sum $x^*$ and beam checkpoints}
$m \leftarrow \lfloor \sqrt{n}\rfloor$; beam $\leftarrow \{0\}$ \\
checkpoints $\leftarrow \{0\mapsto$ beam$\}$ \\
\For{$i=1$ \KwTo $n$}{
  Expand beam with $s_i$ and trim to width $w$ \\
  \If{$i\bmod m=0$ or $i=n$}{checkpoints[$i$] $\leftarrow$ copy of beam}
}
\Return{best $x^*\in$ beam, checkpoints}
\end{algorithm}

\begin{algorithm}[htbp]
\caption{Reconstruct Subset Using Checkpoints}
\label{alg:beam-reconstruct-checkpoints}
\KwIn{$S,T,w$, best sum $x^*$, checkpoints}
\KwOut{Indices of elements forming subset}
$m \leftarrow \lfloor \sqrt{n}\rfloor$; curr\_sum $\leftarrow x^*$; sol $\leftarrow \emptyset$ \\
\For{$j=n$ down to $1$ in steps of $m$}{
  start $\leftarrow \max(0,j-m)$ \\
  beam $\leftarrow$ checkpoints[start] \\
  Recompute layers $S_{start+1},\dots,S_j$ with parent pointers \\
  Backtrack from curr\_sum to start, appending chosen indices to sol \\
  Update curr\_sum to parent sum at start
}
Reverse sol and return
\end{algorithm}

\section{MITM Beam Search}

A natural extension of the beam heuristic is to split the input into two
halves and run a meet-in-the-middle (MITM) beam search. The left
half generates a small set of ``anchors,'' while the right half searches
against multiple residual targets. This yields a quadratic improvement
in error decay.

The following analysis uses the Uniform distribution $U(0, B)$, but the algorithm is empirically robust to many input distributions.

\subsection{Input Transformation}
SSP can be modeled as a game of take/not take, corresponding to 1/0. We can use this observation to formulate the following transformation: pick a random subset of elements $\pi$ using a Bernoulli distribution. Then $T := T-\sum s_{\pi_i}$ and $\pi_i:=-\pi_i$. For those indices we instead make the decision of not take/take. In this way, WLOG we turn any input distribution into one that is symmetric around 0. 

Note that this removes the issue of sums drifting to the right over time. This means that we are now drawing elements from a symmetric distribution.

Additionally, assume $T$ is positive; this is true WLOG by complement.

\subsection{Assumptions}

In the following analysis, we make the following assumptions:

\begin{enumerate}
    \item All elements in the set $S$ are drawn independently from the uniform distribution $U(-B, B)$.
    \item The bound $B$ is much larger than the beam width $w$, i.e. $B \gg w$.
    \item The elements of $S$ are mutually independent.
    \item The beam width satisfies $w > 0$.
    \item \textbf{Microscopic Offset Decorrelation (Mean-Field Heuristic):} While beam elements and anchors inherently share ancestral paths from the expansion tree, we assume their fine-scale offsets within their respective Voronoi cells act as pairwise independent continuous variables. Specifically, the dense addition of fresh i.i.d. variables at each step sufficiently mixes the least significant digits, preventing the cross-differences $Z_i - Z_j$ from collapsing into degenerate periodic lattices. This relaxation—modeling structurally dependent tree-paths as independent, well-distributed random variables to prevent degenerate state-space collapse—is a standard heuristic in average-case Subset Sum analysis, canonically utilized in the analysis of list-merging and representation techniques~\cite{howgravegraham_joux_2010, becker_coron_joux_2011, wagner_generalized_2002}, and is analogous to the mean-field approximations used to analyze the Number Partitioning phase transition~\cite{mertens_phase_1998, borgs_phase_2001}.
\end{enumerate}

\subsection{Algorithm}

\begin{enumerate}[leftmargin=*]
\item \emph{Phase A (anchors).} Expand the left half,
truncating after each step to one element per bucket of size $B/w$, choosing randomly. The resulting anchor set $A=\{a_1,\dots,a_w\}$ has maximum gap
$O(B/w)$.
\item \emph{Phase B (multi-target).} For each anchor $a_j$ define
residual $r_j=T-a_j$. Before any element is within $B/w$ of an anchor, run a width-$w$ beam on the right half, scoring
sums by distance to the residual set $\mathcal{R}=\{r_1,\dots,r_w\}$. After this, only keep one beam element per anchor region $[a_j-B/2w, a_j+B/2w]$, breaking ties within a region by distance to the representative anchor.
\end{enumerate}

\begin{algorithm*}[htbp]
\caption{MITM Beam Search (bucketed anchors + residual-guided right-half beam)}
\label{alg:mitm-barebones}
\KwIn{Integers $S=\{s_1,\dots,s_n\}$, target $T$, beam width $w$, bound $B$}
\KwOut{Best right-half sum $x^\star$ minimizing distance to residuals}
$n_L \leftarrow \lceil C\log w\rceil$\;
$S_L \leftarrow (s_1,\dots,s_{n_L})$;\quad $S_R \leftarrow (s_{n_L+1},\dots,s_n)$\;

\BlankLine
\textbf{Phase A (anchors from left half via one-per-bucket).}\;
$\Delta \leftarrow B/w$ \tcp*[r]{bucket width on $[-B/2,B/2]$}
$\mathcal{W} \leftarrow \{0\}$\;
\For{$i=1$ \KwTo $n_L$}{
  $\text{expanded} \leftarrow \mathcal{W}\ \cup\ \{x+s_i : x\in \mathcal{W}\}$\;
  $E \leftarrow \{x\in \text{expanded} : -B/2 \le x \le B/2\}$\;
  \tcp{Partition $[-B/2,B/2]$ into $w$ half-open buckets $\mathcal{I}_j$}
  $\mathcal{W} \leftarrow \emptyset$\;
  \For{$j=1$ \KwTo $w$}{
    $\mathcal{I}_j \leftarrow \big[-B/2+(j-1)\Delta,\,-B/2+j\Delta\big)$\;
    $C_j \leftarrow \{x\in E : x\in \mathcal{I}_j\}$\;
    \If{$C_j \neq \emptyset$}{
      pick $x$ uniformly at random from $C_j$\;
      $\mathcal{W} \leftarrow \mathcal{W} \cup \{x\}$\;
    }
  }
}
$A \leftarrow \mathcal{W}$\;
$\mathcal{R} \leftarrow \{T-a : a\in A\}$ \tcp*[r]{residual targets for right half}
Delete even-indexed anchors

\BlankLine
\textbf{Phase B (right half: residual-guided, then one-per-anchor).}\;
$\mathcal{W} \leftarrow \{0\}$\;
\For{$i=n_L+1$ \KwTo $n$}{
  $\text{expanded} \leftarrow \mathcal{W}\ \cup\ \{x+s_i : x\in \mathcal{W}\}$\;

  \tcp{Distance to residual set}
  \ForEach{$x \in \text{expanded}$}{
    $d(x) \leftarrow \min_{r\in \mathcal{R}} |x-r|$\;
  }

  \tcp{Detect whether we have entered the anchor neighborhood regime}
  $H \leftarrow \{x\in \text{expanded} : d(x) \le \Delta\}$\;

  \eIf{$H=\emptyset$}{
    \tcp{Pre-hit regime: keep the $w$ closest sums to residuals}
    $\mathcal{W} \leftarrow$ the $w$ elements of $\text{expanded}$ with smallest $d(x)$\;
  }{
    \tcp{Post-hit regime: one-per-anchor over Voronoi cells of $\mathcal{R}$}
    \ForEach{$x \in \text{expanded}$}{
      $r(x) \leftarrow \arg\min_{r\in \mathcal{R}} |x-r|$ \tcp*[r]{closest-anchor assignment; break ties arbitrarily}
    }
    $\mathcal{W} \leftarrow \emptyset$\;
    \ForEach{$r \in \mathcal{R}$}{
      $C_r \leftarrow \{x\in \text{expanded} : r(x)=r\}$\;
      \If{$C_r \neq \emptyset$}{
        \tcp{Keep the best representative inside anchor $r$'s Voronoi region}
        pick $x_r \in \arg\min_{x\in C_r} |x-r|$ \tcp*[r]{ties arbitrary (or random)}
        $\mathcal{W} \leftarrow \mathcal{W} \cup \{x_r\}$\;
      }
    }
  }

}
$x^\star \leftarrow \arg\min_{x\in \mathcal{W}} d(x)$\;
\Return $x^\star$\;
\end{algorithm*}

\subsection*{Phase A}
Let the set of anchors at step $j$ be
\[
A_j = \{a_1^{(j)}, a_2^{(j)}, \dots, a_{|A_j|}^{(j)}\}.
\]
Let $\Delta := B/w$ and let $C_i$ denote the center of bucket $i$, so that bucket $i$ corresponds to the interval
\[
\mathcal{I}_i \;=\; \Big[C_i - \frac{\Delta}{2},\; C_i + \frac{\Delta}{2}\Big].
\]

At step $j$, bucket $i$ is filled if there exists some $a \in A_j$ such that
\[
s_j \in (\mathcal{I}_i - a)\; \cap\; [-B,B].
\]
Define
\[
I_{i, j} = \bigcup_{a \in A_j} \big( (\mathcal{I}_i - a)\cap [-B,B]\big).
\]
Then the probability that bucket $i$ is filled at step $j$ is
\[
\Pr(\text{bucket $i$ filled at step $j$}) \;=\; \frac{\mathrm{Leb}(I_{i, j})}{2B}.
\]

Considering only even-indexed buckets ensures that none of the intervals $\mathcal{I}_i-a, a\in A_j$ intersect, as each $a \in A$ will have one bucket between them, and hence have a spacing of at least $B/w$. Only considering even-indexed buckets can only underestimate the actual lebesgue measure. Hence, by multiplying the total size of buckets with the fraction of buckets filled, we obtain
\[
\mathrm{Leb}(I_{I, j})\;\;\geq\;\;\Omega\!\left(B \cdot \frac{|A_j|}{w}\right).
\]

Thus, conditioning on bucket $i$ not yet being filled,
\[
\Pr(\text{bucket $i$ filled at step $j$ } \mid \text{ not filled yet}) 
\;\;\geq\;\; \Omega\!\left(\frac{|A_j|}{w}\right).
\]

By linearity of expectation, the expected number of new buckets filled at step $j$ is
\[
\mathbb{E}[\Delta |A_j|] 
\;=\; (w - |A_j|)\cdot \Omega\!\left(\frac{|A_j|}{w}\right).
\]

Therefore, the recurrence for the expected number of anchors is
\[
\mathbb{E}[\,|A_{j+1}|\,]
\;\geq\; |A_j| \;+\; c\,(w - |A_j|)\frac{|A_j|}{w},
\]
for some constant $c>0$. Note that eliminating even anchors reduces $c$ by a factor of $1/2$, and pessimistically considering boundary effects on $U(-B, B)$ yields another factor of $1/2$, which gives $c\geq 1/4$.

This recurrence grows geometrically, so after $O(\log w)$ steps a constant fraction of buckets will be filled.

\subsubsection{Phase A: explicit bucket coverage bounds in $O(\log w)$ steps}
\label{sec:phaseA-logw-coverage}

We now formalize the bucket-filling dynamics of Phase~A and give
\emph{explicit constants} for (i) reaching half coverage in expectation,
(ii) reaching essentially full coverage in expectation, and (iii)
filling all buckets with high probability.

Recall the Phase~A argument: after restricting to even-indexed buckets to
enforce disjointness, the total reachable measure contributed by the
current anchor set $A_j$ is $\Omega(B\cdot |A_j|/w)$. Pessimistically
accounting for boundary truncation to $[-B,B]$ loses another constant
factor. Concretely, we obtain the following bound: conditional on a
bucket $i$ being unfilled at step $j$,
\begin{equation}
\Pr(\text{bucket $i$ is filled at step $j$} \mid \text{$i$ unfilled})
\;\ge\; \frac{1}{4}\cdot \frac{|A_j|}{w}.
\label{eq:phaseA-fill-prob}
\end{equation}

By linearity of expectation, the expected number of newly filled buckets
satisfies
\[
\mathbb{E}\big[\Delta |A_j|\big]
\;\ge\; (w-|A_j|)\cdot \frac{1}{4}\frac{|A_j|}{w}.
\]
Equivalently, for all $j$,
\begin{equation}
\mathbb{E}[\,|A_{j+1}|\,] \;\ge\; |A_j|
\;+\; \frac{1}{4}\,(w-|A_j|)\frac{|A_j|}{w}.
\label{eq:phaseA-anchor-recurrence}
\end{equation}

Let
\[
x_j := \frac{\mathbb{E}[|A_j|]}{w}\in[0,1]
\]
denote the expected fraction of filled buckets. Dividing
\eqref{eq:phaseA-anchor-recurrence} by $w$ yields the discrete logistic
drift inequality
\begin{equation}
x_{j+1}-x_j \;\ge\; \frac{1}{4}\,x_j(1-x_j).
\label{eq:phaseA-logistic}
\end{equation}

\paragraph{Continuous comparison.}
The inequality \eqref{eq:phaseA-logistic} is naturally compared to the
logistic ODE
\begin{equation}
\frac{dx}{dt} = \frac{1}{4}\,x(1-x), \qquad x(0)=x_0,
\label{eq:phaseA-ode}
\end{equation}
whose explicit solution is
\begin{equation}
x(t) = \frac{1}{1+\Big(\frac{1}{x_0}-1\Big)e^{-t/4}}.
\label{eq:phaseA-ode-solution}
\end{equation}
With $x_0=1/w$ (since $|A_0|=1$), this becomes
\begin{equation}
x(t)=\frac{1}{1 + (w-1)e^{-t/4}}.
\label{eq:phaseA-ode-solution-w}
\end{equation}

\begin{lemma}[Half coverage in expectation]
\label{lem:phaseA-half-explicit}
There exists an explicit iteration index
\[
j_{\mathrm{half}} \;\le\; 4\ln(w-1) \;\le\; 4\ln w
\]
such that
\[
\mathbb{E}[|A_{j_{\mathrm{half}}}|] \;\ge\; \frac{w}{2}.
\]
Equivalently, Phase~A reaches half coverage in at most
\[
j_{\mathrm{half}} \;\le\; 4\ln w \;=\; (4\ln 2)\,\log_2 w \;\le\; 2.773\,\log_2 w
\]
steps.
\end{lemma}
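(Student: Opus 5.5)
The plan is to compare the discrete logistic drift inequality \eqref{eq:phaseA-logistic} with the ODE \eqref{eq:phaseA-ode}. The ODE solution \eqref{eq:phaseA-ode-solution-w} reaches $x(t)=1/2$ exactly when $(w-1)e^{-t/4}=1$, that is, at $t^\ast = 4\ln(w-1)$. So the lemma follows once we show $x_j \ge x(j)$ for every integer $j \le t^\ast$. Then $j_{\mathrm{half}} := \lceil t^\ast\rceil$, or the largest integer below it after a unit-step adjustment, gives $\mathbb{E}[|A_{j_{\mathrm{half}}}|] = w\,x_{j_{\mathrm{half}}} \ge w/2$. The final conversion $4\ln w = (4\ln 2)\log_2 w \approx 2.7726\log_2 w$ is arithmetic.

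\textbf{Step 1: pass to logit coordinates.} Set $y=\ln\frac{x}{1-x}$, under which the ODE becomes $\dot y = 1/4$. For the discrete map $f(x)=x+\tfrac14 x(1-x)$ we have $f(x)=x\bigl(1+\tfrac14(1-x)\bigr)$ and $1-f(x)=(1-x)\bigl(1-\tfrac14 x\bigr)$. Hence the one-step logit increment is
\[
g(x)=\ln\!\Bigl(1+\tfrac14(1-x)\Bigr)-\ln\!\Bigl(1-\tfrac14 x\Bigr).
\]

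\textbf{Step 2: monotonicity.} The map $f$ is increasing on $[0,1]$, since $f'(x)=1+\tfrac14(1-2x)>0$. Therefore applying \eqref{eq:phaseA-logistic} repeatedly gives $x_j \ge f^{(j)}(x_0)$. We then need $g(x)\ge 1/4$ on $[0,1/2]$, so that the logit advances at least as fast as the ODE.

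\textbf{Main obstacle.} The condition $g(x)\ge 1/4$ fails with the constant $1/4$ exactly as written. The function $g$ is increasing in $x$, and $g(0)=\ln(5/4)\approx 0.223<1/4$. The explicit Euler step of a convex-phase logistic lags the true flow, so the literal inequality \eqref{eq:phaseA-fill-prob} only gives the weaker constant $1/\ln(5/4)$.

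To obtain the stated $4\ln(w-1)$, I would recover slack in the fill-probability bound \eqref{eq:phaseA-fill-prob}. What is needed is a drift constant $q$ with $\ln(1+q)\ge 1/4$, i.e. $q\ge e^{1/4}-1\approx 0.284$. The factor-$\tfrac12$ boundary loss in deriving $c\ge 1/4$ is pessimistic: it applies only to buckets near $\pm B/2$ relative to the support $[-B,B]$. For interior buckets, the translates $\mathcal{I}_i-a$ with $a\in[-B/2,B/2]$ lie entirely within $[-B,B]$. Averaging over buckets should therefore give an effective $q$ strictly above $1/4$. I would first try to establish $q\ge 0.29$ for all but $O(1)$ buckets, and absorb the boundary buckets into the additive slack of the index bound.

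With $g\ge 1/4$ in hand, induction gives $y_j \ge y_0 + j/4 = -\ln(w-1)+j/4$. So $y_j\ge 0$, equivalently $x_j\ge 1/2$, as soon as $j\ge 4\ln(w-1)$, which proves the claim.
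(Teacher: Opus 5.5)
Your diagnosis is correct, and it is exactly the step the paper's own proof omits. The paper just sets $x(t)=1/2$ in \eqref{eq:phaseA-ode-solution-w}, reads off $t=4\ln(w-1)$, and never shows $x_j\ge x(j)$. As you observe, that comparison fails for the Euler map $f(x)=x+\tfrac14x(1-x)$. The logit increment $g$ equals $\ln(5/4)<\tfrac14$ at $x=0$ and stays below $\tfrac14$ until $x\approx 0.48$. So the sequence with equality in \eqref{eq:phaseA-logistic} needs $\ln(w-1)/\ln(5/4)-O(1)\approx 4.48\ln w$ steps, which exceeds $4\ln(w-1)$ for large $w$. Hence the lemma as stated does not follow from \eqref{eq:phaseA-logistic}. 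What your Steps 1--2 do prove (since $f$ is increasing and $g\ge g(0)$ on $[0,1]$) is $j_{\mathrm{half}}=\lceil \ln(w-1)/\ln(5/4)\rceil\le 4.49\ln w+1\le 3.11\log_2 w+1$. Even granting $g\ge\tfrac14$, your induction yields the index $\lceil 4\ln(w-1)\rceil$, not one that is $\le 4\ln(w-1)$.

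The gap is the repair: $q\ge 0.29$ is never established, and the slack you point to does not exist.
\begin{itemize}
\item Every bucket $\mathcal I_i$ and every anchor $a$ lie in $[-B/2,B/2]$. So $\mathcal I_i-a\subseteq[-B,B]$ for \emph{every} bucket, boundary ones included; this is the fact used in Lemma~\ref{lem:bucket-uniform}. There is no truncation loss to remove.
\item The constant $\tfrac14$ is the parity factor $\tfrac12$ times the normalization $\Delta/(2B)=1/(2w)$ per translate. The paper's ``boundary'' label for the second factor is misleading.
\item $\tfrac14$ is tight for any fill bound stated in terms of $|A_j|$ alone. Let the anchors of buckets $2k-1$ and $2k$ sit just on either side of their common edge. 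Their translates nearly coincide, so $\mathrm{Leb}(I_{i,j})\approx\tfrac{|A_j|}{2}\Delta$, and every unfilled bucket is filled with probability $\approx |A_j|/(4w)$. No ``all but $O(1)$ buckets'' argument can therefore reach $q>e^{1/4}-1$. The statement also has no additive slack to absorb anything into.
\item Beating $\tfrac14$ requires averaging over within-bucket offsets. Independent uniform offsets in adjacent buckets would give expected overlap $\Delta/6$ and effective $q=5/12$. But Lemma~\ref{lem:bucket-uniform} is only a marginal statement, and two anchors created by the same $s_t$ differ by exactly the difference of their parents, so the needed joint control is missing.
\item Any re-derivation with a new $q$ must also justify passing from the conditional bound to a recursion for $x_j=\mathbb{E}|A_j|/w$. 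Since $a\mapsto a(w-a)$ is concave, Jensen gives $\mathbb{E}[|A_j|(w-|A_j|)]\le \mathbb{E}|A_j|\,(w-\mathbb{E}|A_j|)$, which is the wrong direction. The paper's ``dividing by $w$'' step glosses over this as well.
\end{itemize}
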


\begin{proof}
Setting $x(t)=1/2$ in \eqref{eq:phaseA-ode-solution-w} yields
$(w-1)e^{-t/4}=1$, i.e.\ $t=4\ln(w-1)$. The stated bounds follow from
$\ln(w-1)\le \ln w$ and $\ln w=(\ln 2)\log_2 w$.
\end{proof}

\paragraph{Geometric contraction after half coverage.}
For $x_j\ge 1/2$, define the expected unfilled fraction
$u_j:=1-x_j$. From \eqref{eq:phaseA-logistic},
\[
x_{j+1}-x_j \;\ge\; \frac14\cdot \frac12 u_j \;=\; \frac18 u_j,
\]
which implies
\begin{equation}
u_{j+1} \;\le\; \frac78\,u_j.
\label{eq:phaseA-geometric}
\end{equation}
Thus, once half the buckets are filled, the remaining unfilled mass
contracts geometrically at rate $7/8$ per step.

\begin{lemma}[Essentially full coverage in expectation]
\label{lem:phaseA-all-explicit}
Let $j_{\mathrm{half}}$ be as in Lemma~\ref{lem:phaseA-half-explicit} and
define
\[
t_{\mathrm{exp}} \;:=\; 
\left\lceil \frac{\ln(w/2)}{\ln(8/7)} \right\rceil,
\qquad
j_{\mathrm{all,exp}} \;:=\; j_{\mathrm{half}} + t_{\mathrm{exp}}.
\]
Then the expected number of unfilled buckets is at most one:
\[
\mathbb{E}[\,w-|A_{j_{\mathrm{all,exp}}}|\,] \;\le\; 1.
\]
Moreover, using $\ln(8/7)\approx 0.13353$ (so $1/\ln(8/7)\approx 7.489$),
we obtain the explicit bound
\[
j_{\mathrm{all,exp}}
\;\le\;
4\ln w \;+\; \frac{\ln(w/2)}{\ln(8/7)} \;+\; 1
\;\le\;
\Bigl(4+\frac{1}{\ln(8/7)}\Bigr)\ln w \;+\; 1
\]
\[
\;\le\; 11.489\,\ln w \;+\; 1
\;=\; (11.489\,\ln 2)\,\log_2 w \;+\; 1
\;\le\; 7.96\,\log_2 w \;+\; 1.
\]
\end{lemma}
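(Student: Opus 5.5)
The argument has two stages. First, Lemma~\ref{lem:phaseA-half-explicit} gives the half-coverage point. Second, from there I iterate the geometric contraction \eqref{eq:phaseA-geometric}, then clean up the constants.

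\emph{Step 1 (starting point).} Take $j_{\mathrm{half}}$ from Lemma~\ref{lem:phaseA-half-explicit}, so that $x_{j_{\mathrm{half}}}\ge 1/2$, or equivalently $u_{j_{\mathrm{half}}}=1-x_{j_{\mathrm{half}}}\le 1/2$. This step is the one real obstacle. Lemma~\ref{lem:phaseA-half-explicit} was obtained from the continuous logistic ODE \eqref{eq:phaseA-ode}, not from the discrete inequality \eqref{eq:phaseA-logistic}. So I first have to justify that the discrete sequence $x_j$ dominates the ODE solution $x(j)$. I would argue this by comparison. The map $\phi(x)=x+\tfrac14 x(1-x)$ is increasing on $[0,1]$, because $\phi'(x)=\tfrac54-\tfrac12 x>0$. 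The ODE's one-step flow map $\psi$ satisfies $\psi(x)\le\phi(x)$, since $x(t)$ is concave-free in the relevant sense: $\int_0^1 \tfrac14 x(s)(1-x(s))\,ds\le \tfrac14 x(0)(1-x(0))$ whenever $x\ge1/2$. If that pointwise comparison is delicate for small $x$, I would instead redo the half-coverage count directly in discrete form. While $x_j\le 1/2$ we have $x_{j+1}\ge x_j(1+\tfrac18)$, which costs roughly $\ln(w/2)/\ln(9/8)$ steps, a weaker constant. I would take whichever route matches the paper's $4\ln w$, and by induction obtain $x_j\ge x(j)$.

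\emph{Step 2 (contraction).} For every $j\ge j_{\mathrm{half}}$, the sequence $x_j$ stays at least $1/2$. This holds because \eqref{eq:phaseA-logistic} makes $x_j$ nondecreasing. Hence \eqref{eq:phaseA-geometric} applies at each subsequent step, and after $t$ more steps
\[
u_{j_{\mathrm{half}}+t}\le \left(\tfrac78\right)^t u_{j_{\mathrm{half}}}\le \tfrac12\left(\tfrac78\right)^t .
\]
The expected number of unfilled buckets is $w\,u_j$. We need $\tfrac{w}{2}(7/8)^t\le 1$, which is equivalent to $t\ge \ln(w/2)/\ln(8/7)$, and the choice $t=t_{\mathrm{exp}}$ satisfies this. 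This gives $\mathbb{E}[w-|A_{j_{\mathrm{all,exp}}}|]\le 1$.

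\emph{Step 3 (constants).} Combine $j_{\mathrm{half}}\le 4\ln w$ with $\lceil y\rceil\le y+1$ and $\ln(w/2)\le\ln w$. This gives $j_{\mathrm{all,exp}}\le (4+1/\ln(8/7))\ln w+1$. Substituting $1/\ln(8/7)\approx 7.489$ and multiplying by $\ln 2\approx0.6931$ yields $11.489\cdot0.6931\approx 7.963$, which is the stated $7.96\log_2 w+1$ up to the rounding the paper adopts.
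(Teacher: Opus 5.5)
\paragraph{Review.} Your Steps 2 and 3 are the paper's proof. The paper takes $u_{j_{\mathrm{half}}}\le 1/2$ directly from Lemma~\ref{lem:phaseA-half-explicit}, iterates \eqref{eq:phaseA-geometric} to get $w\,u_{j_{\mathrm{half}}+t}\le \tfrac{w}{2}(7/8)^t$, sets $t=t_{\mathrm{exp}}$, and does the same constant bookkeeping. Two of your details are correct additions. First, $x_j$ is nondecreasing, so \eqref{eq:phaseA-geometric} applies at every later step; the paper leaves this implicit. Second, $11.489\ln 2\approx 7.9635$ slightly exceeds $7.96$.

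The gap is in Step 1, which you make load-bearing. Your comparison $\psi(x)\le\phi(x)$ holds only once $x\ge 1/2$, where $f(y)=\tfrac14 y(1-y)$ decreases along the trajectory. For $x<1/2$ it reverses: near $0$ the ODE's time-one map multiplies $x$ by $e^{1/4}\approx 1.284$, whereas $\phi(x)\le \tfrac54 x$. So the induction $x_j\ge x(j)$ fails on exactly the range $[1/w,1/2]$ that half coverage is about.

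It cannot be rescued. In the odds $y=x/(1-x)$, the sequence satisfying \eqref{eq:phaseA-logistic} with equality obeys $y_{j+1}=y_j\bigl(1+\tfrac{1/4}{1-x_j/4}\bigr)$. This factor is below $e^{1/4}$ whenever $x_j$ is below about $0.48$. That sequence therefore needs about $\ln w/\ln(5/4)\approx 4.48\ln w$ steps to reach $1/2$, and for large $w$ it is still below $1/2$ at step $4\ln(w-1)$. Hence no argument from \eqref{eq:phaseA-logistic} alone gives $j_{\mathrm{half}}\le 4\ln w$. Your fallback $x_{j+1}\ge\tfrac98 x_j$ is valid but yields $j_{\mathrm{half}}\approx 8.49\ln(w/2)$ and a total near $11.1\log_2 w$, not $7.96\log_2 w$. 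So ``take whichever route matches $4\ln w$'' has no valid option.

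You have two honest choices.

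\emph{Cite the lemma.} Cite Lemma~\ref{lem:phaseA-half-explicit} as a black box, as the paper does. Your proof then coincides with the paper's and inherits the ODE-versus-recursion weakness you correctly spotted.

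\emph{Work on the recursion directly.} Drop the half-coverage split. Since $x\mapsto x/(1-x)$ is increasing, \eqref{eq:phaseA-logistic} gives $y_{j+1}\ge y_j\bigl(1+\tfrac{1/4}{1-x_j/4}\bigr)\ge \tfrac54 y_j$, so $y_j\ge (5/4)^j/(w-1)$. The expected number of unfilled buckets is $w\,u_j=w/(1+y_j)$, which is at most one as soon as $j\ge 2\ln(w-1)/\ln(5/4)\approx 6.21\log_2(w-1)$. This proves the substantive claim, with $6.22\log_2 w+1$ in place of $7.96\log_2 w+1$. It does not prove the specific decomposition $j_{\mathrm{half}}+t_{\mathrm{exp}}$ with $j_{\mathrm{half}}\le 4\ln w$.
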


\begin{proof}
At time $j_{\mathrm{half}}$ we have $u_{j_{\mathrm{half}}}\le 1/2$.
Applying \eqref{eq:phaseA-geometric} for $t$ additional steps yields
\[
\mathbb{E}[w-|A_{j_{\mathrm{half}}+t}|]
\;=\; w\,u_{j_{\mathrm{half}}+t}
\;\le\; \frac{w}{2}\left(\frac78\right)^t.
\]
Choosing $t=t_{\mathrm{exp}}$ makes the right-hand side at most $1$.
The numerical constants follow from $1/\ln(8/7)\approx 7.489$ and
$\ln 2\approx 0.6931$.
\end{proof}

\begin{theorem}[All buckets filled with high probability]
\label{lem:phaseA-all-whp}
Fix $\delta\in(0,1)$. Let
\[
t_\delta \;:=\;
\left\lceil \frac{\ln\!\bigl(w/(2\delta)\bigr)}{\ln(8/7)} \right\rceil,
\qquad
j_\delta \;:=\; j_{\mathrm{half}} + t_\delta.
\]
Then
\[
\Pr\bigl(|A_{j_\delta}|<w\bigr) \;\le\; \delta,
\]
i.e., Phase~A fills all buckets with probability at least $1-\delta$ by
iteration $j_\delta$.

Moreover, using $1/\ln(8/7)\approx 7.489$, we have the explicit bound
\[
j_\delta
\;\le\;
4\ln w \;+\; \frac{\ln\!\bigl(w/(2\delta)\bigr)}{\ln(8/7)} \;+\; 1
\;\le\;
\Bigl(4+\frac{1}{\ln(8/7)}\Bigr)\ln w \;+\; \frac{\ln(1/\delta)}{\ln(8/7)} \;+\; 1
\]
\[
\;\le\;
11.489\,\ln w \;+\; 7.489\,\ln(1/\delta) \;+\; 1
\]
\[
\;=\;
7.96\,\log_2 w \;+\; 5.19\,\log_2(1/\delta) \;+\; 1.
\]
In particular, for any fixed constant $\delta$ (e.g., $\delta=0.01$),
this is $j_\delta \le 11.489\,\ln w + O(1)$.
\end{theorem}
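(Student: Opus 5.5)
The plan is to mirror the proof of Lemma~\ref{lem:phaseA-all-explicit} and then finish with Markov's inequality. The random variable $U := w-|A_{j_\delta}|$, the number of unfilled buckets at iteration $j_\delta$, is nonnegative and integer-valued. So the event $\{|A_{j_\delta}|<w\}$ is exactly the event $\{U\ge 1\}$, and Markov gives
\[
\Pr\bigl(|A_{j_\delta}|<w\bigr)=\Pr(U\ge 1)\le \mathbb{E}[U].
\]
It therefore suffices to show $\mathbb{E}[U]\le\delta$, i.e.\ $w\,u_{j_\delta}\le\delta$ in the notation $u_j=1-x_j$.

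For this I take the starting point of Lemma~\ref{lem:phaseA-all-explicit}, $u_{j_{\mathrm{half}}}\le 1/2$ from Lemma~\ref{lem:phaseA-half-explicit}. I then iterate the contraction \eqref{eq:phaseA-geometric} for $t_\delta$ further steps, giving
\[
\mathbb{E}[U]\le \frac{w}{2}\Bigl(\frac78\Bigr)^{t_\delta}.
\]
The choice $t_\delta\ge \ln(w/(2\delta))/\ln(8/7)$ gives $(7/8)^{t_\delta}\le 2\delta/w$, hence $\mathbb{E}[U]\le\delta$.

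The one point needing care is that \eqref{eq:phaseA-geometric} was derived under $x_j\ge 1/2$, so it must keep holding at every step after $j_{\mathrm{half}}$. This follows because \eqref{eq:phaseA-logistic} shows $x_j$ is nondecreasing: once $x_j\ge1/2$ it stays so. A more delicate issue is that \eqref{eq:phaseA-anchor-recurrence} is a conditional-expectation statement in $|A_j|$, while the logistic inequality is stated for expectations. The map $a\mapsto a+\tfrac14(w-a)a/w$ is concave, so Jensen's inequality goes the wrong way for a lower bound on $\mathbb{E}[|A_{j+1}|]$. I would sidestep this for the post-half phase by working directly with the unfilled count. For any fixed unfilled bucket, the conditional fill probability \eqref{eq:phaseA-fill-prob} is at least $\tfrac14\cdot\tfrac{|A_j|}{w}$. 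If one has $|A_j|\ge w/2$ pathwise, this gives $\mathbb{E}[w-|A_{j+1}|\mid A_j]\le \tfrac78 (w-|A_j|)$, and the tower property yields the geometric bound rigorously. Failing a pathwise half-coverage guarantee, I would accept the paper's expectation-level comparison, as already used in Lemma~\ref{lem:phaseA-all-explicit}. I regard this transfer from the deterministic drift inequality to the random process as the main obstacle; the rest is bookkeeping.

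The explicit bounds are arithmetic. From $j_{\mathrm{half}}\le 4\ln w$ and $\lceil y\rceil\le y+1$,
\[
j_\delta\le 4\ln w+\frac{\ln(w/(2\delta))}{\ln(8/7)}+1.
\]
Using $\ln(w/(2\delta))\le \ln w+\ln(1/\delta)$ and $1/\ln(8/7)\approx 7.489$ gives $11.489\ln w+7.489\ln(1/\delta)+1$. Converting with $\ln x=(\ln 2)\log_2 x$ gives the coefficients $11.489\cdot0.6931\le 7.96$ and $7.489\cdot 0.6931\le 5.19$. For fixed $\delta$ the $\ln(1/\delta)$ term is $O(1)$, which gives the final remark.
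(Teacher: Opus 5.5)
Your proposal is correct relative to Lemmas~\ref{lem:phaseA-half-explicit} and~\ref{lem:phaseA-all-explicit}, and it is essentially the paper's proof: Markov's inequality on the nonnegative integer count $w-|A_{j_\delta}|$, combined with the bound $\mathbb{E}[w-|A_{j_{\mathrm{half}}+t}|]\le \frac{w}{2}(7/8)^t$ at $t=t_\delta$ (your remarks on the monotonicity of $x_j$ and on the Jensen direction only make explicit a step the paper leaves implicit, and you end at the same level of rigor). One small arithmetic slip, which the paper's last line shares: $11.489\ln 2>7.963$ and $7.489\ln 2>5.190$, so your claimed inequalities $\le 7.96$ and $\le 5.19$ are false as written, and the constants should be rounded up, e.g.\ to $7.97$ and $5.20$.
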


\begin{proof}
By Lemma~\ref{lem:phaseA-all-explicit}, for any $t\ge 0$,
\[
\mathbb{E}[w-|A_{j_{\mathrm{half}}+t}|]
\;\le\;
\frac{w}{2}\left(\frac78\right)^t.
\]
Markov's inequality gives
\[
\Pr(|A_{j_{\mathrm{half}}+t}|<w)
=
\Pr(w-|A_{j_{\mathrm{half}}+t}|\ge 1)
\;\le\;
\mathbb{E}[w-|A_{j_{\mathrm{half}}+t}|]
\;\le\;
\frac{w}{2}\left(\frac78\right)^t.
\]
Choosing $t=t_\delta$ makes the right-hand side at most $\delta$.
The numerical constants follow as above.
\end{proof}

\paragraph{Consequence: mesh size.}
Once all buckets are filled, selecting one anchor per bucket yields a
mesh with maximum gap $O(B/w)$. Deleting every even-indexed bucket ensures
both minimum and maximum gaps are $\Theta(B/w)$.

\paragraph{Consequence: meshing in linearithmic time.}
Phase~A fills all buckets in $O(\log w)$ steps w.h.p.\ with explicit
constants. For fixed $\delta$, Lemma~\ref{lem:phaseA-all-whp} gives
\[
j_\delta \;\le\; 11.489\,\ln w + O(1) \;=\; 7.96\,\log_2 w + O(1).
\]
Since each iteration maintains at most $O(w)$ candidates, the mesh is
constructed in expected $O(w \log w)$ time.

\paragraph{Comparison to prior work.}

In our work, we consider a parameter $w$ controlling the fineness of the mesh. In the worst case, assuming all buckets are filled, any $x \in [-B/2, B/2]$ would be at most $\Delta=\frac{B}{w}$ away from an anchor.

Da~Cunha et al.~\cite{DaCunhaRSSPMesh} show that given $n$ independent uniform variables $X_1..X_n$, $X_i \in [-1, 1]$, and a constant parameter $\epsilon \in (0, 1/3)$, with probability of at least $1-\epsilon$, for all $z \in [-1, 1]$, there exists a subset $S$ of $X_1..X_n$ such that $|\sum_{i\in S}X_i-z| \le \epsilon$ if $n=O(\log(1/\epsilon))$. This can be restated equivalently as:

Given $n$ independent uniform variables $X_1..X_n$, $X_i \in [-B, B]$, and a constant parameter $w\in(3, \infty)$, with probability of at least $1-1/w$, for all $z \in [-B, B]$, there exists a subset $S$ of $X_1..X_n$ such that $|\sum_{i\in S}X_i-z| \le B/w$ if $n=O(\log w)$.

If one were to naively
materialize the full binary expansion tree to depth $O(C\log w)$, the
number of generated partial sums would be
\[
2^{O(C\log w)} \;=\; w^{O(C)}.
\]
In contrast, our Phase~A construction enforces width $w$ throughout via
bucketing and trimming at every level, and we prove that all $w$ buckets
are filled with probability at least $1-\delta$ within
\[
j_\delta \;\le\; 7.96\,\log_2 w \;+\; 5.19\,\log_2(1/\delta) \;+\; 1
\]
iterations (Lemma~\ref{lem:phaseA-all-whp}). The ability to construct a mesh on an interval $[-B/2, B/2]$ given starting element zero offers a construction to extend to $[-B, B]$, although this is usually not necessary: Build a mesh on $[-B/2, B/2]$, find the minimum anchor in the mesh $a_{min}\leq-B/2-B/w$, then build a mesh on $[a_{min}-B/2, a_{min}+B/2]$. Find the minimum on this mesh $a'_{min}\leq B-2B/w$, creating another mesh $[a'_{min}-B/2, a'_{min}+B/2]$. For $w \geq 4$ the union of these meshes will fully cover $[-B, 0]$. Repeat symmetrically for positive side, for a pessimistic additional constant factor of 5.

Consequently, the mesh is
constructed in $O(w\log w)$ time while maintaining $O(w)$ state.

For any fixed failure probability $\delta$ (e.g.\ $\delta=0.01$), the
leading constant on $\log_2 w$ is explicit and below $10$
($7.96\,\log_2 w+O(1)$), whereas the result of Da Cunha et al. does not provide
an explicit, bounded constant for a width-$w$ trimmed construction.

This result strengthens proofs that rely on the results of Da Cunha et al, such as the proof of the Strong Lottery Ticket Hypothesis \cite{FrankleLotteryTicket}.

\begin{lemma}[Uniformity within a bucket]
\label{lem:bucket-uniform}
Fix a time step $t$ and a bucket $\mathcal I_i\subseteq[-B/2,B/2]$ with $0\notin \mathcal I_i$.
Let $E_{t,i}$ be the multiset of candidate anchors that fall in $\mathcal I_i$ at time $t$,
i.e.,
\[
E_{t,i} \;:=\; \{\,p+s_t : p\in A^{(t-1)},\; p+s_t\in \mathcal I_i\,\},
\]
where $s_t\sim U([-B,B])$ is independent of $A^{(t-1)}$.
If the algorithm selects $a$ by choosing an element uniformly at random from $E_{t,i}$ (conditioned on $E_{t,i}\neq\emptyset$),
then
\[
a \;\big|\; (a\in \mathcal I_i,\;E_{t,i}\neq\emptyset) \;\sim\; U(\mathcal I_i).
\]
\end{lemma}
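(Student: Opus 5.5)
Condition on the previous anchor set $A^{(t-1)}=\{p_1,\dots,p_m\}$, which by the statement is independent of $s_t\sim U([-B,B])$. Each candidate $p_k+s_t$ is then uniform on $[p_k-B,p_k+B]$. Since $A^{(t-1)}\subseteq[-B/2,B/2]$ and $\mathcal I_i\subseteq[-B/2,B/2]$, we have $\mathcal I_i\subseteq[p_k-B,p_k+B]$ for every $k$. Consequently the conditional density of $p_k+s_t$ on $\mathcal I_i$ is the constant $1/(2B)$.

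The plan is to decompose the selected $a$ according to which candidate was chosen and which candidates landed in the bucket. Write $K\subseteq\{1,\dots,m\}$ for the random set of indices with $p_k+s_t\in\mathcal I_i$. The selection rule picks $k\in K$ with probability $1/|K|$. For a Borel set $D\subseteq\mathcal I_i$ we want
\[
\Pr(a\in D,\ E_{t,i}\neq\emptyset)\;=\;\frac{\mathrm{Leb}(D)}{\mathrm{Leb}(\mathcal I_i)}\,\Pr(E_{t,i}\neq\emptyset).
\]
The key structural fact is the one already used in the Phase~A analysis: after restricting to even-indexed buckets, anchors are spaced at least $\Delta=B/w$ apart. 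Hence the translates $\mathcal I_i-p_k$ are pairwise disjoint (up to endpoints), and since there is a single shift $s_t$, at most one candidate can fall in $\mathcal I_i$. So $|K|\le1$ almost surely, the random tie-breaking is vacuous, and
\[
\Pr(a\in D,\ E_{t,i}\neq\emptyset)=\sum_k\Pr(s_t\in D-p_k)=\sum_k\frac{\mathrm{Leb}(D)}{2B}=\frac{m\,\mathrm{Leb}(D)}{2B}.
\]
This is proportional to $\mathrm{Leb}(D)$. Normalizing by the value at $D=\mathcal I_i$ gives $a\mid(E_{t,i}\neq\emptyset)\sim U(\mathcal I_i)$ conditionally on $A^{(t-1)}$, and integrating over $A^{(t-1)}$ preserves this.

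Two details need checking. First, the containment $\mathcal I_i\subseteq[p_k-B,p_k+B]$ is what removes boundary truncation, and it uses both sets lying in $[-B/2,B/2]$. The hypothesis $0\notin\mathcal I_i$ excludes the bucket containing the persistent element $0$ (the non-take branch), whose presence would add an atom at $0$. More generally, the expanded set also contains the old anchors $A^{(t-1)}$ themselves, not only the shifted copies. The statement defines $E_{t,i}$ only through the shifted copies, so I would either follow that definition as given or note that old anchors in $\mathcal I_i$ were themselves uniform by induction on $t$, which makes the mixture still uniform.

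The main obstacle is the case without the spacing guarantee, where several candidates may fall in the bucket. There I would first try exchangeability: the multiset of points $\{p_k+s_t\}\cap\mathcal I_i$ is a rigid translate, and uniform selection among the points gives $a$ the density
\[
f(x)\;\propto\;\sum_k\Pr\bigl(\text{choose }k\mid s_t=x-p_k\bigr),
\]
which is generally not constant near the bucket edges. So I would lean on the disjointness argument and state explicitly that the lemma holds under the even-indexed spacing (or under the mean-field assumption).
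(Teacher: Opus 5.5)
The step that fails is ``at most one candidate can fall in $\mathcal I_i$.'' The spacing you invoke is not a property of $A^{(t-1)}$. Inside the Phase~A loop the algorithm keeps one element in each of the $w$ buckets, and even-indexed anchors are deleted only after the loop ends. So two parents in adjacent buckets can be arbitrarily close, and both land in $\mathcal I_i$ for a positive-measure set of shifts. In the coverage analysis, restricting to even-indexed buckets is only a monotone lower-bounding device: discarding parents can only shrink $I_{i,j}$. It cannot be transplanted here, because the uniform choice is over all candidates in the bucket, so discarded parents still change the selection probabilities. As written, your computation proves the lemma only under an added hypothesis (pairwise parent spacing at least $\Delta$), as your last paragraph concedes.

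Your diagnosis of the obstruction is correct, and it is exactly the step the paper's proof passes over. The paper conditions on $F_p=\{p+s_t\in\mathcal I_i\}$, obtains $p+s_t\mid F_p\sim U(\mathcal I_i)$, and then calls the selected $a$ a mixture of identical uniform laws. But the weight candidate $p$ receives at location $y$ is $1/|E_{t,i}|$ evaluated at $s_t=y-p$. This depends on $y$ through the shared shift, so the mixing weights are not location-free. For instance, take two parents at distance $\delta<\Delta/2$ and $\mathcal I_i=[c,c+\Delta)$. Then the density of $a$ given $A^{(t-1)}$ is proportional to $\tfrac32$, $1$, $\tfrac32$ on $[c,c+\delta)$, $[c+\delta,c+\Delta-\delta)$, $[c+\Delta-\delta,c+\Delta)$. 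So conditionally on $A^{(t-1)}$, which is how both you and the paper argue, the conclusion is false for Phase~A as written. No argument of that form can close your gap without changing the statement.

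Under your spacing hypothesis the events $F_p$ are disjoint, the paper's weights become $\Pr(F_p)/\Pr(\bigcup_q F_q)$, and the two proofs coincide. Your direct computation of $\Pr(a\in D,\,E_{t,i}\neq\emptyset)$ is the cleaner write-up of that case. For the algorithm as written, a constant-factor version does hold. Parents are one per bucket, so each translate $\mathcal I_i-s_t$ contains at most two of them. Hence every selection weight lies in $\{\tfrac12,1\}$, and the conditional density of $a$ lies in $[\tfrac{1}{2\Delta},\tfrac{2}{\Delta}]$. That suffices for constant-factor uses such as the intersection bound in Lemma~\ref{lem:anchor-conv}.
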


\begin{proof}
Condition on $A^{(t-1)}$ and on the event that a specific parent $p\in A^{(t-1)}$
produces a candidate in $\mathcal I_i$, i.e.\ on the event
\[
F_p := \{\,p+s_t\in \mathcal I_i\,\} \equiv \{\,s_t\in \mathcal I_i-p\,\}.
\]
Because $s_t\sim U([-B,B])$ and $\mathcal I_i\subseteq[-B/2,B/2]$ while $p\in[-B/2,B/2]$
(by construction of Phase~A), we have $\mathcal I_i-p\subseteq[-B,B]$.
Therefore, conditioning on $F_p$ yields
\[
s_t \mid F_p \;\sim\; U(\mathcal I_i-p).
\]
By translation, this implies
\[
p+s_t \mid F_p \;\sim\; U(\mathcal I_i).
\]

Now condition only on the event that the bucket is nonempty, $E_{t,i}\neq\emptyset$.
Each element of $E_{t,i}$ arises from some parent $p$ and, conditional on its existence,
has distribution $U(\mathcal I_i)$ as shown above.
The algorithm selects uniformly among the (random number of) elements in $E_{t,i}$, so the selected $a$
is a mixture of $U(\mathcal I_i)$ distributions with mixing weights that sum to $1$.
A mixture of identical distributions is the same distribution; hence
$a\sim U(\mathcal I_i)$.
\end{proof}

\subsection*{Phase B}

Let $\mathcal{W}$ denote the Phase~B beam. Assume Phase~A has filled a constant fraction of buckets, and for simplicity that all buckets are filled. This happens in $O(\log w)$ steps, as proven in Phase A analysis.
To simplify analysis, we delete every even-indexed bucket; this guarantees anchors are separated by $\Theta(B/w)$ and that the mesh (maximum spacing) is also $\Theta(B/w)$. This is done because each bucket has length $O(B/w)$ so having no two adjacent buckets guarantees at least that much spacing. Deleting every other bucket can only reduce maximum spacing by a constant factor, and with all buckets filled max spacing was $O(B/w)$.

Without this deletion, anchors could cluster at bucket edges, complicating the spacing guarantees.

Define 
\[
Z \;=\;\{\,T-a_i : a_i\in A\,\}. 
\] 
For any $x\in \mathcal{W}$, its distance to $Z$ is 
\[
d(x) \;=\; \min_{z\in Z}\,|x-z|.
\]
If $d(x)=d$, then choosing a shift $s\sim U([-B,B])$ that lands in $[z-d,\,z+d]$ for some $z\in Z$ reduces the error to at most $d' < d$.  

---
\subsubsection*{Burn-in to the small-gap regime}

Before the expected error decay can take effect, the beam must reach the target range and populate the anchor Voronoi cells. We call this the ``burn-in'' period.

\begin{lemma}[Burn-in Duration]
\label{lem:burnin-summary}
Assume Phase~A has filled all buckets, yielding an anchor set $Z$ spaced by $\Theta(B/w)$. With high probability, every anchor Voronoi cell will be occupied by at least one beam element after
\[ t_{\text{burn}} = O\!\left(\frac{\min Z}{B}\right) + O(\log w). \]
steps.
\end{lemma}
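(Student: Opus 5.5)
The plan is to split the burn-in into two stages and bound each separately: a \emph{travel} stage, in which the beam moves from $0$ into the residual range, and a \emph{spreading} stage, which is essentially a replay of the Phase~A bucket-filling argument on the Voronoi cells of $Z$. Throughout I would work under the symmetric-input reduction and the assumptions listed above.

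\textbf{Travel stage.} The residuals $Z=\{T-a_i\}$ form a mesh of spacing $\Theta(B/w)$ on an interval of length $\Theta(B)$. Its near edge sits at distance roughly $\min Z$ from the origin; by the complement reduction this distance is nonnegative. While $H=\emptyset$, the beam keeps the $w$ sums closest to $\mathcal R$. In particular it keeps the single sum $x^{(t)}$ closest to $\min Z$. At each step that element can either take or skip $s_t$. I claim the event $s_t\in[B/4,B]$ has probability at least $3/8$. On that event the distance of the best beam element to $\min Z$ drops by at least $B/4$, unless that element is already within $B$ of $Z$. Otherwise the distance does not increase, because the ``skip'' branch is always retained. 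The number of successful steps in $t$ trials is $\mathrm{Bin}(t,3/8)$, so a Chernoff bound shows that after $t_1=O(\min Z/B)+O(\log w)$ steps, with probability $1-1/\mathrm{poly}(w)$, some beam element lies within distance $B$ of $Z$. One more constant number of steps then lands it inside $[\min Z,\max Z]$ with constant probability per step, by the same Lebesgue-measure computation used for \eqref{eq:phaseA-fill-prob}. At that point the beam hits $H\neq\emptyset$ and enters the post-hit regime.

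\textbf{Spreading stage.} From here I would mirror Phase~A exactly. Let $k_j$ be the number of occupied Voronoi cells. The representatives are pairwise separated by $\Theta(B/w)$, because they occupy distinct cells of a mesh whose gaps are bounded below. So, as in the even-bucket argument, the translates $\mathcal V_r-x$ over representatives $x$ are essentially disjoint within $[-B,B]$. This should yield
\[
\Pr(\text{cell } r \text{ newly occupied at step } j \mid \text{unoccupied}) \;\ge\; c\,\frac{k_j}{w}
\]
for an absolute $c>0$, with the boundary factor handled as in Phase~A. The recursion is also monotone: the skip branch keeps each old representative (or a better one) in its cell, so $k_j$ never decreases. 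This gives the same logistic drift as \eqref{eq:phaseA-logistic}. Repeating the proofs of Lemma~\ref{lem:phaseA-half-explicit}, Lemma~\ref{lem:phaseA-all-explicit}, and the Markov step in Theorem~\ref{lem:phaseA-all-whp} verbatim, all cells are occupied after $O(\log w)+O(\log(1/\delta))$ further steps with probability $1-\delta$. Taking $\delta=1/\mathrm{poly}(w)$ gives the stated bound, and a union bound over the two stages finishes the proof.

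\textbf{Main obstacle.} I expect the delicate step to be the transition between regimes. During the pre-hit regime the beam is truncated by distance rather than by cell, and at the moment of the first hit only one cell may be occupied. This matches $|A_0|=1$ in Phase~A, so the recursion restarts cleanly. The other delicate point is verifying the disjointness and measure bound for $\mathcal V_r - x$ when the representatives are not exactly a bucket grid. I would handle this by again discarding every other cell, paying a constant factor, exactly as was done for the even-indexed buckets. Boundary cells near $\min Z$ and $\max Z$, whose translates might exit $[-B,B]$, would cost one more factor of $1/2$, as in the Phase~A analysis.
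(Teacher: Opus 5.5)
Your proposal follows essentially the paper's proof. It has a drift stage that carries the best beam element from $0$ into the residual range in $O(\min Z/B)$ steps, followed by a replay of the Phase~A logistic filling argument on the Voronoi cells of $Z$, with monotonicity of occupied cells and an every-other-cell disjointness step (which supplies exactly what the paper's ``union bound'' in Appendix~\ref{app:burn-in} needs in order to give a \emph{lower} bound on the filling probability).

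The only technical difference is the travel stage, where your Chernoff bound on the number of steps with $s_t\in[B/4,B]$ replaces the paper's CLT for the maximal element $M_k=\sum_t\max\{0,s_t\}$. This gives a cleaner high-probability statement and suffices for the $O(\cdot)$ claim here, but its leading term is about $\tfrac{32}{3}\min Z/B$ rather than the sharp $4\min Z/B$ that Case~1 of Theorem~\ref{thm:mitm-main} later uses (a Hoeffding bound applied directly to $M_k$ would recover that constant).
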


\begin{proof}[Proof sketch]
The burn-in occurs in two distinct phases:
\begin{enumerate}
    \item \textbf{Reaching the anchor range:} Before entering the target range $Q = [\min Z - c\frac{B}{w}, \max Z + c\frac{B}{w}]$, the beam scoring strictly preserves the largest values. The beam evolves as a random walk with positive drift. By a standard Central Limit Theorem argument, the beam enters $Q$ in $O(\min Z / B)$ steps.
    \item \textbf{Filling Voronoi cells:} Once inside $Q$, the multi-target scoring ensures that when a Voronoi cell is occupied, it cannot become vacant again. The filling of the remaining unoccupied cells follows the exact same discrete logistic drift dynamics as Phase A (see Section~\ref{sec:phaseA-logw-coverage}). Consequently, all $m = \Theta(w)$ cells are filled in an additional $O(\log w)$ steps.
\end{enumerate}
The rigorous proofs establishing error monotonicity, filling monotonicity, and the formal logistic drift lower bound are deferred to Appendix~\ref{app:burn-in}.
\end{proof}

\subsubsection*{Single element analysis}

Fix a beam element $x$ at distance $d(x)=d$ from $Z$.

\paragraph{Probability of improvement.}
- Each anchor $z\in Z$ contributes an interval $[z-d,\,z+d]$ of length $2d$.  
- Because anchors are $\Omega(B/w)$ apart, for Lebesgue measure calculations, these intervals can be considered disjoint, up to constant factors.
- There are $\Theta(w)$ anchors within range $B$ (since spacing is $\Theta(B/w)$, and the whole domain has length $2B$).  
- Thus the total “improvement region” has length $\Theta(wd)$.  
- Since $s\sim U([-B,B])$, the probability of improvement is
\[
\Pr(\text{improvement}\mid d) \;=\; \Theta\!\left(\min\{1, \frac{wd}{B}\}\right).
\]
\begin{lemma} \label{lem:uniform-improvement}
Conditional on an improvement occurring, the new minimum gap $D'$ is stochastically dominated by a uniform distribution on $[0,D]$. That is, conditional on improvement, the expected new gap satisfies $\mathbb{E}[D'] \le \frac{D}{2}$.
\end{lemma}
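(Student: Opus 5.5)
We prove the lemma by working in the single-element setting of the preceding paragraph: a beam element $x$ at distance $D=d(x)$ from the residual set $Z$, and a fresh shift $s\sim U([-B,B])$ independent of $x$ and $Z$. The plan is to condition on which anchor is hit, show that the conditional law of the new offset is uniform, and then average. We treat only the candidate $x+s$; the retained minimum is at most $|x+s-z|$, so bounding this candidate's gap is enough for stochastic domination.

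\textbf{Step 1 (condition on the anchor).} Improvement means $x+s\in\bigcup_{z\in Z}[z-D,z+D]$. Because anchors are $\Omega(B/w)$ apart (after deleting even-indexed buckets), we either treat these intervals as disjoint or, in the regime $D\gtrsim B/w$, assign each point to its nearest anchor. Either way the improvement event splits into disjoint events $F_z=\{s\in[z-D-x,\,z+D-x]\}$, intersected with the Voronoi cell of $z$ where needed.

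\textbf{Step 2 (uniformity on each event).} On $F_z$, provided the interval $[z-D-x,z+D-x]$ lies inside $[-B,B]$, the shift $s$ conditioned on $F_z$ is uniform on an interval of length $2D$. This is the same translation argument as in Lemma~\ref{lem:bucket-uniform}. Hence $x+s-z\sim U([-D,D])$, so $D'\le|x+s-z|\sim U([0,D])$. When the interval is clipped by the boundary $\pm B$, $s$ is still uniform on the part that remains. That part is a subinterval of $[-D,D]$ containing $0$ or one of its ends. The dangerous case is a one-sided clip, which leaves $|x+s-z|$ uniform on $[0,D]$ or possibly concentrated toward larger values.

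\textbf{Step 3 (average over anchors).} Once each $F_z$ gives a conditional law that is $U([0,D])$ or dominated by it, the mixture over $z$ with weights $\Pr(F_z)/\Pr(\text{improvement})$ is dominated by $U([0,D])$. Domination is preserved under mixtures. Integrating then gives $\mathbb{E}[D']\le D/2$.

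\textbf{Expected obstacles.} The main difficulty is the cases where uniformity fails: boundary truncation, and overlapping intervals when $D$ is comparable to the anchor spacing. Under a one-sided clip, the remaining set of $|x+s-z|$ values is $[0,a]$ for some $a\le D$, which is dominated by $U([0,D])$ anyway. Under a clip that cuts away a neighborhood of $0$, the domination could fail. We would either exclude this case by invoking Assumption~(1) and the symmetric transformation to keep $x$ away from $\pm B$, or absorb it into a constant factor as the paper does elsewhere. Overlaps are handled by nearest-anchor assignment: within a Voronoi cell, the distance to the nearest anchor is at most the distance to $z$, which preserves domination. The mean-field assumption~(5) justifies treating the offset of $x$ relative to anchors as independent of $s$.
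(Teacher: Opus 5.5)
Your route differs from the paper's in how overlapping improvement intervals are handled. The paper works with all beam--anchor pairs $(i,k)$ at once. On a single interval $[Z_k-x_i-D,\,Z_k-x_i+D]$ the new gap is exactly $U[0,D]$, and where intervals overlap it argues that $D'$, being a minimum over several pairs, can only be smaller. That covers the global minimum gap that Theorem~\ref{thm:phaseB-decay} actually uses. However, the overlap step is only a plausibility argument: conditioning on a union is not the same as taking a minimum of independent uniforms. For instance, the part of an interval not covered by an overlapping neighbour yields gaps concentrated near $D$. Your partition into disjoint nearest-anchor pieces, followed by ``domination survives mixtures,'' is a rigorous replacement for that step. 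It also needs no mean-field input: $s$ is a fresh draw, so you can argue conditionally on the current configuration, and Assumption~5 plays no role. What you give up is scope. You treat a single $x$, and conditioning on ``$x+s$ improves'' is not conditioning on the global improvement event over all beam elements. The argument transfers verbatim if you partition $s$-space by the nearest point of the full centre set $\{Z_k-x_i\}_{i,k}$, whose intervals all have the common radius $D$.

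Two repairs are needed.

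\emph{Step~2 on Voronoi-truncated pieces.} As written, Step~2 is wrong there. The piece $F_z\cap V_z$ is not a length-$2D$ interval. It is $\{x+s\in[z-a,z+b]\}$ with $0\le a,b\le D$, and it always contains $z$ because $z\in V_z$. Conditionally, $|x+s-z|$ is the mixture $\tfrac{a}{a+b}U[0,a]+\tfrac{b}{a+b}U[0,b]$, whose CDF $\tfrac{\min(r,a)+\min(r,b)}{a+b}$ is at least $r/D$. That is why overlaps preserve domination. The reason is not that ``the nearest-anchor distance is at most $|x+s-z|$'' (inside $V_z$ they are equal); it is the same observation you already make for a one-sided clip.

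\emph{The boundary case.} The case you flag is a genuine counterexample to the literal statement, and the paper's proof silently ignores it. Take $m$ disjoint unclipped intervals plus one whose centre $z-x$ sits at $B+D/2$. Then $\Pr(D'\le D/2\mid\text{improvement})=\tfrac{m}{2m+1/2}<\tfrac12$. Keeping $x$ away from $\pm B$ does not exclude this; what matters is whether some $|z-x|$ lies in $(B,B+D]$. Since the anchors lie in $[-B/2,B/2]$, the residual set has diameter at most $B$. So the clean extra hypothesis is $x\in[\min Z,\max Z]$, imposed on every beam element in the multi-element version. Under it every centre lies in $[-B,B]$, every clipped piece still contains its centre, and the mixture bound applies. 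Without such a hypothesis you can at best absorb the clipped anchors into a constant factor. That suffices for the drift bound, but it does not give $\mathbb{E}[D']\le D/2$ as stated.
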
 
\begin{proof}
For each pair $(i,k)$ of beam element $x_i$ and anchor $Z_k$, an improvement occurs if the shift $s$ falls within the interval $[Z_k - x_i - D, Z_k - x_i + D]$, which has length $2D$. On any such single interval, the mapping
\[s \;\longmapsto\; |(x_i + s) - Z_k|.\]
is piecewise linear, and its image is exactly $[0,D]$. Because $s$ is drawn uniformly from $[-B, B]$, if $s$ falls into exactly one such improvement interval, the resulting new gap is exactly uniform on $[0,D]$.

In the case where multiple improvement intervals overlap, $s$ may fall into an intersection of two or more intervals. When this happens, the new global minimum gap $D'$ is the minimum over all pairs $(i,k)$ for which $s$ is an improving shift. Since taking the minimum of multiple variables can only decrease the final value, the resulting distribution of $D'$ is stochastically smaller than $U(0, D)$. Therefore, overlapping intervals strictly improve the error decay, and we can safely upper-bound the expected new gap by the expectation of a $U(0, D)$ variable, yielding $\mathbb{E}[D'] \le \frac{D}{2}$.
\end{proof}
\paragraph{Expected drift.}

Let $D_t$ be the global minimum gap at step $t$, where $D_0$ is the minimum gap right after Burn-in, and hence $D_0 \leq B/w$.

Because we are bounding the expected remaining error from above, the gap shrinks by at least:
\[
\mathbb{E}[\,D_{t+1}-D_t \mid D_t=D\,] 
\;\leq\; -\,c\,\frac{w}{B}\,D^2
\]
for some absolute constant $c>0$.

---
\subsubsection*{Discrete Decay Bound}
By definition, after burn-in, the initial gap satisfies $D_0 \leq B/w$. Setting $k = c\frac{w}{B}$, we have $k D_0 \le c$. By absorbing constants into the base step if necessary, we can ensure the strict condition $k D_0 < 1$ is met to apply the following inductive bound.

\begin{lemma}[Discrete Decay Error Bound]\label{lem:discrete-upper-single}
Assume the expected gap is bounded by the discrete recursion envelope
\[D_{t+1} = D_t - k D_t^2,\]
where $k = c\frac{w}{B}$ and the initial state satisfies $k D_0 < 1$. 
Then $D_t = O\!\left(\frac{B}{wt}\right)$.
\end{lemma}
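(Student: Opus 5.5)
The bound comes from the standard reciprocal trick for the recursion $D_{t+1}=D_t-kD_t^2$: pass to $1/D_t$, which grows at least linearly in $t$. First I check that the iterates stay positive and decrease. If $0<kD_t<1$, then $D_{t+1}=D_t(1-kD_t)\in(0,D_t)$, so $kD_{t+1}<kD_t<1$. By induction from $kD_0<1$, every $D_t$ is positive, the sequence is nonincreasing, and $kD_t<1$ holds for all $t$. This step needs the hypothesis $kD_0<1$, which the paper arranged right before the lemma by absorbing constants.

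Next comes the reciprocal estimate. Since $D_{t+1}=D_t(1-kD_t)$ with $0<kD_t<1$, we have
\[
\frac{1}{D_{t+1}}=\frac{1}{D_t}\cdot\frac{1}{1-kD_t}\ge\frac{1}{D_t}\bigl(1+kD_t\bigr)=\frac{1}{D_t}+k,
\]
using $1/(1-y)\ge 1+y$ for $y\in[0,1)$. Telescoping gives $1/D_t\ge 1/D_0+kt\ge kt$. Hence
\[
D_t\le\frac{1}{kt}=\frac{B}{c\,w\,t},
\]
and for $t=0$ the bound $D_0\le B/w$ is available directly. This gives $D_t=O\!\left(\frac{B}{wt}\right)$ with explicit constant $1/c$. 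The sharper form $D_t\le D_0/(1+kD_0t)$ also follows, and it interpolates between the burn-in value $B/w$ and the $B/(wt)$ tail.

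None of these steps is hard. The one place needing care is the monotonicity and positivity invariant $kD_t<1$, which keeps $1-kD_t$ positive so that the reciprocal inequality is legitimate. I would state it explicitly as the inductive hypothesis. I would also remark that the lemma concerns the deterministic envelope. To transfer the bound to $\mathbb{E}[D_t]$ under the drift inequality $\mathbb{E}[D_{t+1}\mid D_t]\le D_t-kD_t^2$, one would use convexity of $D\mapsto D-kD^2$ being concave, so Jensen gives $\mathbb{E}[D_{t+1}]\le \mathbb{E}[D_t]-k\,\mathbb{E}[D_t]^2$. The map $x\mapsto x-kx^2$ is also increasing on $[0,1/(2k)]$, which lets the envelope dominate by a comparison argument. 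That transfer is the step I would flag if the lemma is later applied to expectations. It requires the stronger condition $kD_0\le 1/2$, which can again be ensured by shrinking $c$.
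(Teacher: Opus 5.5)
Your proof is correct and follows the paper's argument: the reciprocal step $1/D_{t+1}\ge 1/D_t+k$ via $1/(1-kD_t)\ge 1+kD_t$, then telescoping, then the $O(B/(wt))$ bound. The only differences are cosmetic: you drop the $1/D_0$ term where the paper uses $1/D_0\ge w/B$, and you state the invariant $0<kD_t<1$ as an explicit induction, whereas the paper simply asserts $kD_t\le kD_0$. Your closing remark matches the paper's Jensen-plus-comparison transfer lemma; note that the extra condition $kD_0\le 1/2$ is not actually needed if you apply the same reciprocal step directly to $\mathbb{E}[D_{t+1}]\le \mathbb{E}[D_t]-k\,\mathbb{E}[D_t]^2$.
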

\begin{proof}
Rearranging this recursion yields:
\[\frac{1}{D_{t+1}} = \frac{1}{D_t(1 - k D_t)} \geq \frac{1}{D_t} + k,\] 
where the inequality follows from the geometric series expansion of $(1 - k D_t)^{-1}$. This expansion is strictly valid because $k D_t \le k D_0 < 1$. By induction over $t$ steps, this yields:
\[ \frac{1}{D_{t+1}} \geq \frac{1}{D_0} + k(t+1) \implies D_{t+1} \leq \frac{D_0}{1 + k D_0(t+1)}. \]
Substituting $k = c\frac{w}{B}$ back into the bound, we obtain:
\[ D_{t+1} \leq \frac{D_0}{1 + c\frac{w D_0}{B}(t+1)} = \frac{1}{\frac{1}{D_0} + c\frac{w}{B}(t+1)}. \]
Using the initial bound $D_0 \leq B/w$, we know $\frac{1}{D_0} \geq \frac{w}{B}$. Substituting this into the denominator strictly upper-bounds the fraction, yielding:
\[D_{t+1} \leq \frac{1}{\frac{w}{B} + c\frac{w}{B}(t+1)} = \frac{1}{\frac{w}{B}\big(1 + c(t+1)\big)}. \] 
Since $c > 0$ is a constant, this simplifies directly to the asymptotic bound:
\[ D_t = O\!\left(\frac{B}{wt}\right), \]
yielding the stated decay rate.
\end{proof}

\begin{lemma}{Deterministic Recursion Overestimates Error Compared to Stochastic Recursion}\label{lem:stochastic-leq-discrete}
The deterministic recursion
\[D_{t+1} = D_t - k D_t^2\] 
can only overestimate the error compared to the stochastic version
\[\mathbb{E}[D_{t+1} \mid D_t] \le D_t - k D_t^2.\]
\end{lemma}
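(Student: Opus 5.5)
The claim to establish is that if $\mathbb{E}[D_{t+1}\mid D_t]\le f(D_t)$ with $f(D):=D-kD^2$, then $\mathbb{E}[D_t]\le \bar D_t$. Here $\bar D_t$ is the deterministic envelope $\bar D_{t+1}=f(\bar D_t)$, $\bar D_0=D_0$, so the bound of Lemma~\ref{lem:discrete-upper-single} transfers to the expected gap. The plan has three steps: show $f$ is concave and increasing on the relevant range, apply Jensen's inequality, and close by induction on $t$.

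First, I record the properties of $f$. It is concave on all of $\mathbb{R}$, since $f''=-2k<0$. It is nondecreasing on $[0,1/(2k)]$, since $f'(D)=1-2kD\ge 0$ there. By the setup preceding Lemma~\ref{lem:discrete-upper-single} we have $kD_0<1$, and after absorbing constants as the paper already allows we may take $kD_0\le 1/2$, i.e.\ $D_0\le 1/(2k)$. Next I need every iterate to stay in this interval. The stochastic gap is a minimum distance that never increases, because the beam keeps its best element per Voronoi cell (the monotonicity deferred to Appendix~\ref{app:burn-in}). Hence $0\le D_t\le D_0\le 1/(2k)$ almost surely. Deterministically, $f$ maps $[0,1/(2k)]$ into itself, so $0\le \bar D_t\le D_0$ as well.

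The induction runs as follows. The base case is $\mathbb{E}[D_0]=\bar D_0$. For the step, the tower property and the hypothesis give $\mathbb{E}[D_{t+1}]=\mathbb{E}\bigl[\mathbb{E}[D_{t+1}\mid D_t]\bigr]\le \mathbb{E}[f(D_t)]$. Concavity of $f$ and Jensen then give $\mathbb{E}[f(D_t)]\le f(\mathbb{E}[D_t])$. Finally, the inductive hypothesis $\mathbb{E}[D_t]\le \bar D_t$, together with monotonicity of $f$ on $[0,1/(2k)]$ (both points lie there), gives $f(\mathbb{E}[D_t])\le f(\bar D_t)=\bar D_{t+1}$. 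Chaining these yields $\mathbb{E}[D_{t+1}]\le\bar D_{t+1}$, so $\mathbb{E}[D_t]=O(B/(wt))$ by Lemma~\ref{lem:discrete-upper-single}.

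The step I expect to need the most care is monotonicity. Jensen alone is not enough: if $\mathbb{E}[D_t]$ exceeded the vertex $1/(2k)$, then $f$ would be decreasing there and the comparison would reverse. This is why I plan to secure the almost-sure bound $D_t\le D_0\le 1/(2k)$ explicitly. If only $kD_0<1$ is available, I will instead restrict to times after the first step, where $D_1\le 1/(4k)$ because $\max f = 1/(4k)$, and restart the induction from there. Either way, Lemma~\ref{lem:discrete-upper-single} then applies to the expected gap.
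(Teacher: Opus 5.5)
Your proposal is correct and follows the same route as the paper: the tower property, then Jensen's inequality, then an induction that uses monotonicity of $x\mapsto x-kx^2$ below $1/(2k)$. Your concavity of $f(D)=D-kD^2$ is the same fact as the paper's $\mathbb{E}[D_t^2]\ge(\mathbb{E}[D_t])^2$, and your $[0,1/(2k)]$ restriction is the paper's stability proviso $2kz_t<1$.

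Two small refinements:
\begin{itemize}
\item The almost-sure bound $D_t\le D_0$ is not needed. The inductive hypothesis $\mathbb{E}[D_t]\le\bar D_t\le 1/(2k)$ already puts both points on the increasing branch.
\item In your fallback, the bound $1/(4k)$ holds for the envelope $\bar D_1=f(\bar D_0)$, not for the random $D_1$. Read that way, and noting that the $t=0$ step needs no monotonicity because $\mathbb{E}[D_0]=\bar D_0$, the fallback correctly shows that $kD_0<1$ already suffices.
\end{itemize}
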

\begin{proof}
We proceed by analyzing the unconditional expectation of the stochastic drift. Taking the expectation of both sides with respect to the filtration up to time $t$ yields:
\[ \mathbb{E}[D_{t+1}] \le \mathbb{E}[D_t] - k \mathbb{E}[D_t^2]. \]
Because the function $f(x) = x^2$ is strictly convex, we apply Jensen's inequality, which guarantees that the expectation of the square is bounded below by the square of the expectation: $\mathbb{E}[D_t^2] \ge (\mathbb{E}[D_t])^2$. 

Since $k > 0$, substituting this lower bound into the subtracted term strictly upper-bounds the right-hand side:
\[ \mathbb{E}[D_{t+1}] \le \mathbb{E}[D_t] - k (\mathbb{E}[D_t])^2. \]
Let $y_t = \mathbb{E}[D_t]$ represent the expected error at step $t$. The sequence of expectations satisfies the recurrence inequality:
\[ y_{t+1} \le y_t - k y_t^2. \]
Now, consider the deterministic sequence $z_t$ defined by the exact recurrence $z_{t+1} = z_t - k z_t^2$, initialized at $z_0 = y_0 = \mathbb{E}[D_0]$. 

Because $D_t \ge 0$ almost surely, $y_t \ge 0$. Provided the initial state satisfies the stability condition $2k z_t < 1$ (which guarantees the mapping $x \mapsto x - kx^2$ is strictly monotonically increasing on the relevant domain), we can proceed by induction. Assume $y_t \le z_t$; then:
\[ y_{t+1} \le y_t - k y_t^2 \le z_t - k z_t^2 = z_{t+1}. \]
Therefore, $\mathbb{E}[D_t] \le z_t$ for all $t \ge 0$. Because the stochastic variance term ($\text{Var}(D_t) = \mathbb{E}[D_t^2] - (\mathbb{E}[D_t])^2$) is strictly non-negative, the stochastic sequence will, in expectation, decay faster than its deterministic counterpart. Thus, the deterministic ODE or discrete envelope serves as a rigorous upper bound for the expected stochastic error.
\end{proof}

\subsubsection*{Multiple Elements}

Let the minimum distance of the $w$ beam elements to any anchor be $D$.

At step $j$, the gap $D$ improves iff
\[x_i+s_j \;\in\; \bigcup_{k=1}^w [\,Z_k-D,\;Z_k+D\,] \;=:\; A \quad\text{for some } i\in[w],\]
or equivalently
\[s_j \;\in\; \bigcup_{i=1}^w (A - x_i).\]

Hence, the probability of improvement relies on the total Lebesgue measure of this union:
\[\Pr(\text{improvement from } D) \;=\; \frac{\operatorname{Leb}\!\left(\bigcup_{i=1}^w (A-x_i)\;\cap\;[-B,B]\right)}{2B}.\]

Because Phase A and Phase B are generated via tree expansions, the precise joint distribution of the anchors $Z$ and the beam elements $x$ contains path dependencies. However, to evaluate the expected volume of the improvement region, we only require that the $w^2$ pairwise differences $Z_k - x_i$ do not perfectly overlap. Under \textbf{Assumption 5}, the local offsets of these elements are sufficiently decorrelated such that we can treat them as pairwise independent for the purpose of the union bound, preventing a state-space collapse.

\begin{lemma}[Anchor Convolution Union: Expectation Lower Bound]
\label{lem:anchor-conv}
Partition $[-B/2,B/2]$ into $w$ buckets of length $\Delta:=\tfrac{B}{w}$. Under Assumption 5, let the anchors and beam elements be modeled as:
\[Z^{(1)}_i = C_i+U_i,\qquad Z^{(2)}_j = C_j+U_j+D_j,\]
where $C_i=-B/2+\Big(i-\tfrac12\Big)\Delta$, the variables $U_i, U_j \sim \mathrm{Unif}[-\Delta/2,\Delta/2]$ represent the pairwise-decorrelated microscopic offsets, and $D_j$ are arbitrary random variables supported on $[-\Delta/2,\Delta/2]$ representing the current gap. 

Define the cross-differences $D_{ij}=Z^{(1)}_i-Z^{(2)}_j$ and the union of improvement intervals:
\[\mathcal U_d \;=\; \bigcup_{i,j=1}^w [\,D_{ij}-d,\,D_{ij}+d\,].\]
There exist absolute constants $c,c'>0$ such that if $d \;\le\; c\,\frac{\Delta}{w}$, then the expected measure avoids degenerate collapse and scales quadratically with $w$:
\[\mathbb{E}\big[\,\mathrm{Leb}(\mathcal U_d)\,\big]\;\ge\; c'\,w^2\,d.\]
\end{lemma}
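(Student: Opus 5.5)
We use a second-moment (inclusion--exclusion) bound on the union rather than computing its measure directly. Write $N:=w^2$ for the number of intervals $J_{ij}:=[D_{ij}-d,D_{ij}+d]$, each of length $2d$. Bonferroni's inequality gives, pointwise,
\[
\mathrm{Leb}(\mathcal U_d)\;\ge\;\sum_{i,j}\mathrm{Leb}(J_{ij})\;-\;\sum_{\{(i,j),(k,l)\}}\mathrm{Leb}(J_{ij}\cap J_{kl}),
\]
where the second sum runs over unordered pairs of distinct index pairs. The first sum is deterministically $2dw^2$. Taking expectations, it therefore suffices to show
\[
\sum_{\{(i,j),(k,l)\}}\mathbb{E}\big[\mathrm{Leb}(J_{ij}\cap J_{kl})\big]\;\le\; d\,w^2 .
\]
Then $\mathbb{E}[\mathrm{Leb}(\mathcal U_d)]\ge d w^2$, so $c'=1$ works.

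Next, bound the expected overlap of two fixed intervals. We have $\mathrm{Leb}(J_{ij}\cap J_{kl})=(2d-|D_{ij}-D_{kl}|)_+\le 2d\,\mathbf 1\{|D_{ij}-D_{kl}|\le 2d\}$. Write
\[
D_{ij}-D_{kl}=(C_i-C_j-C_k+C_l)+(U_i-U_k)-(U_j-U_l)-(D_j-D_l),
\]
reading Assumption 5 as saying the offsets $U$ attached to distinct indices are independent uniforms on $[-\Delta/2,\Delta/2]$. Since $(i,j)\ne(k,l)$, at least one of $i\ne k$ or $j\ne l$ holds. Suppose $i\ne k$. Condition on everything except $U_i$ (or $U_k$). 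Then $D_{ij}-D_{kl}$ is a fixed quantity plus a uniform variable on an interval of length $\Delta$, so it has density at most $1/\Delta$. This holds for any values of the arbitrary $D_j$, which is exactly why the $D_j$ are harmless. Hence
\[
\Pr\big(|D_{ij}-D_{kl}|\le 2d\big)\le \frac{4d}{\Delta}.
\]
The same argument applies to $U_j$ or $U_l$ when $j\ne l$. One must be careful when $U_i$ appears on both the anchor side and the beam side (the case $i=j$ or $k=l$): there we pick an index whose offset appears with a nonzero net coefficient. Since $(i,j)\ne(k,l)$, some offset has net coefficient $\pm1$ or $\pm2$, and the density is then still at most $1/\Delta$.

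We also restrict the sum to pairs that can actually overlap. The deterministic part $C_i-C_j-C_k+C_l$ is a multiple of $\Delta$, and the random parts are bounded by $O(\Delta)$. So $|D_{ij}-D_{kl}|\le 2d<\Delta$ forces $|(i-j)-(k-l)|\le K$ for an absolute constant $K$. Fixing $(i,j)$, the number of admissible $(k,l)$ is therefore at most $(2K+1)w$. Summing over all pairs,
\[
\sum \mathbb{E}\big[\mathrm{Leb}(J_{ij}\cap J_{kl})\big]\;\le\; w^2\cdot(2K+1)w\cdot 2d\cdot\frac{4d}{\Delta}
\;=\;8(2K+1)\,\frac{w^3d^2}{\Delta}.
\]
This is at most $d w^2$ once $d\le \Delta/(8(2K+1)w)$, which fixes $c=1/(8(2K+1))$. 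The hypothesis $d\le c\,\Delta/w$ is exactly the regime where the $w^2$ intervals, each of length $\Theta(d)$, spread over a range of length $\Theta(w\Delta)=\Theta(B)$ without collapsing.

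The main obstacle is the density claim in the second step, namely identifying for every pair a single uniform offset with nonzero coefficient that is independent of the remaining terms. This is precisely where Assumption 5 (pairwise decorrelation, and independence from the $D_j$) carries the argument. I would state that reading of the assumption explicitly and handle the coincident-index cases $i=j$ and $k=l$ separately. Those cases number only $O(w^3)$ pairs, so even a cruder density bound of $O(1/\Delta)$ for them keeps the total at $O(w^3d^2/\Delta)$.
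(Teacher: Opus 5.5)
Your overall strategy works, but one step is false as stated. You claim that every pair $(i,j)\neq(k,l)$ leaves some offset with net coefficient $\pm1$ or $\pm2$. This fails when $i=j$ \emph{and} $k=l$ simultaneously. In the lemma's model, beam element $j$ carries the same $U_j$ as anchor $j$, so $D_{ii}=-D_i$ and
\[
D_{ii}-D_{kk}=D_k-D_i,
\]
with no uniform offset left. The $D_j$ are arbitrary, so you may take them all equal. Then the $w$ main-diagonal intervals coincide exactly, and these $\binom{w}{2}$ pairs contribute $w(w-1)d$ to the second Bonferroni sum deterministically. That is not $O(w^3d^2/\Delta)$ once $d\ll\Delta/w$. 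Your closing remark about a ``cruder $O(1/\Delta)$ density'' for coincident indices does not cover this case, because there is no density at all.

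The repair is short: use $\mathcal U_d\supseteq\bigcup_{i\neq j}J_{ij}$. For two distinct off-diagonal pairs, the $U$-part $U_i-U_j-U_k+U_l$ vanishes identically only if $\{i,l\}=\{j,k\}$ as multisets. That forces $i=j$ or $(i,j)=(k,l)$, and both are excluded. So your density bound, your locality restriction ($K=3$ suffices) and your counting go through verbatim on the $w^2-w$ off-diagonal intervals. Halving your $c$ then gives $\mathbb{E}[\mathrm{Leb}(\mathcal U_d)]\ge 2d(w^2-w)-\tfrac12 dw^2\ge\tfrac12 dw^2$ for $w\ge 2$. Alternatively, keep the diagonal and charge its overlaps the trivial $w(w-1)d$.

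With that fix your proof is correct, and it differs from the paper's in bookkeeping.
\begin{itemize}
\item \textbf{The paper's route.} It groups pairs by diagonal $s=i-j$, using that the shift $s\Delta$ is constant along a diagonal while $U_i-U_{i-s}$ is triangular. It applies Bonferroni within each diagonal to get $\Omega(M_s d)$, then adds these bounds over the $\Theta(w)$ diagonals with $|s|\le w/2$.
\item \textbf{What yours buys.} Adding per-diagonal lower bounds requires the diagonal unions to be essentially disjoint. But diagonal $s$ occupies $[s\Delta-\tfrac32\Delta-d,\;s\Delta+\tfrac32\Delta+d]$, so neighbouring diagonals overlap. The paper must either thin to every fourth diagonal or control exactly the cross-diagonal terms that your single global second sum already contains.
\item \textbf{What theirs buys.} The per-diagonal view makes the $s=0$ degeneracy you ran into harmless, since one of $\Theta(w)$ diagonals can be dropped for free.
\end{itemize}
Both arguments need the reading of Assumption~5 that you state explicitly: a single offset $U_m$ independent of the other offsets and of the $D$'s jointly. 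The paper's $O(d^2/\Delta)$ same-diagonal intersection bound also involves four offsets, so literal pairwise independence does not suffice there either.
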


\begin{proof}
Write the cross-difference as:
\[D_{ij} = (C_i-C_j) + (U_i-U_j) - D_j.\]
Group pairs $(i,j)$ by the macroscopic diagonal offset $s=i-j\in\{-(w-1),\dots,w-1\}$. For a fixed $s$, the deterministic macroscopic distance $C_i-C_{i-s}=s\Delta$ is constant, leaving the microscopic jitter governed strictly by $(U_i-U_{i-s}) - D_{i-s}$. Because Assumption 5 grants pairwise independence to the offsets $U$, the distribution of $(U_i-U_{i-s})$ forms a non-degenerate convolution (a triangular distribution of width $2\Delta$). 

The number of pairs for a given $s$ is $M_s=w-|s|$. For any $|s| \le w/2$, we have $M_s = \Theta(w)$. Since the centers $D_{ij}$ are independently and continuously smeared across intervals of width $2\Delta$, the expected intersection length of any two distinct target intervals $E_a$ and $E_b$ on the same diagonal is strictly bounded by their convolution: $\mathbb{E}[\mathrm{Leb}(E_a \cap E_b)] = O(d^2/\Delta)$. Applying the second-moment Bonferroni inequality, the expected measure of the union is bounded below by the sum of individual expected lengths minus the sum of pairwise expected intersections:
\[\mathbb{E}\left[\mathrm{Leb}\left(\bigcup_{k=1}^{M_s} E_k\right)\right] \ge \sum_{k=1}^{M_s} 2d - \sum_{a < b} O\left(\frac{d^2}{\Delta}\right) = 2M_s d - O\left(M_s^2 \frac{d^2}{\Delta}\right).\]
Given the condition $d \le c \frac{\Delta}{w}$ and $M_s \le w$, the subtracted intersection term simplifies to $O\left(M_s d \cdot \frac{wd}{\Delta}\right) \le O(M_s d \cdot c)$. By choosing the constant $c$ sufficiently small, the dominant linear sum is strictly preserved, yielding $\Omega(M_s d)$ per diagonal. Summing across the $\Theta(w)$ primary diagonals yields the global lower bound $\mathbb{E}[\mathrm{Leb}(\mathcal U_d)] = \Omega(w^2 d)$.
\end{proof}

\paragraph{Connection to the improvement probability.}
Let $D$ be the current minimum beam--anchor distance. Improvement occurs iff
\[
s \in \bigcup_{i=1}^w (A(D)-x_i)\cap[-B,B],
\qquad
A(D)=\bigcup_{k=1}^w [Z_k-D,Z_k+D].
\]

First, after burn-in, every $V_j$ is occupied by at least one element, implying that every anchor has an associated beam element attached to it.

Second, by Lemma~\ref{lem:bucket-uniform}, each anchor satisfies 
$Z_k=C_k+U_k$ with $U_k$ uniform within its bucket. 

Thirdly, by Lemma \ref{lem:uniform-improvement}, when a beam element with original distance $d$ attaches to an anchor, its new distance is rolled as $U(0, d)$, independently from the new anchor. By the single-element analysis, $d$ will shrink below $\frac{\Delta}{2}$ in $O(1)$ steps. This can be pessimistically replaced by $\mathcal{D}(0, \Delta/2)$, where $\mathcal{D}$ is an arbitrary distribution. Using offset rather than absolute distance yields $\mathcal{D}(-\Delta/2, \Delta/2)$.

Hence we may pessimistically beam element $x_i$ as follows:
\[
\widetilde x_i = Z_{j(i)}+\mathcal{D}(-\Delta/2, \Delta/2)=C_{j(i)} + U_{j(i)}+\mathcal{D}(-\Delta/2, \Delta/2),
\]
where $j(i)$ is the index of the anchor that beam element $x_i$ is attached to.
$U_i\sim\mathrm{Unif}[-\Delta/2,\Delta/2]$, and $\mathcal{D}$ is an arbitrary distribution, independent from $U_{j(i)}$.

Renaming $Z^{(1)}:=Z$ and $Z^{(2)}_i=C_{j} + U_j+\mathcal{D}(-\Delta/2, \Delta/2)$ allows direct substitution into Lemma \ref{lem:anchor-conv}, yielding
\[
\mathbb{E}\!\left[
  \mathrm{Leb}\!\left(
     \bigcup_{i=1}^w (A(D)-x_i)\cap[-B,B] 
  \right)
\right]
\ge c' w^2 D. 
\]

Dividing by $B$ gives
\[
\Pr(\text{improvement from }D)
\ge
\Theta\!\left(\frac{w^2 D}{B}\right),
\qquad
D \ll B/w^2.
\]

If $D > \frac{B}{w^2}$, note that the improvement Lebesgue measure trivially grows monotonically with $D$. Hence, when $D > \frac{B}{w^2}$,
\[
\mathbb{E}\!\left[
  \mathrm{Leb}\!\left(
     \bigcup_{i=1}^w (A(D)-x_i)\cap[-B,B] 
  \right)
\right]
\ge c' w^2 \frac{B}{w^2}=c'B,
\]
which yields
\[
\Pr(\text{improvement from }D)
\ge
\Omega\!\left(1\right),
\qquad
D > B/w^2,
\]
which causes $D$ to shrink geometrically until $D \ll \frac{B}{w^2}$.

\paragraph{Discrete recursion and transition.}
In the saturated subregime (when $D \gtrsim B/w^2$), the probability of improvement is $\Omega(1)$, causing the gap to decay geometrically by at least a constant factor per step. After at most $T_{sat} = O(\log w)$ steps, the process enters the small-gap regime where $D_{T_{sat}} \ll B/w^2$. 

Once in the small-gap regime, the expected gap shrinks by at least $c'\frac{w^2}{B}D^2$ per step. We define our discrete recurrence constant as $k = c'\frac{w^2}{B}$. Because $D_{T_{sat}} \ll B/w^2$, the entry condition $k D_{T_{sat}} < 1$ is strictly satisfied. By substituting this $k$ into Lemma \ref{lem:discrete-upper-single} and applying to the stochastic version with Lemma \ref{lem:stochastic-leq-discrete}, we obtain an upper bound on the error of:
\[
D_t \;\le\; O\!\left(\frac{B}{w^2 (t - T_{sat})}\right) \;=\; O\!\left(\frac{B}{w^2 t}\right).
\]

\begin{theorem}[Phase~B inverse-quadratic expected error decay]
\label{thm:phaseB-decay}
Consider Phase~B of the MITM beam search with beam width $w$ and anchors
constructed by Phase~A. Assume the burn--in conditions hold, i.e.,
after an initial burn--in of
\[
t_0 = O\!\left(\frac{\min Z}{B}\right) + O(\log w)
\]
steps, every anchor Voronoi cell is occupied by at least one beam element.

Let $D_t$ denote the minimum distance of the Phase~B beam to the residual
anchor set $Z$ at time $t \ge t_0$. Then there exists an absolute constant
$c>0$ such that for all $t \ge t_0$,
\[
\mathbb{E}[D_t] \;\le\; \frac{c\,B}{w^2\,(t-t_0+1)}.
\]

Equivalently, up to constant factors,
\[
\mathbb{E}[D_t] \;=\; O\!\left(\frac{B}{w^2\,t}\right).
\]
Moreover, the standard deviation satisfies
\[
\operatorname{sd}(D_t) \;=\; \Theta\!\left(\mathbb{E}[D_t]\right),
\]
with the variance bound proven in Appendix~A.
\end{theorem}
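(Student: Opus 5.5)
The proof assembles the pieces already established in the Phase~B analysis. We condition on the post-burn-in event of Lemma~\ref{lem:burnin-summary}, so every anchor Voronoi cell is occupied at time $t_0$. Then $D_{t_0}\le \Delta/2 = B/(2w)$, since each occupied cell has an element within half a spacing of its anchor. From time $t_0$ on, the post-hit rule keeps the best representative per cell, so $D_t$ is non-increasing. Hence conditioning on the past and bounding the one-step drift is legitimate.

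\textbf{Step 1: drift bound.} Under Assumption~5 the beam elements have the form $\widetilde x_i$ used just before this theorem, so Lemma~\ref{lem:anchor-conv} gives
\[
\Pr(\text{improvement}\mid D_t=D)\ge c_1\min\{1,w^2D/B\}.
\]
Lemma~\ref{lem:uniform-improvement} says that, conditional on improvement, the new gap has mean at most $D/2$. Combining the two,
\[
\mathbb{E}[D_{t+1}\mid D_t=D]\le D-\tfrac{c_1}{2}\min\{D,\,w^2D^2/B\}.
\]

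\textbf{Step 2: saturated regime.} While $D\gtrsim B/w^2$ the drift gives geometric contraction by a factor $1-c_1/2$ per step. So after $T_{sat}=O(\log w)$ steps, $\mathbb{E}[D]$ has dropped from $B/w$ to below $B/w^2$. These $O(\log w)$ steps can be folded into $t_0$, or into the constant $c$, because during them the claimed bound $cB/(w^2(t-t_0+1))$ is already implied. Indeed, $(1-c_1/2)^{s}B/w\le cB/(w^2(s+1))$ holds for $s\le T_{sat}$ once $c$ is large enough, though one must check this does not cost a stray factor of $w$.

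\textbf{Step 3: small-gap regime.} For $D\lesssim B/w^2$ the drift reads $\mathbb{E}[D_{t+1}\mid D_t]\le D_t-kD_t^2$ with $k=c_1w^2/(2B)$. Here $kD\le c_1/2<1/2$, which is the stability condition required by Lemma~\ref{lem:stochastic-leq-discrete}. That lemma (Jensen plus the monotonicity induction) transfers the bound to $y_t=\mathbb{E}[D_t]$. Lemma~\ref{lem:discrete-upper-single} then gives
\[
y_t\le \frac{1}{1/y_{T}+k(t-T)},
\]
which is at most $cB/(w^2(t-t_0+1))$ after absorbing $T_{sat}$.

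\textbf{Main obstacle.} The delicate point is joining the two regimes. The $\min$ in the drift is not of the pure form $D-kD^2$. Jensen requires the convex branch everywhere, but on $D>B/w^2$ the bound $\min\{D,w^2D^2/B\}=D$ is linear. I would fix this by showing that $g(D)=D-\tfrac{c_1}{2}\min\{D,w^2D^2/B\}$ is dominated by a concave increasing envelope. Then $\mathbb{E}[g(D_t)]\le g(\mathbb{E}D_t)$ replaces the Jensen step.

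\textbf{Standard deviation.} For $\operatorname{sd}(D_t)=\Theta(\mathbb{E}D_t)$, I would run the same recursion on the second moment, deferred to Appendix~A as the statement indicates. The upper bound uses $\mathbb{E}[D'^2\mid\text{improve}]\le D^2/3$, which gives $\mathbb{E}[D_{t+1}^2\mid D_t]\le D_t^2-k'D_t^3$ and hence $\mathbb{E}D_t^2=O(B^2/(w^4t^2))$. The lower bound requires matching upper bounds on the improvement probability ($O(w^2D/B)$, by a union bound over the $w^2$ intervals). With those, $D_t$ is of order $1/t$ with a non-degenerate, roughly exponential-type spread, giving $\operatorname{Var}=\Omega((\mathbb{E}D_t)^2)$.
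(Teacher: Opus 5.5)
Your argument for the mean is essentially the paper's. It combines the drift from Lemma~\ref{lem:anchor-conv} and Lemma~\ref{lem:uniform-improvement}, a geometric saturated phase of $O(\log w)$ steps, and then Lemma~\ref{lem:stochastic-leq-discrete} with Lemma~\ref{lem:discrete-upper-single}.

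Your ``main obstacle'' is real, and the paper passes over it. After $T_{sat}$ steps only $\mathbb{E}[D]$, not $D$ pathwise, is below $B/w^2$, and $D-kD^2$ overstates the drift whenever $D>B/w^2$. Your envelope fix works:
\begin{itemize}
\item With $a=w^2/B$, the greatest convex minorant of $\min\{D,aD^2\}$ is $\phi(D)=aD^2$ on $[0,1/(2a)]$ and $D-1/(4a)$ beyond.
\item So $G=D-\tfrac{c_1}{2}\phi$ is concave and increasing, and $\mathbb{E}[D_{t+1}]\le\mathbb{E}[G(D_t)]\le G(\mathbb{E}D_t)$. You need $G$, not $g$, on the right.
\item This single recursion contracts geometrically to below $B/(2w^2)$ and is exactly $y-ky^2$ from then on, so it covers both regimes at once.
\end{itemize}

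In Step~2, absorbing the saturated steps into $c$ does cost the factor you feared. At $s=0$ the inequality reads $B/w\le cB/w^2$, and at $s=T_{sat}$ it still needs $c\gtrsim\log w$. The only workable option is to shift $t_0$ by $T_{sat}=O(\log w)$. The $O(\log w)$ slack in $t_0$, together with filling monotonicity, allows this. The paper's step $O(B/(w^2(t-T_{sat})))=O(B/(w^2t))$ tacitly does the same.

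A minor point: with every other bucket deleted, anchor spacing lies in $[\Delta,3\Delta]$. So you only get $D_{t_0}\le\tfrac32\Delta$, not $\Delta/2$, but that suffices.

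For the spread you take a different route from Appendix~A. The paper linearizes $Y_t=1/D_t$, sums conditional variances, and maps back by a delta-method step. You instead run a recursion on $\mathbb{E}[D_t^2]$, using $\mathbb{E}[D'^2\mid\text{improve}]\le D^2/3$ and Jensen for $x^{3/2}$. Yours is the sturdier of the two. Under the uniform reset of Lemma~\ref{lem:uniform-improvement}, the jump is of order $D_t$ and $\mathbb{E}[1/D_{t+1}\mid\text{improve}]=\infty$, so the second-order expansion of $1/D$ is only heuristic. Your recursion, after the same envelope treatment of the $\min$, gives $\mathbb{E}[D_t^2]=O(B^2/(w^4t^2))$ directly, with no transform to undo.

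The genuine gap is the two-sided claim $\operatorname{sd}(D_t)=\Theta(\mathbb{E}D_t)$: both routes only bound the spread from above.

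The $O$ direction additionally needs $\mathbb{E}D_t=\Omega(B/(w^2t))$. A union bound of the kind you mention gives this. In one step $D$ can drop below $x$ only if the shift lands in one of at most $w^2$ intervals of radius $x$. Hence $\Pr(D_t<x)\le\Pr(D_{t_0}<x)+(t-t_0)w^2x/B$, and you take $x\asymp B/(w^2(t-t_0+1))$, provided $D_{t_0}$ is not already atypically small.

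The $\Omega$ direction needs an actual anti-concentration argument, and ``non-degenerate, roughly exponential-type spread'' is not one. Appendix~A does not supply one either; it proves only $\operatorname{Var}(D_T)=O(B^2/(w^4T^2))$. A plausible route is the law of total variance over the last $\Theta(t)$ steps. Each step contributes $\Theta(w^2D_s^3/B)$, for a total of $\Theta(B^2/(w^4t^2))$. The missing step is showing that $\mathbb{E}[D_t\mid\mathcal{F}_s]$ increases in $D_s$ with slope bounded below for $s\in[t/2,t]$.
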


\begin{proof}[Proof sketch]
After burn--in, every anchor Voronoi cell is occupied. By
Lemma~\ref{lem:anchor-conv}, the probability of improvement from distance
$D_t$ is $\Theta(w^2 D_t/B)$ in the small--gap regime, and bounded below
by a constant when $D_t \gtrsim B/w^2$.

Conditional on improvement, Lemma~\ref{lem:uniform-improvement} shows the
new gap is uniformly distributed on $[0,D_t]$, yielding
\[
\mathbb{E}[D_{t+1}-D_t \mid D_t]
\;=\; -\Theta\!\left(\frac{w^2}{B}\right) D_t^2.
\]
Applying the discrete bounds from Lemma~\ref{lem:discrete-upper-single} via the stochastic comparison in Lemma~\ref{lem:stochastic-leq-discrete} implies
\[
D_t \;\le\; O\!\left(\frac{B}{w^2\,t}\right),
\]
and the stated variance scaling follows from Appendix~A.
\end{proof}

\begin{theorem}[End--to--end MITM beam guarantee]
\label{thm:mitm-main}
Let $S=\{s_1,\dots,s_n\}$ be i.i.d.\ samples from a symmetric distribution
supported on $[-B,B]$, and let $T$ be a target generated as the sum of a subset of $S$. Fix beam width $w$.

Run the MITM beam search with:
\begin{itemize}[leftmargin=*]
\item Phase~A using $n_L = \Theta(\log w + \log(1/\delta))$ elements to
construct anchors, and
\item Phase~B on the remaining $n_R = n-n_L$ elements.
\end{itemize}

\paragraph{Burn-in duration and worst-case targets.}
The total number of elements consumed before Phase~B is $n_{pre} = n_L + t_{burn}$, which evaluates to:
\[ n_{pre} = O(\log w) + O(\log(1/\delta)) + \frac{4\min Z}{B} + O\!\left(\sqrt{\frac{\min Z}{B}}\right) = \frac{4\min Z}{B} + o(n). \]
Because $\min Z \le T$, the survival of Phase~B depends entirely on the magnitude of the target $T$.  However, we now prove that $n-n_{pre}=\Theta(n)$ , unless the instance is solvable in subexponential time.

Let $S_{max} = \sum_{s_i > 0} s_i$ denote the maximum possible subset sum. Because $s_i \sim U([-B, B])$, the expected value of $S_{max}$ is $nB/4$. By Hoeffding's inequality (or CLT), $S_{max}$ concentrates tightly, so $S_{max} = \frac{nB}{4} \pm o(nB)$ with overwhelming probability. Since $T$ is a valid subset sum, $T \le S_{max}$. We divide the target space into two regimes based on an arbitrarily small constant $\gamma \in (0, 1/4)$:

\textbf{Case 1: Strictly bounded targets ($T \le (1/4 - \gamma)nB$).} 
In this regime, the elements consumed before Phase~B are strictly bounded by $n_{pre} \le (1 - 4\gamma)n + o(n)$. This leaves $n_R \ge 4\gamma n - o(n) = \Theta(n)$ elements for Phase~B. Because the number of available Phase~B steps remains linear with respect to $n$, the expected error decay strictly preserves the asymptotic bound of $O\!\left(\frac{B}{n_R w^2}\right) = O\!\left(\frac{B}{n w^2}\right)$, absorbing the constant $\frac{1}{4\gamma}$ into the asymptotic notation.

\textbf{Case 2: Extreme maximum targets ($T > (1/4 - \gamma)nB$).} 
If the target is chosen adversarially close to $S_{max}$, the burn-in may consume $n_{pre} = n - o(n)$ elements, starving Phase~B. However, in this regime, the problem undergoes a severe state-space collapse. Define the residual slack as $\epsilon = S_{max} - T$. Because $T$ is extremely large, $\epsilon = o(nB)$. 

To achieve a sum of $T$, the optimal subset must be formed by taking the maximal configuration and making deviations (either excluding a positive element or including a negative element). Every such deviation consumes a portion of the slack $\epsilon$ equal to the magnitude of the element. Consider the set of "large" elements, $L = \{ s_i \in S \mid |s_i| \ge B/2 \}$. The expected size of $L$ is exactly $n/2 = \Theta(n)$. 

Because each deviation in $L$ consumes at least $B/2$ slack, the maximum number of deviations we can make among the large elements is bounded by:
\[ k_{deviations} \le \frac{\epsilon}{B/2} = \frac{o(nB)}{B/2} = o(n). \]
Therefore, out of the $\Theta(n)$ large elements, at most $o(n)$ can deviate from the maximal configuration. This guarantees that the inclusion/exclusion status of $\Theta(n) - o(n) = \Theta(n)$ elements is deterministically forced. The effective number of undecided elements drops to $o(n)$, allowing the residual problem to be solved exactly via brute-force enumeration or standard dynamic programming in subexponential time. 

Thus, across all valid targets, the algorithm either achieves the $O\!\left(\frac{B}{nw^2}\right)$ error bound via the beam search decay, or the instance collapses into a trivially solvable subexponential state.

Hence, with probability at least $1-\delta$,
\[
\mathbb{E}\bigl[\,|S^\star - T|\,\bigr]
\;=\;
O\!\left(\frac{B}{n\,w^2}\right),
\qquad
\operatorname{sd}\bigl(|S^\star-T|\bigr)
=
\Theta\!\left(\mathbb{E}|S^\star-T|\right).
\]

The algorithm runs in $O(nw\log w)$ time and $O(w)$ memory. Exact subset
reconstruction is supported in $O(nw)$ time using $O(w\sqrt n)$ memory via
checkpointing.
\end{theorem}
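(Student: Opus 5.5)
The proof assembles results already established in the paper rather than proving anything new. I would carry it out in three stages: the Phase~A guarantee, the Phase~B decay together with the case split on $T$, and a complexity accounting.

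\emph{Phase~A.} Invoke Theorem~\ref{lem:phaseA-all-whp} with $n_L = j_\delta = \Theta(\log w + \log(1/\delta))$. With probability at least $1-\delta$ this fills all $w$ buckets. After deleting even-indexed anchors, the anchor set has spacing $\Theta(B/w)$. By Lemma~\ref{lem:bucket-uniform}, each anchor is uniform within its bucket, which is exactly the anchor model required by Lemma~\ref{lem:anchor-conv}. All later statements are conditioned on this event, and that conditioning is the source of the ``with probability at least $1-\delta$'' qualifier.

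\emph{Phase~B and the target split.} Lemma~\ref{lem:burnin-summary} bounds the burn-in at $t_{\mathrm{burn}} = O(\min Z/B)+O(\log w)$. To pin down the leading constant $4\min Z/B$, I would use the pre-hit dynamics: the beam keeps the largest sums, so each step adds roughly $\mathbb{E}[\max(s,0)] = B/4$. A CLT or Hoeffding fluctuation of the first-passage time then contributes the $O(\sqrt{\min Z/B})$ term.

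Next, show $S_{\max} = nB/4 \pm o(nB)$ by Hoeffding applied to the bounded i.i.d.\ variables $\max(s_i,0)$. Then split on $\gamma$:
\begin{itemize}
\item \textbf{Case~1} ($T \le (1/4-\gamma)nB$): here $\min Z \le T + O(B)$, so $n_{pre} \le (1-4\gamma)n + o(n)$ and $n_R = \Theta(n)$ steps remain. Apply Theorem~\ref{thm:phaseB-decay} with $t - t_0 + 1 = \Theta(n)$ to obtain $\mathbb{E}[D] = O(B/(nw^2))$ and $\operatorname{sd} = \Theta(\mathbb{E}[D])$.
\item \textbf{Case~2} ($T > (1/4-\gamma)nB$): the slack $\epsilon = S_{\max} - T$ is small. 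Each deviation among the large elements $L$ costs at least $B/2$, so at most $2\epsilon/B$ of them can deviate. This forces the status of $|L| - o(n)$ elements, and the remainder can be enumerated in subexponential time.
\end{itemize}
Since the final error is $\min_{x\in\mathcal W} d(x) = D_t$, Case~1 transfers the decay bound directly to $|S^\star - T|$.

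\emph{Complexity.} Each step sorts or buckets $O(w)$ candidates. Computing $d(x)$ against the sorted residual set costs $O(\log w)$ per element by binary search, giving $O(nw\log w)$ time and $O(w)$ memory. Reconstruction follows from the checkpointing scheme of Algorithms~\ref{alg:beam-forward-checkpoints}--\ref{alg:beam-reconstruct-checkpoints} with $m = \sqrt n$.

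\emph{Main obstacle: the Case~2 threshold.} For Case~2 to give $\epsilon = o(nB)$, $\gamma$ must shrink with $n$ rather than stay fixed. Otherwise $\epsilon$ is only $\gamma nB$, and the number of forced elements is linear but the number of undecided ones is also linear. I would therefore let $\gamma = \gamma_n \to 0$ slowly. Case~1 then yields $O(B/(\gamma_n n w^2))$, so I would choose $\gamma_n \to 0$ slowly enough that its effect is absorbed as a slowly growing factor. Failing that, I would state Case~1 with the explicit $1/(4\gamma)$ constant and restrict Case~2 to $\epsilon = o(nB)$ genuinely. A related point to check is that the Markov-based $\delta$ and the expectation claims do not conflict: the expectation should be read as conditional on Phase~A success.
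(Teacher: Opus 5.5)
Your outline follows the paper's own argument (written inline in the statement) step for step. The obstacle you flag is genuine, and it is equally a flaw in the paper's argument. For fixed $\gamma$, the condition $T>(1/4-\gamma)nB$ gives only $\epsilon<\gamma nB+o(nB)$. So up to $2\gamma n$ elements of $L$ may deviate, and there are $2^{\Theta(n)}$ deviation patterns.

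Be clear about what your remedy delivers, though. With $\gamma_n\to 0$, Case~1 yields only $O(B/(\gamma_n n w^2))$, and within this analysis the loss is real. A subset-sum target with slack $\epsilon\approx n^{0.9}B$ is reached only when the remaining positive mass drops below $\epsilon$. That leaves $\Theta(n^{0.9})$ Phase~B steps, hence only $O(B/(n^{0.9}w^2))$. So the uniform bound in the final display cannot be reached this way; you get only a non-uniform dichotomy.

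Moreover, Case~2 concerns a separate exact enumeration, not the beam output $S^\star$. The bound on $\mathbb{E}|S^\star-T|$ is therefore established only for Case~1 targets. A $2^{o(n)}$ fallback also cannot be added to the algorithm without breaking the $O(nw\log w)$ running time.

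For a clean end-to-end statement, use the paper's experimental target model. If $T$ is the sum of a uniformly random subset, it is a sum of $n$ independent, mean-zero terms bounded by $B$. Hoeffding then gives $|T|=O(B\sqrt{n\log(1/\delta)})$, so $n_{pre}=o(n)$ and Case~1 alone suffices.

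There is a second gap in Case~2 that you inherit without flagging. Even granting $\epsilon=o(nB)$, the slack argument as written constrains only $L=\{|s_i|\ge B/2\}$. The roughly $n/2$ elements with $|s_i|<B/2$ are untouched by $k_{deviations}\le 2\epsilon/B$. So ``the remainder'' still has about $2^{n/2}$ configurations, and your enumeration is not subexponential.

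The repair is to use the exact constraint. Let $\Delta$ be the deviation set: positive elements excluded plus negative elements included. Then $\sum_{i\in\Delta}|s_i|=\epsilon$, so at most $\epsilon/(\eta B)$ deviations have $|s_i|\ge \eta B$. Take $\eta_n=\sqrt{\epsilon/(nB)}\to 0$. The number of elements below $\eta_n B$ is about $\eta_n n$ under $U(-B,B)$, and the number of deviations above it is at most $\epsilon/(\eta_n B)=\eta_n n$. Both are $o(n)$, which gives $2^{o(n)}$ candidate sets.

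Finally, citing Theorem~\ref{thm:phaseB-decay} for $\operatorname{sd}=\Theta(\mathbb{E}|S^\star-T|)$ inherits a weakness: Appendix~A proves only an upper bound on the variance.
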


\section{Experimentation}
  
We use experiments to verify the claims made in the Phase A/B analysis and in Section 5.2.1, along with providing a comparison to the heuristics in the current RSSP and general literature. 

\subsection{Different Input Distributions}

To evaluate robustness beyond the uniform model used in the analysis, we run the proposed method on several i.i.d.\ input distributions with substantially different shapes (e.g., multimodal, heavy-tailed, and approximately Gaussian). Notably, we include Student's $t$-distributions with low degrees of freedom ($\nu \in \{1, 2\}$) to test extreme heavy-tailed behavior where the input variance is infinite or undefined. Across all tested distributions, the empirical error--runtime curves exhibit the same qualitative scaling behavior predicted by theory; differences are primarily in constant factors, typically within an order of magnitude. This supports the claim that the method’s decay rate is largely distribution-insensitive.

\begin{figure}[htbp]
    \centering
    \includegraphics[width=1\linewidth]{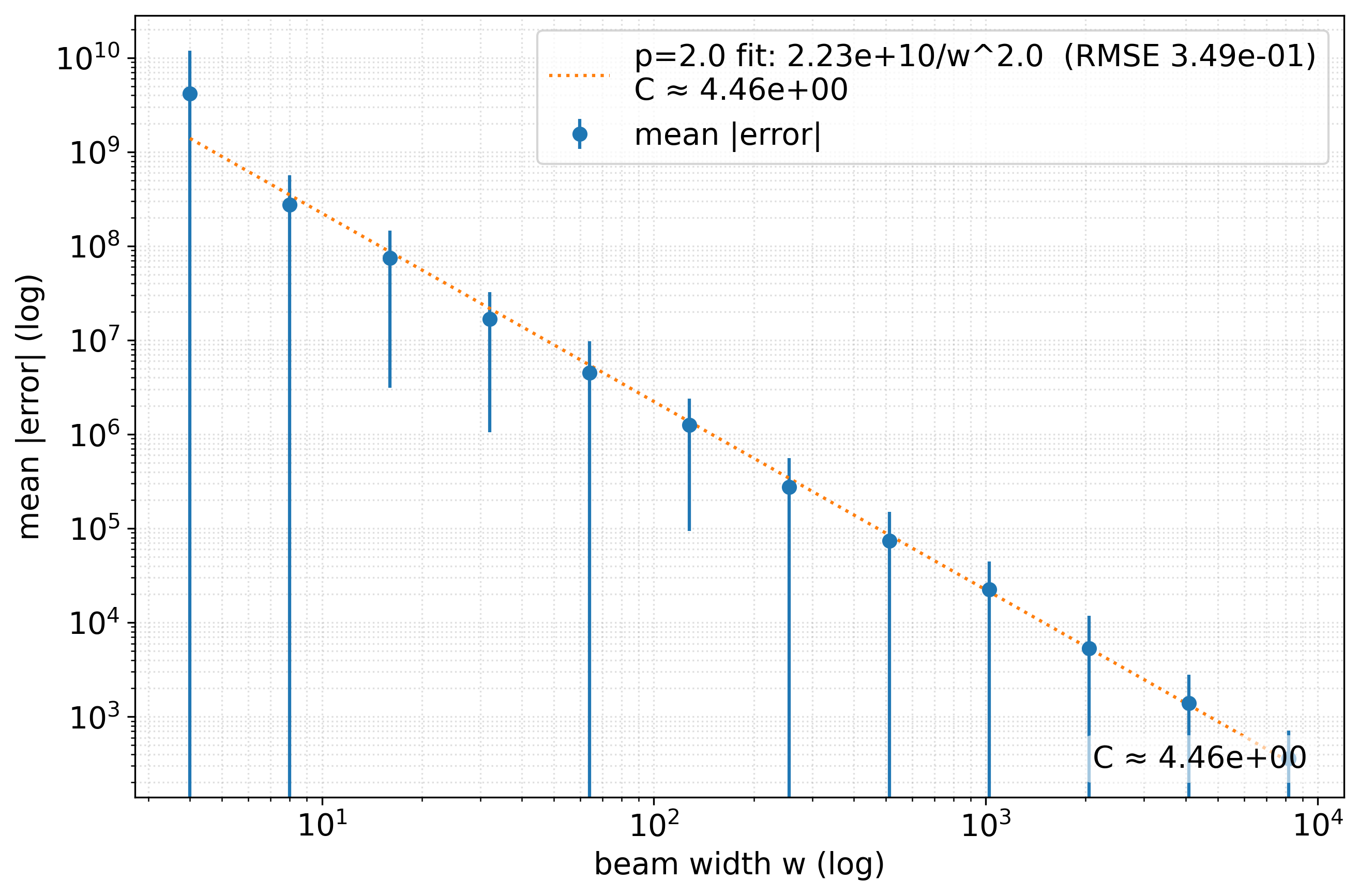}
    \caption{Performance of the proposed method on a symmetric bimodal input distribution. Error scaling remains comparable to the baseline distribution up to constant factors, indicating robustness to multimodality.}
    \label{fig:dist_bimodal_sym}
\end{figure}

\begin{figure}[htbp]
    \centering
    \includegraphics[width=1\linewidth]{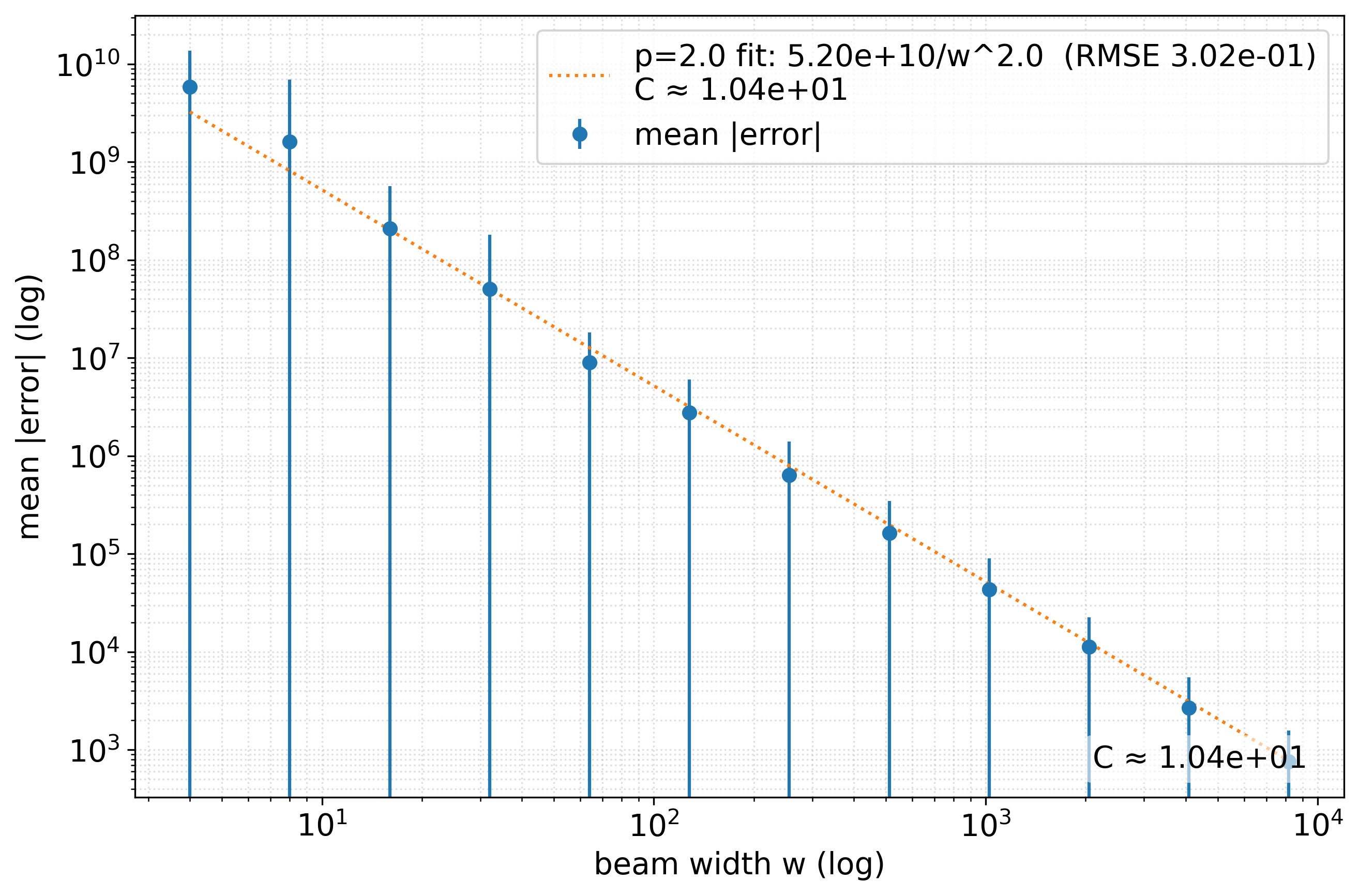}
    \caption{Performance of the proposed method on a symmetric lognormal input distribution. Despite the heavy-tailed nature of the inputs, the observed error scaling closely follows the theoretical predictions.}
    \label{fig:dist_lognormal_sym}
\end{figure}

\begin{figure}[htbp]
    \centering
    \includegraphics[width=1\linewidth]{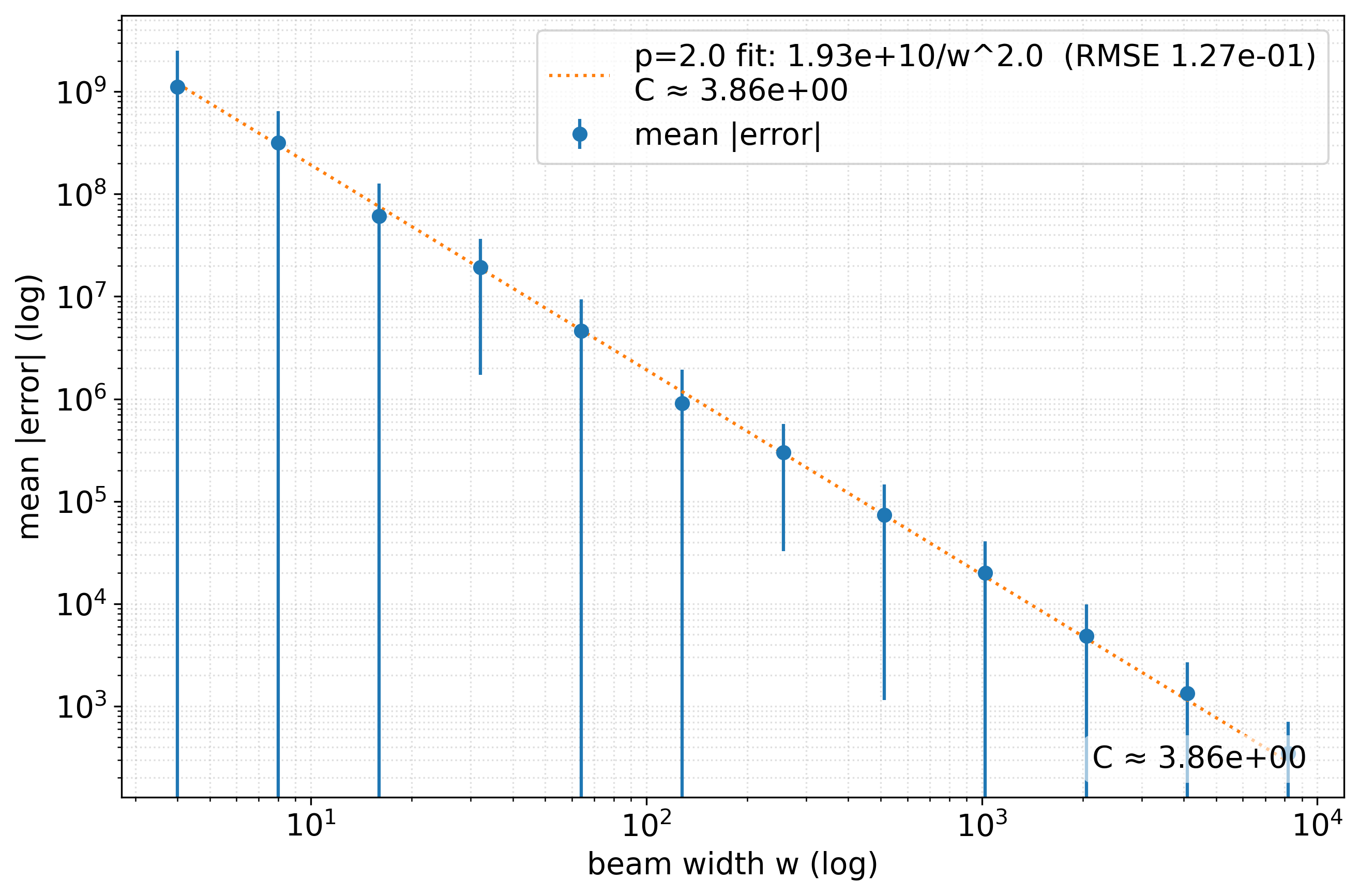}
    \caption{Performance of the proposed method on a symmetric normal input distribution. Results closely match those obtained under other distributions, suggesting weak dependence on the specific input distribution.}
    \label{fig:dist_normal_sym}
\end{figure}

\begin{figure}[htbp]
    \centering
    \includegraphics[width=1\linewidth]{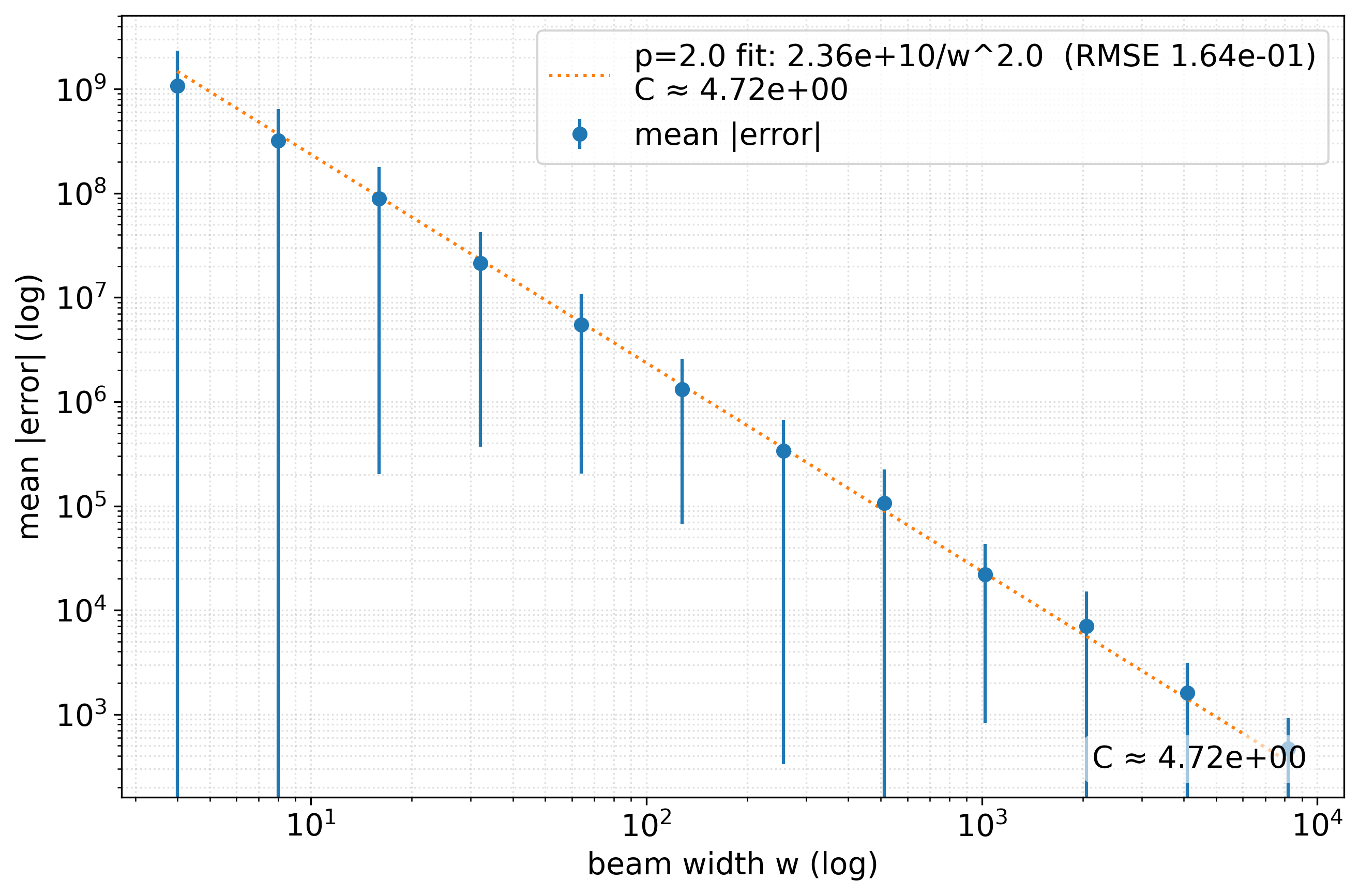}
    \caption{Performance of the proposed method on a symmetric Cauchy input distribution (Student's $t$ with $\nu=1$). Even with an undefined mean and variance, the inverse-quadratic decay rate is preserved.}
    \label{fig:dist_t_df1}
\end{figure}

\begin{figure}[htbp]
    \centering
    \includegraphics[width=1\linewidth]{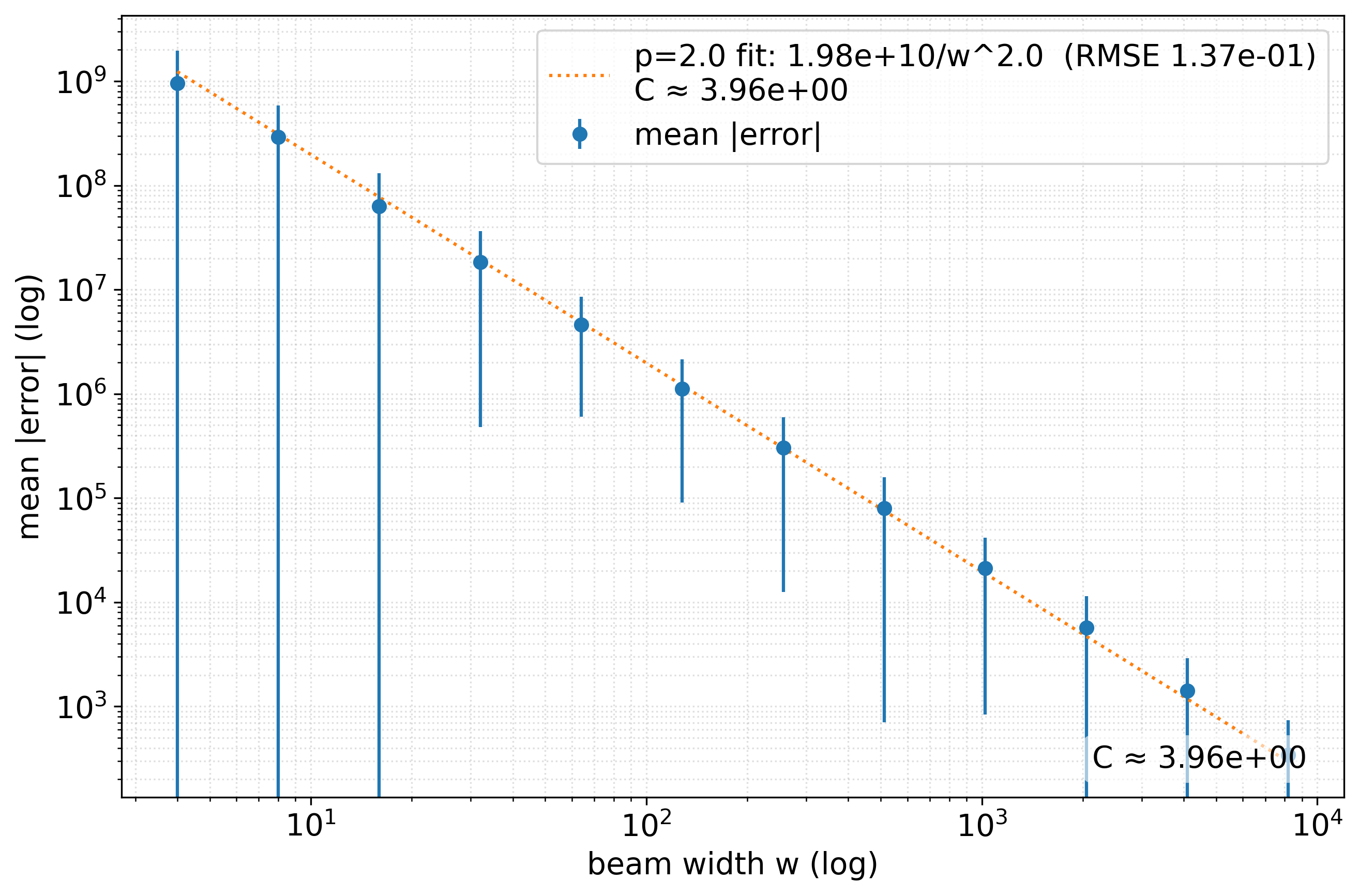}
    \caption{Performance on a symmetric Student's $t$ input distribution with $\nu=2$. The infinite variance of the inputs does not break the $O(w^{-2})$ theoretical error scaling.}
    \label{fig:dist_t_df2}
\end{figure}

\begin{figure}[htbp]
    \centering
    \includegraphics[width=1\linewidth]{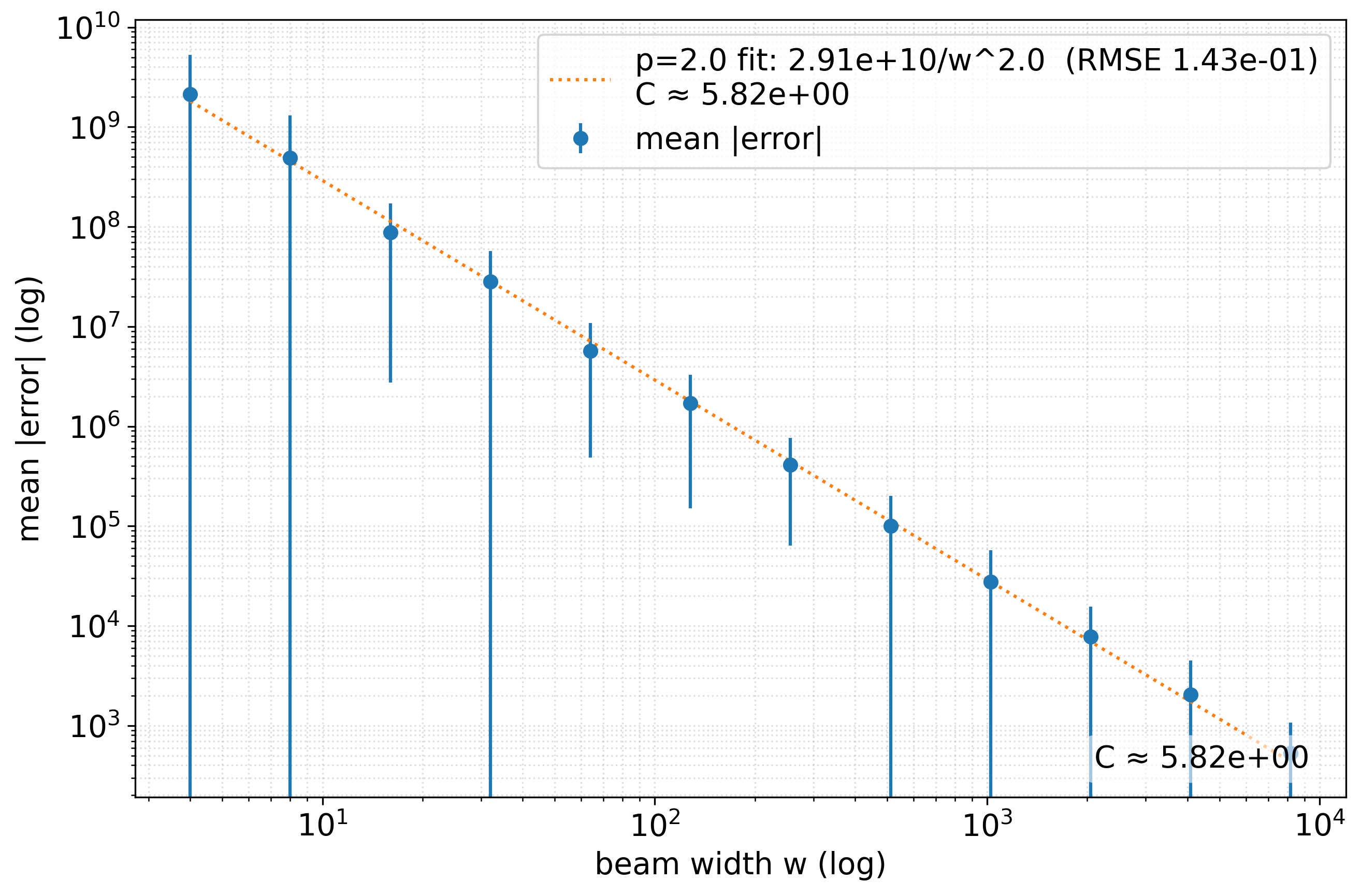}
    \caption{Performance of the proposed method on a uniform input distribution (baseline setting). Observed scaling matches the theoretical prediction up to constants.}
    \label{fig:dist_uniform_baseline}
\end{figure}

\subsection{Different Split Points}

The MITM variant introduces a design choice: how many elements to allocate to Phase~A (anchor construction) versus Phase~B (multi-target beam refinement). The analysis predicts that allocating $O(\log w)$ elements to Phase~A is sufficient to obtain an $O(B/w)$ anchor mesh while preserving enough remaining steps for Phase~B to drive the inverse-quadratic decay. We validate this prediction by comparing the theoretically motivated $O(\log w)$ split against a naive halfway split and an intentionally undersized $O(1)$ split.

\begin{figure}[htbp]
    \centering
    \includegraphics[width=1\linewidth]{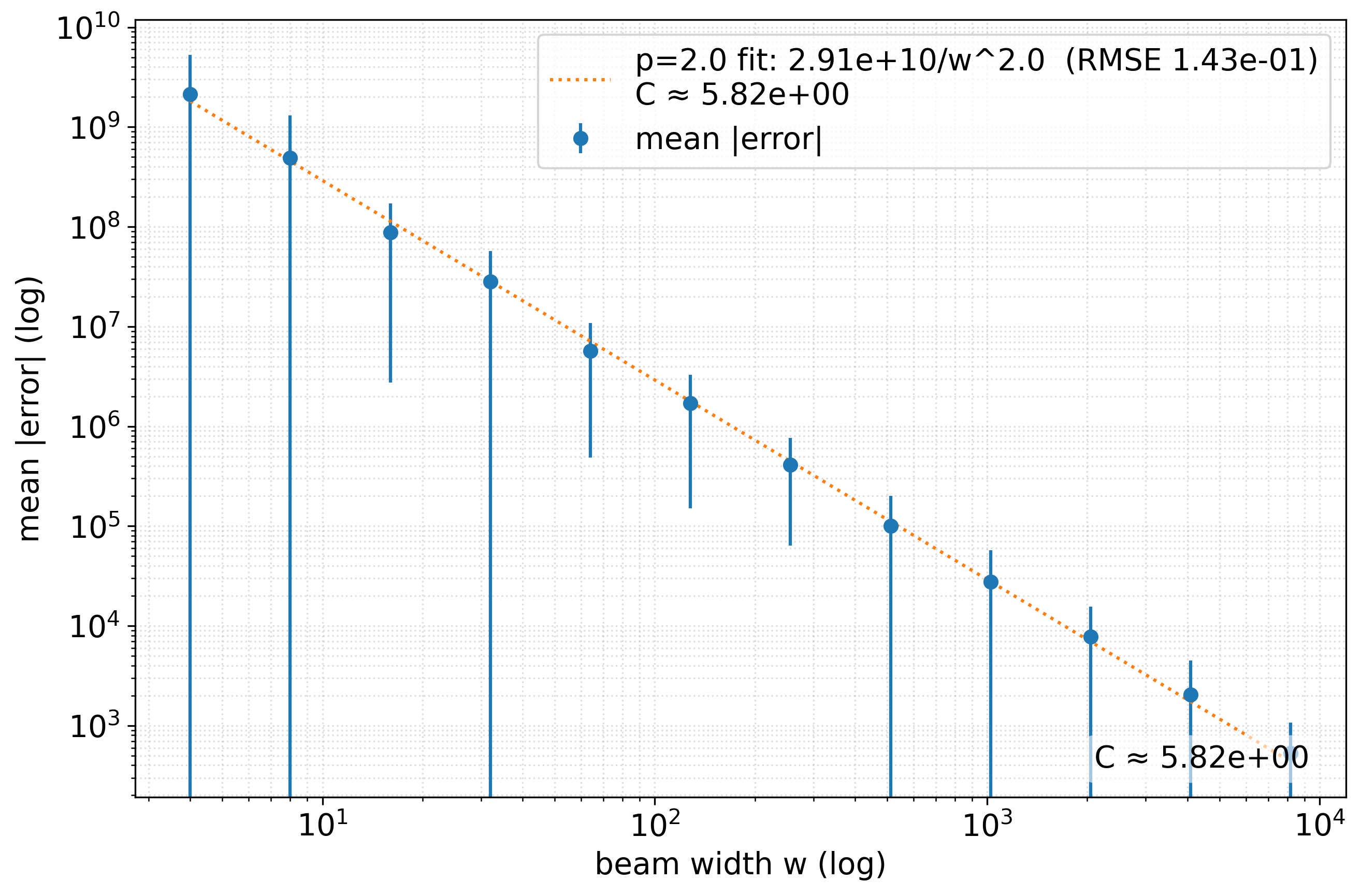}
    \caption{Effect of splitting at the theoretically motivated $O(\log w)$ point. This choice achieves the best empirical trade-off between runtime and approximation error, in line with the analytical guarantees.}
    \label{fig:split_clogw}
\end{figure}

\begin{figure}[htbp]
    \centering
    \includegraphics[width=1\linewidth]{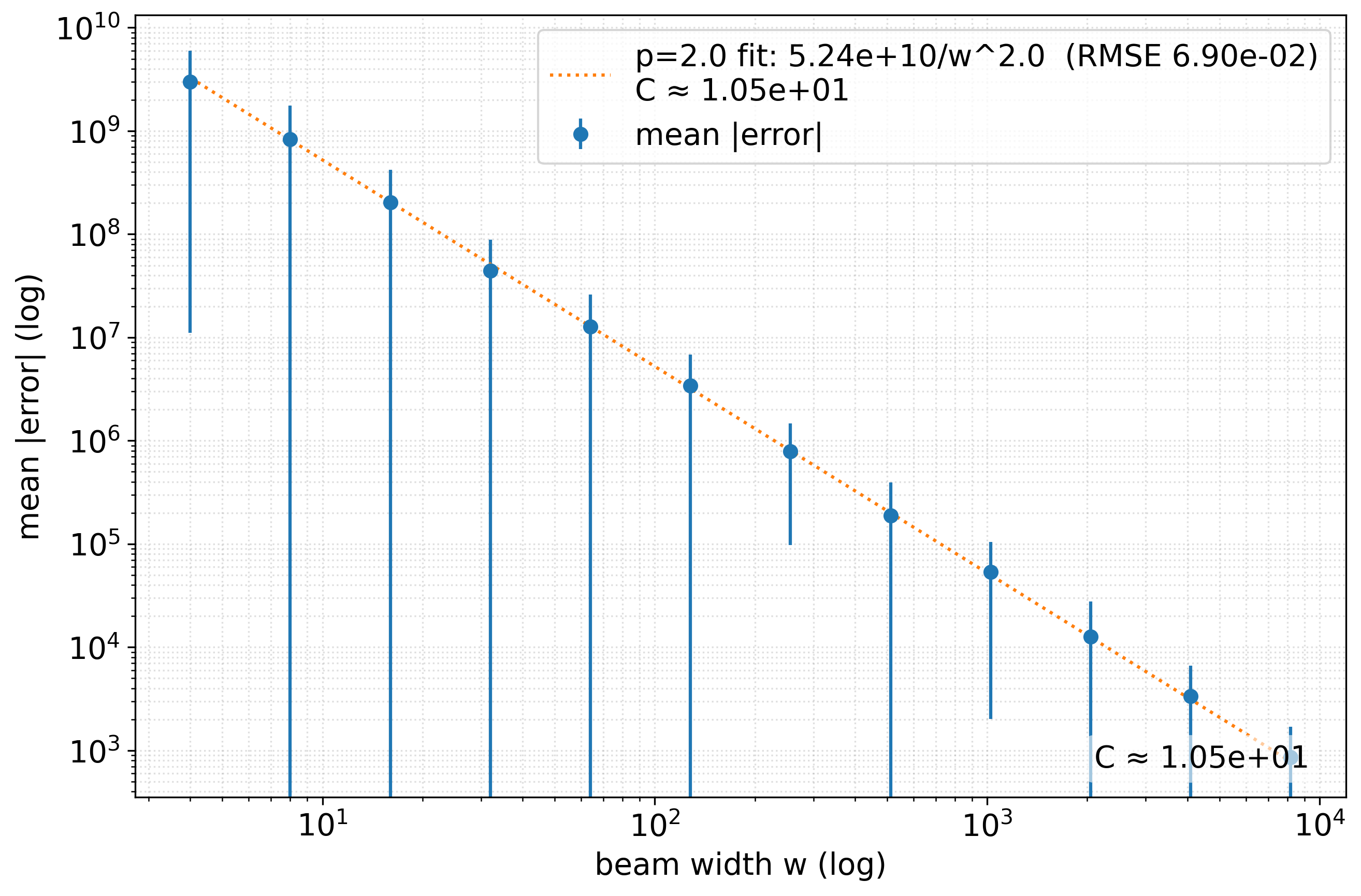}
    \caption{Effect of splitting at the halfway point. This heuristic split is competitive but typically underperforms the $O(\log w)$ split, reflecting the benefit of the theoretically guided partition.}
    \label{fig:split_half}
\end{figure}

\begin{figure}[htbp]
    \centering
    \includegraphics[width=1\linewidth]{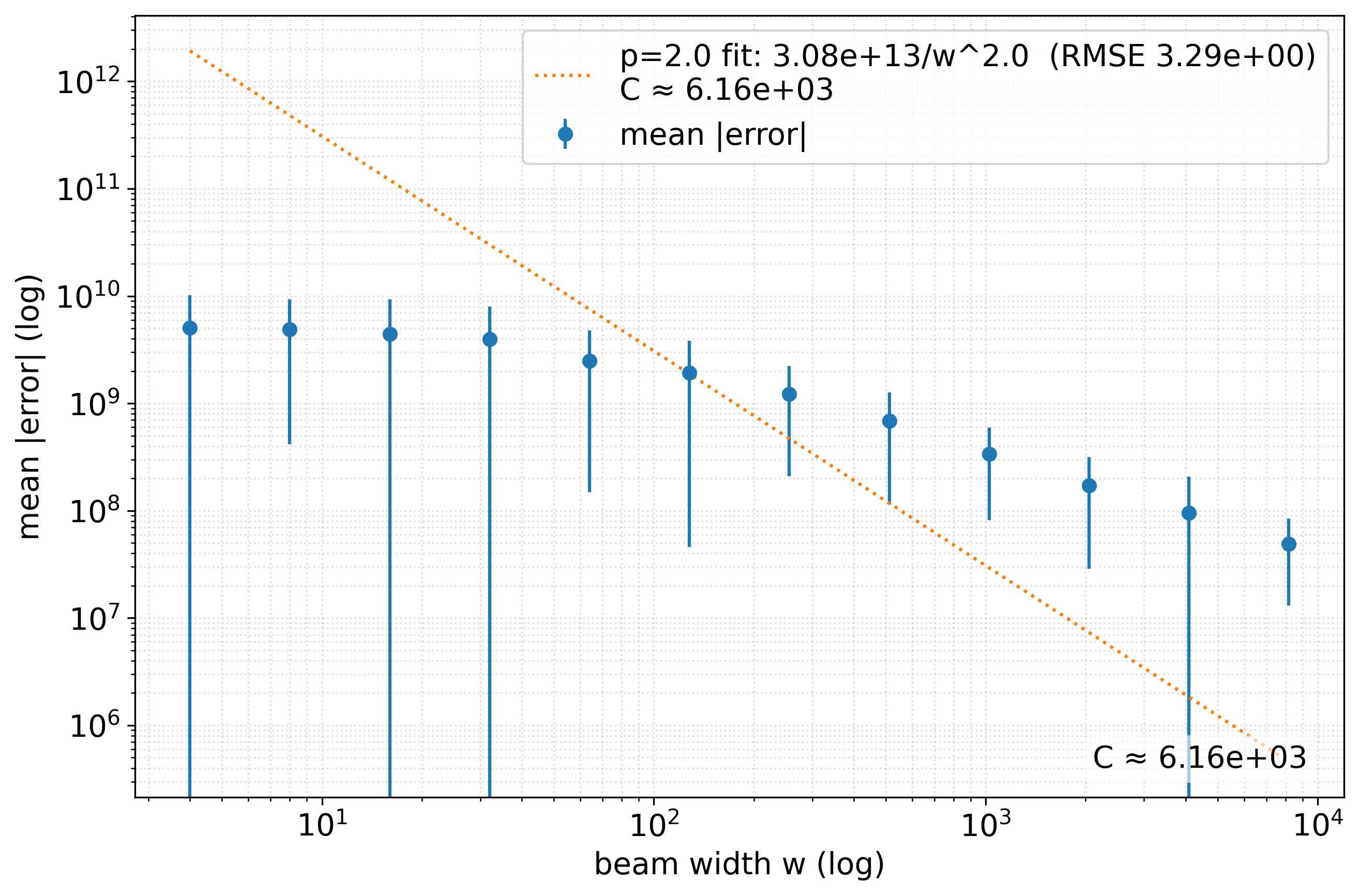}
    \caption{Failure mode when the split is too small (e.g., $O(1)$). With insufficient effort spent in Phase~A, anchors are too sparse and Phase~B cannot enter the multi-target small-gap regime; empirically the decay degrades to roughly inverse-linear scaling.}
    \label{fig:split_too_small_failure}
\end{figure}

\subsection{Dependence on $n$}

Although not essential to the paper, we analyze the effect of the amount of elements on the performance of the algorithm. 
Firstly, we analyze the "medium-$n$" regime, where $n - n_{pre} = \Theta(n)$, but the logarithmic cost of Phase A and burn-in are not negligible. In this case, systematic bias can be seen in the decay, as the burn-in cost becomes less significant. This is shown in Figure~\ref{fig:med-n-dependency}.

In the "large-$n$" regime, where $n \gg n_{pre}$, burn-in and Phase A cost is negligible, yielding a clean $1/n$ dependency, as shown in Figure~\ref{fig:large-n-dependency}.

\begin{figure}[htbp]
    \centering
    \includegraphics[width=1\linewidth]{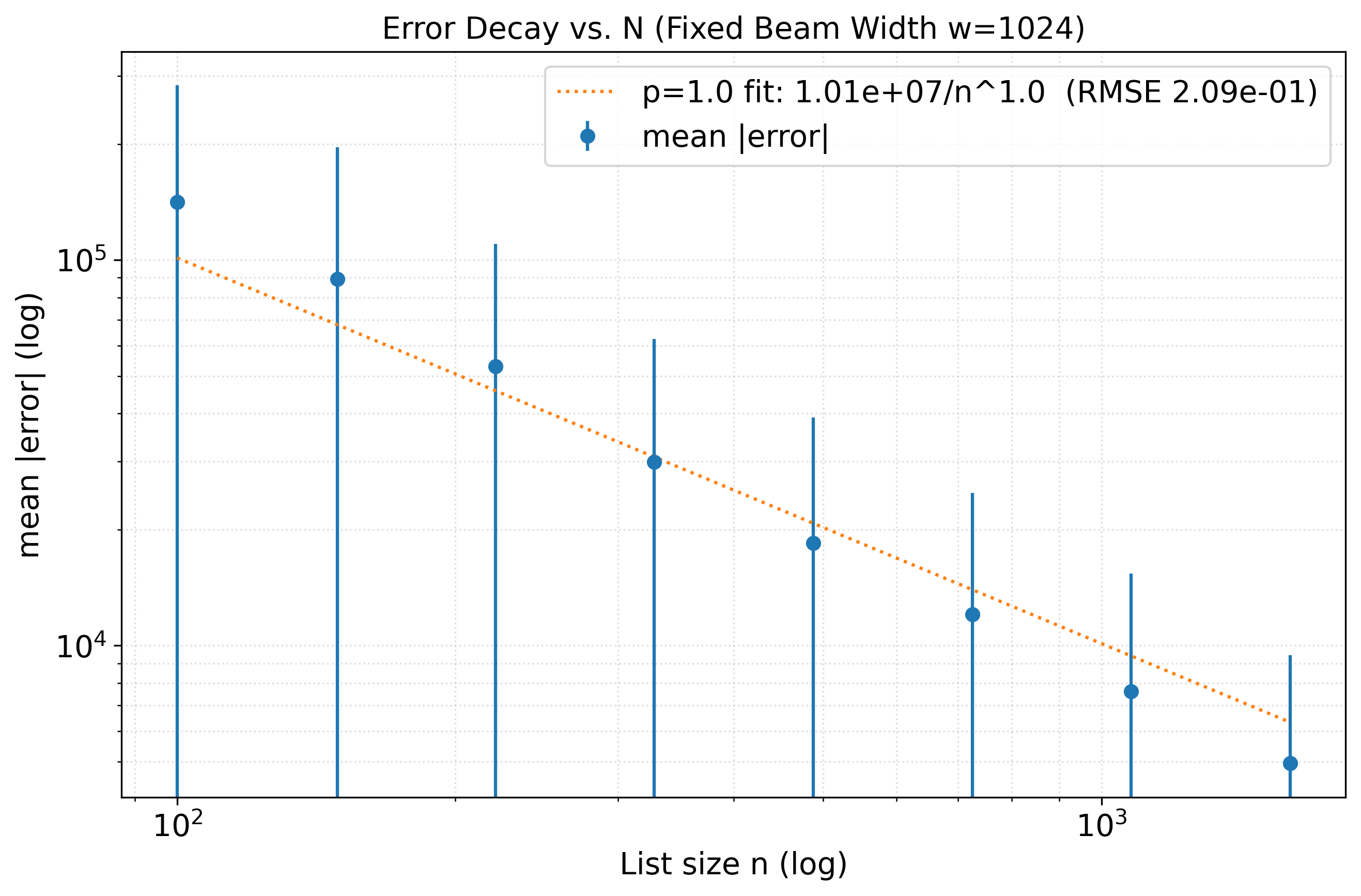}
    \caption{Effect of the number of elements ($n$) on empirical error scaling in the medium-$n$ regime. Systematic bias is visible due to non-negligible Phase A and burn-in costs.}
    \label{fig:med-n-dependency}
\end{figure}

\begin{figure}[htbp]
    \centering
    \includegraphics[width=1\linewidth]{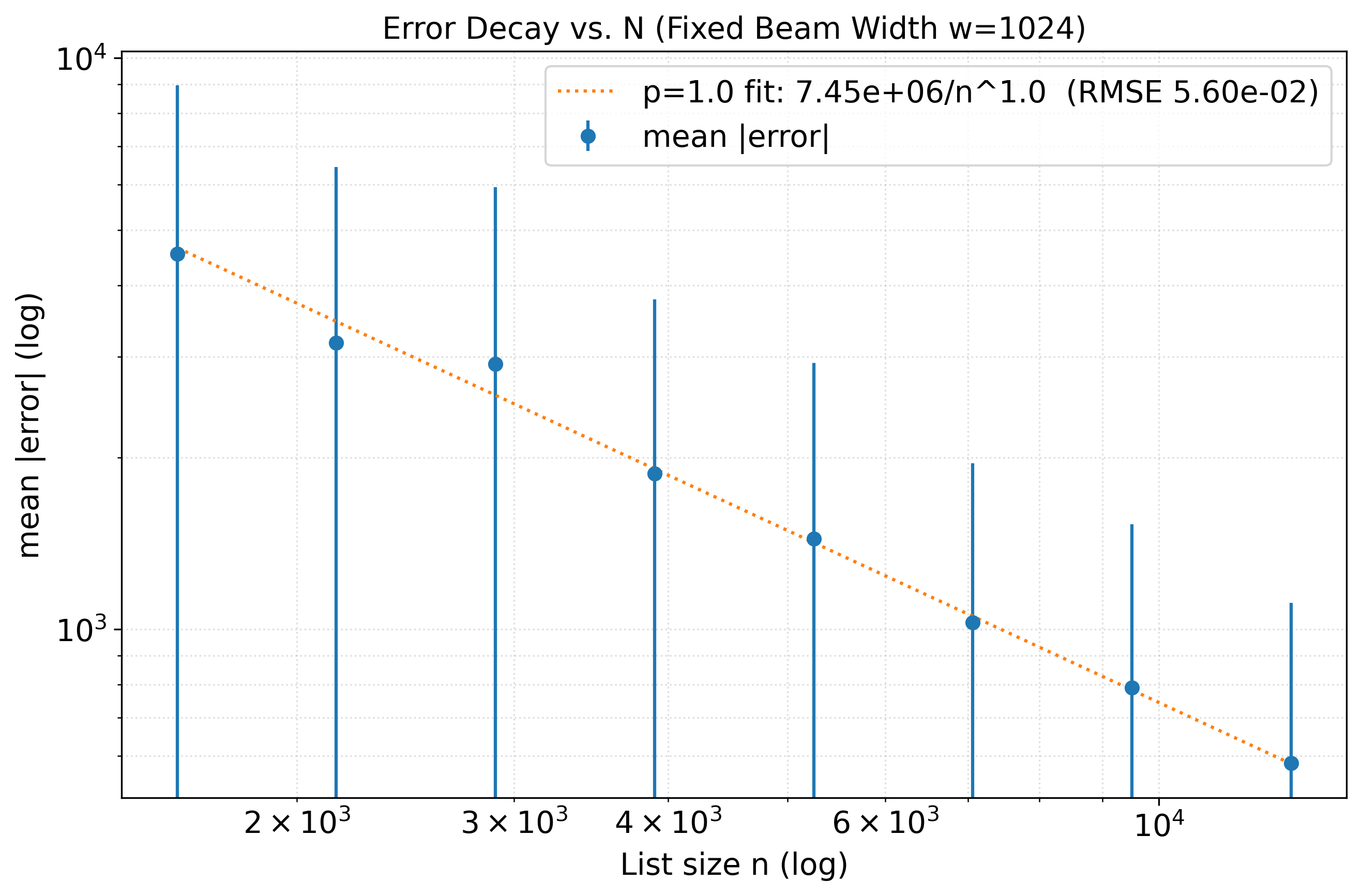}
    \caption{Effect of the number of elements ($n$) in the large-$n$ regime. The decay rate exhibits a clean $1/n$ dependency as burn-in costs become negligible.}
    \label{fig:large-n-dependency}
\end{figure}

\subsection{Ablation Studies}

Finally, we isolate the contributions of a key algorithmic component via an ablation; We replace the Phase~A bucketing rule with a simpler equi-sampling baseline.

\begin{figure}[htbp]
    \centering
    \includegraphics[width=1\linewidth]{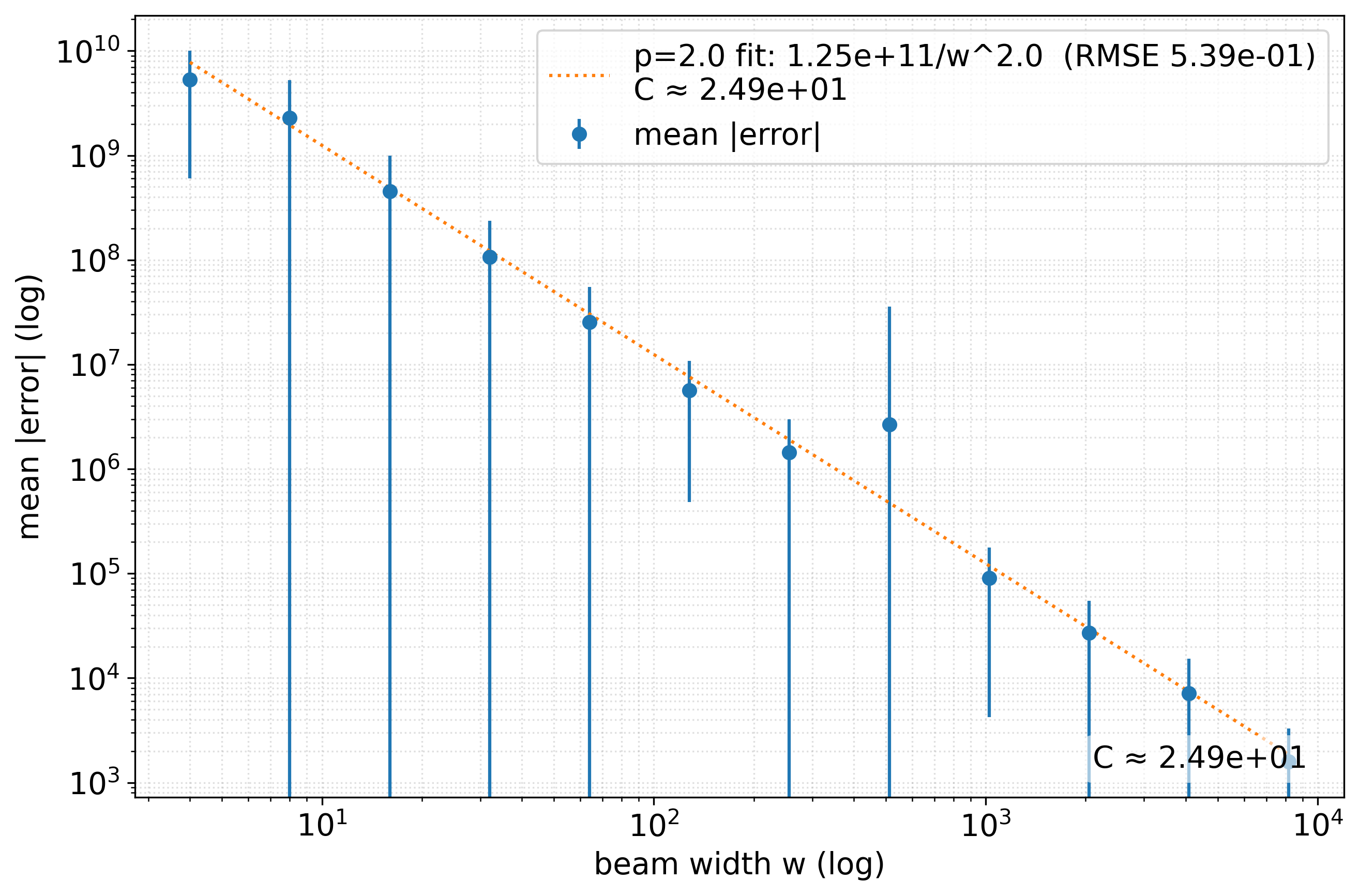}
    \caption{Ablation of Phase~A sampling: replacing the proposed bucketing rule with equi-sampling. The decay rate is preserved but the constant factor increases, consistent with reduced anchor coverage quality.}
    \label{fig:ablation_phaseA_equisample}
\end{figure}

\subsection{Tail Analysis}

This section evaluates the algorithm's performance when the target value lies in the extreme tails of the subset sum distribution.

\begin{figure}[htbp]
    \centering
    \includegraphics[width=1\linewidth]{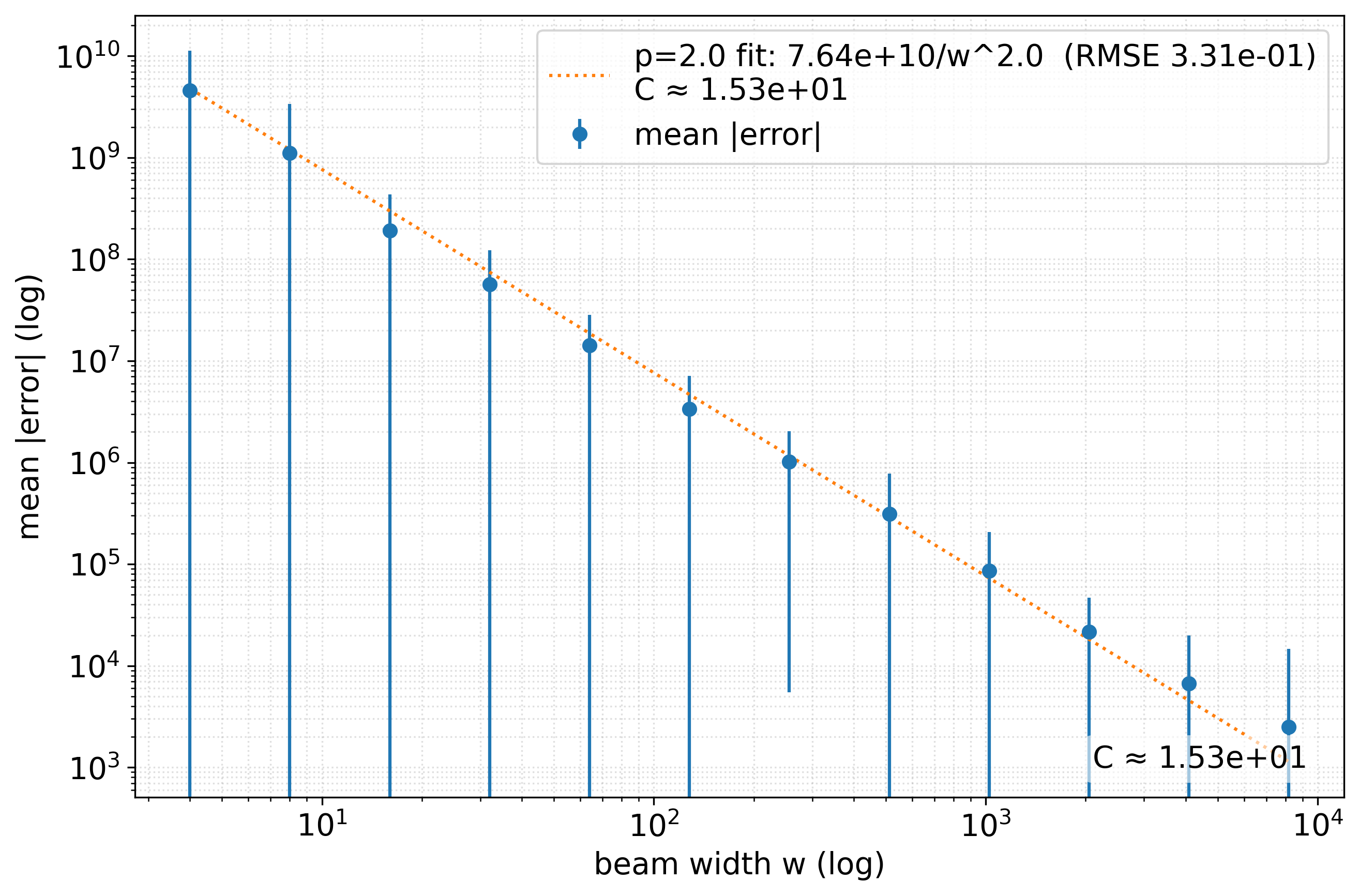}
    \caption{Performance scaling for a target moderately in the tail, $T=0.75\sum S$.}
    \label{fig:almost-tail}
\end{figure}

Figure \ref{fig:almost-tail} illustrates the algorithmic behavior when the target is set to $T=0.75\sum S$ (prior to symmetrization). In this regime, the inverse-quadratic error decay rate is preserved; however, the leading constant factor degrades significantly. Furthermore, systematic drift becomes more pronounced as the beam width $w$ increases. This drift occurs because the number of steps allocated to Phase B, defined as $n - n_{pre}$, exhibits an $O(\log w)$ dependency, which reduces the effective search depth.

\begin{figure}[htbp]
    \centering
    \includegraphics[width=1\linewidth]{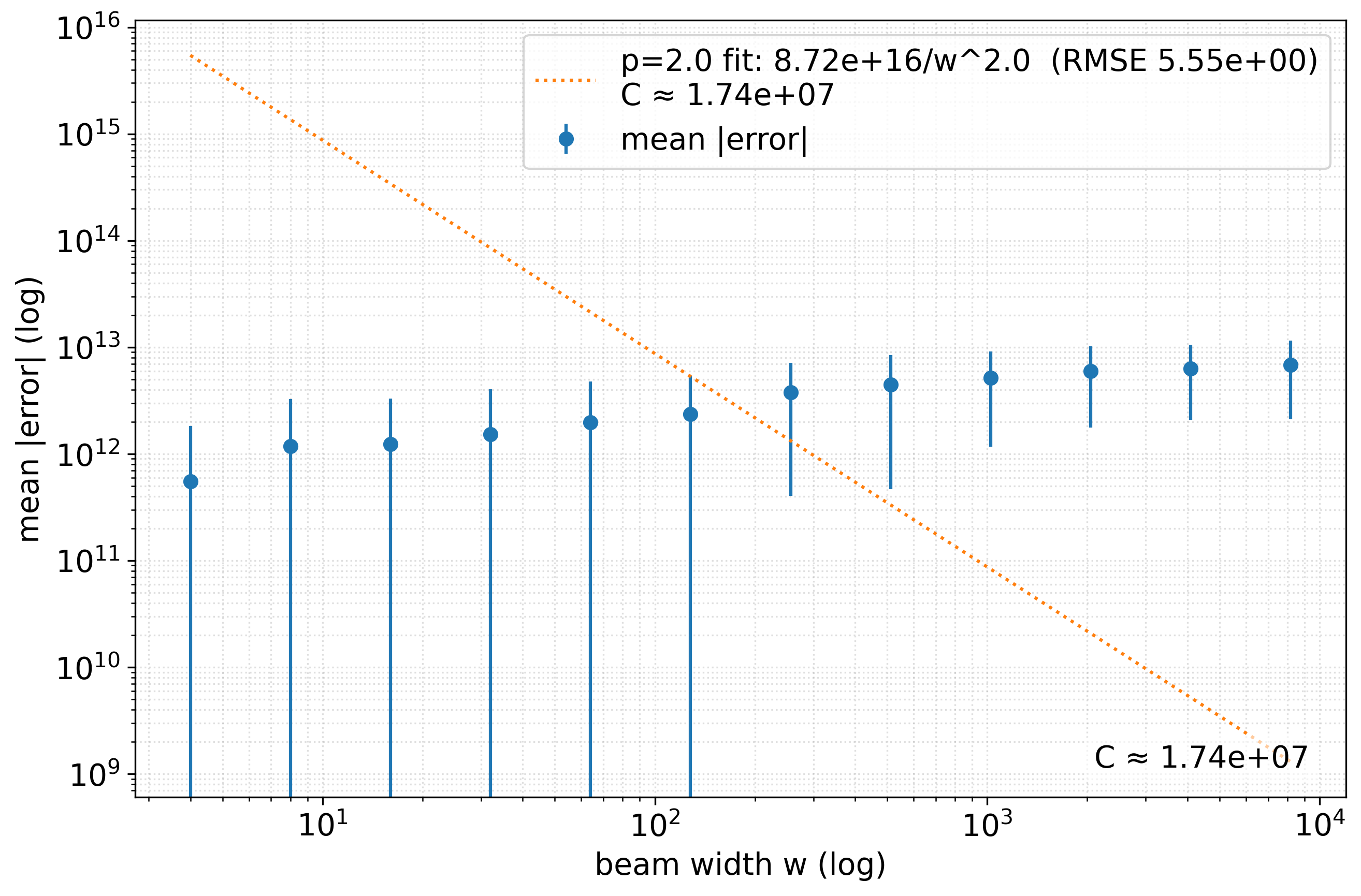}
    \caption{Failure mode for an extreme tail target, $T=0.95\sum S$.}
    \label{fig:in-tail}
\end{figure}

Conversely, Figure \ref{fig:in-tail} demonstrates the breakdown of the algorithm in an extreme tail scenario where $T=0.95\sum S$ (prior to symmetrization). Under these conditions, the required initial elements for the anchor phase and burn-in, $n_{pre}$, strictly exceed the total available elements $n$. The simulation's parameterization of $n=200$ is insufficient to absorb the $O(\log w)$ dependency inherent in $n_{pre}$. Consequently, Phase B is completely starved of elements, causing the inverse-quadratic error decay guarantee to fail. This empirical breakdown is consistent with the theoretical subexponential collapse predicted for extreme targets.

\subsection{Comparison Against Other Heuristics}

We additionally compare our algorithm against other heuristics in the literature, including an Arithmetic Optimization Algorithm (AOA) \cite{madugula_aoa_2022}, Particle Swarm Optimization (PSO) \cite{kennedy_particle_1995}, Tabu Search (Tabu)\cite{glover_tabu_1989, glover_tabu_1990}, Simulated Annealing (SA) \cite{kirkpatrick_optimization_1983}, the FPTAS by Gens and Levner \cite{gens_approximation_1979}, and a Genetic Algorithm (GA). As shown in Figure \ref{fig:heuristic_comparison_err_time}, the results show that the superior error scaling of our approach allows it to quickly surpass other approaches by multiple orders of magnitude \cite{nguyen_genetic_2004}.

To tune each hyperparameter, Optuna with 60 outer trials (amount of hyperparameter sets to try) and 3 inner trials (amount of trials to determine hyperparameter set effectiveness) per timeframe was used. We find that this is enough to ensure convergence across all timeframes within a factor of 2, which is negligible in logspace. Detailed hyperparameter suggestions for Optuna can be found in the attached code.

In this graph specifically, standard error is used rather than standard deviation, ensuring that the graph is readable.
\begin{figure}[htbp]
    \centering
    \includegraphics[width=\linewidth]{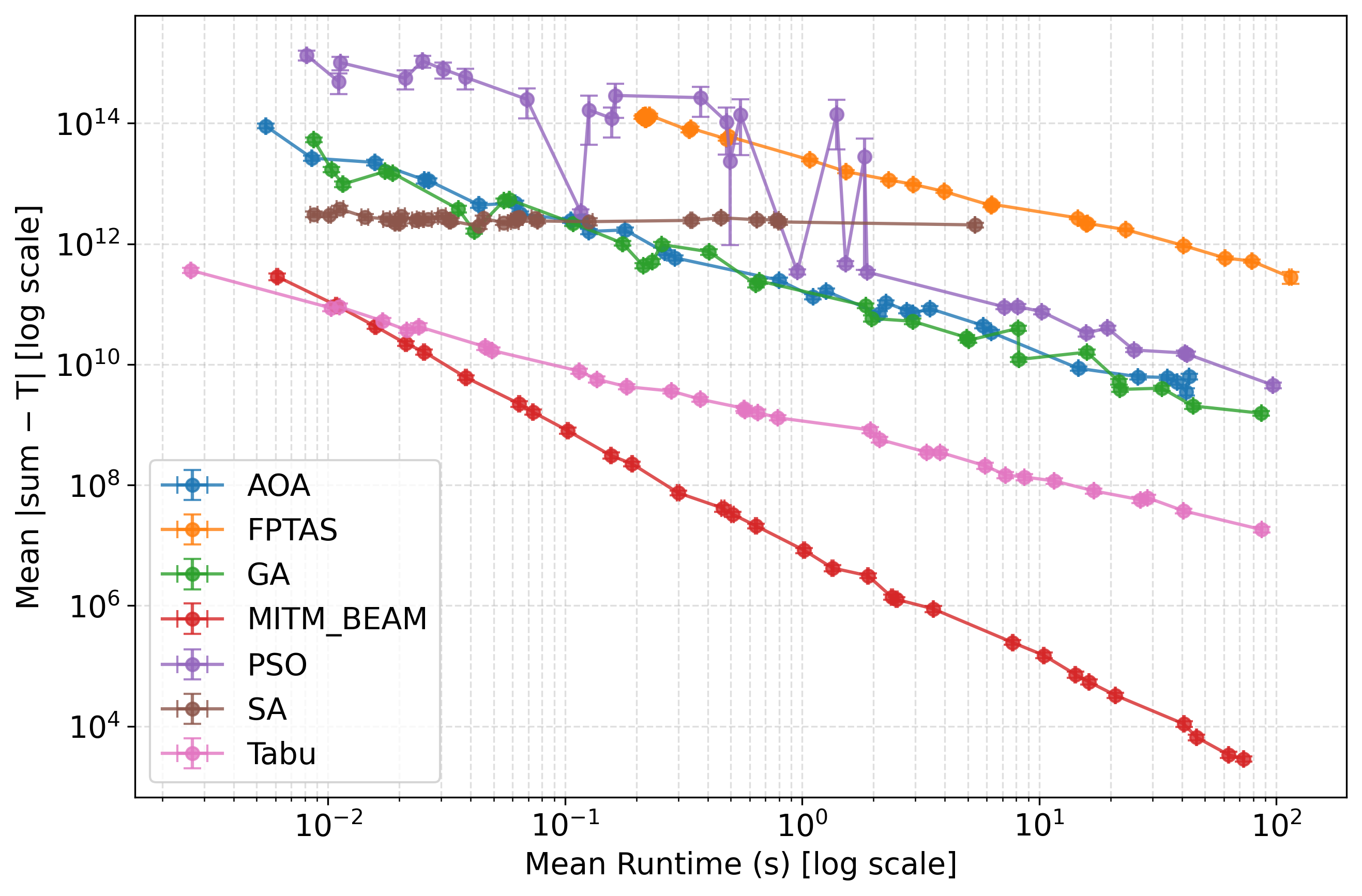}
    \caption{Approximation error versus wall--clock time for the proposed MITM beam search compared against standard heuristics: Arithmetic Optimization Algorithm (AOA), Particle Swarm Optimization (PSO), Tabu Search (Tabu), Simulated Annealing (SA), Fully Polynomial Time Approximation Scheme (FPTAS), and a Genetic Algorithm (GA). 
The beam--based method exhibits a markedly steeper error decay, quickly surpassing competing heuristics by several orders of magnitude in solution quality.}
    \label{fig:heuristic_comparison_err_time}
\end{figure}

\section{Conclusion}

In this work, we transitioned the Random Subset Sum Problem (RSSP) from a strictly exact-cryptographic or worst-case perspective into a robust expected-error framework. By introducing a Meet-In-The-Middle (MITM) beam search, we achieved a provable inverse-quadratic error decay of $\mathbb{E}\,|S^\star-T| = \Theta(B / (n w^2))$, while maintaining a running time of $\widetilde{O}(nw)$.

The theoretical foundation for this decay relies on operationalizing the mesh existence proofs of Da Cunha et al.~\cite{DaCunhaRSSPMesh}. While prior work established that an $O(B/w)$ mesh exists with high probability among $O(\log w)$ random elements, our Phase~A construction provides an explicit, constructive pathway to achieve this mesh without the $w^{O(C)}$ exponential state-space blowup characteristic of naive tree expansion. By applying structured bucketing and iterative trimming, we force the mesh to materialize in linearithmic time, creating a structure that Phase B can efficiently exploit. During its initial burn-in phase, Phase B relies on mechanics similar to Phase A. By modeling the Phase B iteration as an anchor convolution under a mean-field assumption, we derive a recursion that—when inverted and refined through dominance arguments—produces our stated scaling bound.

Crucially, our empirical evaluations reveal that this inverse-quadratic decay rate is remarkably robust to the underlying input distribution, extending well beyond our theoretical uniform-input assumptions. As demonstrated in Section 6, the core $O(w^{-2})$ scaling holds whether the inputs are strictly uniform, multimodal, approximately Gaussian, or even drawn from heavy-tailed Cauchy distributions with undefined variance. Furthermore, this robustness naturally extends to RSSP variants, including Vector Subset Sum and Bounded Taking Subset Sum (Appendix \ref{sec:variant-examples}). Beyond these theoretical guarantees, our framework is empirically fast, surpassing standard metaheuristics (such as AOA, PSO, and Genetic Algorithms) by multiple orders of magnitude in expected error. Together, these results position the MITM beam search as a highly practical baseline for subset sum approximation.

While this expected-error paradigm opens new avenues for studying the RSSP, important theoretical gaps remain. A primary challenge is establishing unconditional lower bounds on the expected error for any algorithm constrained to $\widetilde{O}(nw)$ time, which would formally define the limits of the small-gap regime. Finally, while we have shown that a greedy geometric decay achieves a $w^{-2}$ bound, a compelling direction for future research is determining whether the redundant encoding and alphabet-expansion strategies used in exact representation techniques can be adapted to fundamentally accelerate this expected error decay rate.

\begin{acks}
The authors would like to thank Portland State University for the compute power required to run simulations, Tucker Mastin for valuable feedback and insight regarding evaluation, and Maxwell Chen for helping with proof validation. Claude Code was used to generate experimental harnesses used in Section 6 and Appendix B, and to generate documentation for the supplemental material.

This material is supported by the National Science Foundation under Grant Nos. 2346732, 2318139, and 2019216.
\end{acks}

\bibliographystyle{ACM-Reference-Format}
\bibliography{ref}

\appendix
\section{Variance of Error}

\paragraph{Variance bound in the random-phase regime.}
Let $D_t$ denote the global gap at Phase~B step~$t$, and define the one-step improvement $\Delta_t := D_t - D_{t+1}\;\ge 0$.

In the random-phase, small-gap regime $D_t \ll \Delta=B/w$ we have
\[
\mathbb{E}[\Delta_t \mid D_t]
= \Theta\!\Bigl(\frac{w^2}{B}\Bigr) D_t^2,
\qquad
\operatorname{Var}(\Delta_t \mid D_t)
= \Theta\!\Bigl(\frac{w^2}{B}\Bigr) D_t^3.
\]
Set $Y_t := 1/D_t$.  A second-order expansion gives
\[
\Delta Y_t
= Y_{t+1}-Y_t
= \frac{\Delta_t}{D_t^2}
  + \frac{\Delta_t^2}{D_t^3}
  + o(\Delta_t^2).
\]

\paragraph{Drift of $Y_t$.}
Taking conditional expectations and using the above moments,
\[
\mathbb{E}[\Delta Y_t \mid \mathcal{F}_{t-1}]
= \Theta\!\Bigl(\frac{w^2}{B}\Bigr).
  \]
Summing over steps,
\[
\mathbb{E}[Y_T]
= Y_0 + \Theta\!\Bigl(\frac{w^2}{B}\,T\Bigr)
= \Theta\!\Bigl(\frac{w^2}{B}\,T\Bigr).
\]

\paragraph{Conditional variance of $\Delta Y_t$.}
From the expansion and $|\Delta_t|\le D_t$,
\[
\operatorname{Var}(\Delta Y_t \mid \mathcal{F}_{t-1})
= O\!\left(
\frac{\operatorname{Var}(\Delta_t \mid \mathcal{F}_{t-1})}{D_t^4}
+ \frac{\mathbb{E}[\Delta_t^4 \mid \mathcal{F}_{t-1}]}{D_t^6}
\right).
\]

Let $p_t=\Pr(\text{improve} \mid D_t)=O(\frac{w^2}{B}D_t)$ and conditional on improvement, the improvement size $S_t=D_t-D_{t+1}  \sim U(0, D_t)$. Then $\Delta_t=D_t-D_{t+1}=I_tS_t, \quad I_t \sim \text{Bernoulli}(p_t)$. Hence, for any $m \geq 1$, 
\[
\mathbb{E}[\Delta_t^m \mid D_t]
= \mathbb{E}[(I_t S_t)^m \mid D_t]
= \mathbb{E}[I_t S_t^m \mid D_t]
= p_t\,\mathbb{E}[S_t^m \mid D_t].
\]

Taking $m=4$ yields 
\[
\mathbb{E}[\Delta_t^4 \mid D_{t}] = p_t\mathbb{E}[S_t^4 \mid D_t]=O\!\bigl(\tfrac{w^2}{B}\bigr) D_t^5.
\]

Since
$\operatorname{Var}(\Delta_t \mid \mathcal{F}_{t-1})
   = \Theta\!\bigl(\tfrac{w^2}{B}\bigr) D_t^3$,
we obtain
\[
\operatorname{Var}(\Delta Y_t \mid \mathcal{F}_{t-1})
= O\!\Bigl(\frac{w^2}{B}\cdot\frac{1}{D_t}\Bigr).
\]

\paragraph{Variance of $Y_T$.}
By the martingale variance identity,
\[
\operatorname{Var}(Y_T)
= \sum_{t=1}^T
  \mathbb{E}\big[\operatorname{Var}(\Delta Y_t \mid \mathcal{F}_{t-1})\big]
= O\!\left(
\frac{w^2}{B}
\sum_{t=1}^T
\mathbb{E}\!\left[\frac{1}{D_t}\right]
\right).
\]
From the drift, $\mathbb{E}[1/D_t]=\mathbb{E}[Y_t]=
\Theta\!\bigl(\frac{w^2}{B}\,t\bigr)$, giving
\[
\operatorname{Var}(Y_T)
= O\!\left(
\frac{w^2}{B}
\sum_{t=1}^T
\frac{w^2}{B}\,t
\right)
= O\!\left(\frac{w^4}{B^2}\,T^2\right).
\]

\paragraph{Transforming back to $D_T$.}
Using the delta (or Lipschitz) bound for $g(y)=1/y$,
\[
\operatorname{Var}(D_T)
= \operatorname{Var}(g(Y_T))
\;\le\;
\frac{\operatorname{Var}(Y_T)}{(\mathbb{E}[Y_T])^4}
\]
\[
= O\!\left(
\frac{(w^4/B^2)T^2}{
(\tfrac{w^2}{B}T)^4}
\right)
= O\!\left(\frac{B^2}{w^4\,T^2}\right).
\]

\paragraph{Conclusion.}
Hence in the random-phase, small-gap regime,
\[
\boxed{\;
\operatorname{Var}(D_T)
= O\!\left(\frac{B^2}{w^4\,T^2}\right).
\;}  
\]
\section{Adapting to Variants} \label{sec:variant-examples}
\subsection{Bounded Taking}
\label{app:bounded-taking}

The Bounded Taking variant of Subset Sum restricts the solution to at most $k$ elements. Consider the canonical setting where the target is $T=0$, inputs are drawn from $U(-B, B)$, and a valid solution must contain at least one element. In this setting, the Bernoulli symmetrization transform is natively bypassed. We can adapt the MITM beam search to accommodate the cardinality constraint $k$ while preserving the expected error decay, provided $k$ is sufficiently large.

\subsubsection{Natural Logarithmic Sparsity and Error Decay}
Introducing a strict budget $k$ limits the maximum number of improvements the algorithm can make, as an improvement only occurs on an inclusion branch. Because the unconstrained expected error is $O(B/(n w^2))$, the algorithm requires a global distance reduction factor of roughly $nw$.

Surprisingly, the optimal solutions found by this framework are naturally logarithmically sparse.

\begin{lemma}[Concentration of Stochastic Error Decay]
\label{lem:stochastic-decay-gamma}
Let $D_0 = \Theta(B/w)$ be the initial Phase B gap. To achieve a target error of $D_{target} = \Theta(B/(nw^2))$ with probability $1-\delta$, Phase B requires a budget of only $k_B = \lceil \ln(nw) \rceil + O(\sqrt{\ln(nw)\ln(1/\delta)})$.
\end{lemma}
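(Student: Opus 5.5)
The plan is to track the gap only at the steps where it improves, since only inclusion branches consume the budget. By Lemma~\ref{lem:uniform-improvement}, conditional on an improvement from gap $D$, the new gap $D'$ is stochastically dominated by $U(0,D)$. So I would couple the sequence of post-improvement gaps with a multiplicative process
\[
D^{(m)} \le D_0 \prod_{\ell=1}^{m} V_\ell, \qquad V_\ell \stackrel{\text{i.i.d.}}{\sim} U(0,1),
\]
where $D^{(m)}$ is the gap after the $m$-th improvement. The coupling is built by taking, at each improvement, a uniform variable dominating the realized ratio $D'/D$. The overlap case of Lemma~\ref{lem:uniform-improvement} only helps here.

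Taking logarithms turns this into a sum of i.i.d.\ variables. Write $E_\ell := -\ln V_\ell \sim \mathrm{Exp}(1)$, so that $-\ln(D^{(m)}/D_0) \ge \Gamma_m := \sum_{\ell\le m} E_\ell$, and $\Gamma_m \sim \mathrm{Gamma}(m,1)$. Reaching $D_{target}$ from $D_0=\Theta(B/w)$ requires
\[
\Gamma_m \ge L := \ln\!\bigl(D_0/D_{target}\bigr) = \ln(nw) + O(1).
\]
The number of improvements needed is therefore the first-passage count $N := \min\{m : \Gamma_m \ge L\}$. Its distribution is $N-1 \sim \mathrm{Poisson}(L)$, because the partial sums $\Gamma_m$ are the arrival times of a unit-rate Poisson process. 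This gives $\mathbb{E}[N] = L+1 = \ln(nw)+O(1)$, which matches the leading term $\lceil \ln(nw)\rceil$.

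The next step is concentration. Using the Chernoff bound for Poisson variables,
\[
\Pr\bigl(\mathrm{Poisson}(L) \ge L + x\bigr) \le \exp\!\Bigl(-\frac{x^2}{2(L+x/3)}\Bigr).
\]
Choosing $x = \sqrt{2L\ln(1/\delta)} + \tfrac{2}{3}\ln(1/\delta)$ makes the right-hand side at most $\delta$. In the regime $\ln(1/\delta) = O(L)$ this gives $x = O(\sqrt{\ln(nw)\ln(1/\delta)})$. Hence $N \le \lceil \ln(nw)\rceil + O(\sqrt{\ln(nw)\ln(1/\delta)})$ with probability at least $1-\delta$. Equivalently, one can apply the Gamma lower-tail Chernoff bound to $\Pr(\Gamma_k < L)$ at $k=k_B$; I would present whichever version is cleaner.

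The constant slack in $D_0=\Theta(B/w)$ and in $D_{target}$ only shifts $L$ by $O(1)$. I would absorb this shift into the ceiling plus the $O(\cdot)$ term.

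The main obstacle is justifying the reduction. I need that each budget-consuming inclusion in Phase B corresponds to exactly one dominated $U(0,D)$ improvement of the global gap. I also need the ratios $D'/D$ at successive improvements to be dominated by independent uniforms. Lemma~\ref{lem:uniform-improvement} gives the conditional domination, but independence across steps relies on Assumption~5 and the fresh independent shift $s_t$ at each step. I would argue the coupling sequentially: condition on the filtration at each improvement, and note that the fresh $s_t$ makes the ratio dominated by a new uniform regardless of the past, so the product coupling holds by induction.

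A secondary subtlety is the budget consumed by Phase A and burn-in. The lemma concerns only Phase B's budget $k_B$, so those inclusions would be accounted for separately. I would also state explicitly that this lemma counts improvements, not steps; the number of steps is controlled by Theorem~\ref{thm:phaseB-decay}.
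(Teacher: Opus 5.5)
Your proposal is correct and follows the paper's proof. Both bound the post-improvement gaps by $D_0\prod U_i$ via Lemma~\ref{lem:uniform-improvement}, pass to a $\Gamma(m,1)$ sum of $\mathrm{Exp}(1)$ variables, and apply a Chernoff tail bound; your Poisson first-passage count is the dual form of the paper's Gamma lower tail, since $\Pr(\Gamma_k < L) = \Pr(\mathrm{Poisson}(L) \ge k)$, while your sequential coupling and your caveat that the extra $O(\ln(1/\delta))$ term is absorbed only when $\ln(1/\delta) = O(\ln(nw))$ make explicit what the paper leaves implicit.
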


\begin{proof}
By Lemma \ref{lem:uniform-improvement}, conditional on an improvement, the new distance $D'$ is stochastically dominated by a uniform distribution on $[0, D]$. After $m$ successful improvements, the error can be upper-bounded by:
\[
D_m = D_0 \prod_{i=1}^m U_i, \quad U_i \stackrel{i.i.d.}{\sim} U(0, 1).
\]
To find the number of steps $m$ needed to achieve a reduction ratio of $D_0/D_m = nw$, we apply the negative natural logarithm:
\[
-\ln\left(\frac{D_m}{D_0}\right) = \sum_{i=1}^m -\ln(U_i).
\]
Let $X_i = -\ln(U_i)$. The standard uniform distribution transformed by $-\ln$ yields a standard exponential distribution: $X_i \sim \text{Exp}(1)$, with $\mathbb{E}[X_i]=1$ and $\operatorname{Var}(X_i)=1$. The sum of $m$ independent $\text{Exp}(1)$ variables exactly follows a Gamma distribution:
\[
S_m = \sum_{i=1}^m X_i \sim \Gamma(m, 1),
\]
where $\mathbb{E}[S_m] = m$ and $\operatorname{Var}(S_m) = m$.
We require $S_m \ge \ln(nw)$. Because $\mathbb{E}[S_{k_B}] = k_B$, setting $k_B \approx \ln(nw)$ achieves the target reduction in expectation. Applying standard Chernoff bounds for the Gamma distribution, padding the required budget by the standard deviation ensures concentration:
\[
k_B = \lceil \ln(nw) \rceil + c\sqrt{\ln(nw) \ln(1/\delta)},
\]
which guarantees $S_{k_B} \ge \ln(nw)$ with probability at least $1-\delta$.
\end{proof}

Phase A consumes at most $O(\log w)$ budget. Thus, if the global budget satisfies $k \ge \Theta(\log n + \log w)$, the cardinality constraint is essentially non-binding, since if $k < O(\log(nw))$ then the problem can be solved in subexponential time. The Bounded Taking variant will naturally converge on the full $O(B/(nw^2))$ unconstrained error decay, proving that our method creates solutions that are naturally sparse.

This provides a distinct advantage over existing approaches. Standard dynamic programming and FPTAS methods typically require an augmented state space to enforce cardinality constraints, scaling complexity by a factor of $k$. Furthermore, to the best of our knowledge, analyses of representation-based techniques~\cite{howgravegraham_joux_2010} and standard metaheuristics do not establish expected logarithmic sparsity bounds. Therefore, the ability of the MITM beam search to naturally converge on $O(\log(nw))$-sparse solutions represents a novel structural guarantee for the Bounded Taking variant.

\subsection{Adapting to Vector Subset Sum}
\label{app:vector-ssp}

The Vector Subset Sum Problem (VSSP) generalizes to $d$ dimensions:
given $\mathbf{v}_1, \dots, \mathbf{v}_n \in \mathbb{Z}^d$ with
$\|\mathbf{v}_i\|_\infty \le B$ and target $\mathbf{T} \in \mathbb{Z}^d$,
minimize $\|\sum_{\mathbf{v} \in V} \mathbf{v} - \mathbf{T}\|_2$.
The MITM beam search adapts with three changes: Phase~A buckets become
hypercube cells ($w^{1/d}$ per axis, $w$ total, side $\Delta = B/w^{1/d}$),
scoring uses Euclidean distance, and symmetrization applies coordinate-wise.

\paragraph{Phase~A.}
The logistic filling dynamics carry over: each filled cell contributes
$\Omega(1/w)$ filling probability per unfilled cell (via a $d$-dimensional
checkerboard disjointness argument), so all $w$ cells fill in $O(\log w)$
steps w.h.p., yielding mesh gap $O(B\sqrt{d}/w^{1/d})$.

\paragraph{Phase~B.}
The key difference is geometric. Each (beam element, anchor) pair at
distance $D$ contributes an improvement region that is a $d$-ball of
volume $V_d D^d$. With $\Theta(w^2)$ disjoint pairs in the small-gap
regime, the improvement probability becomes
$\Pr(\text{improve} \mid D) = \Theta(w^2 V_d D^d / B^d)$.
Conditional on improvement, the new distance satisfies
$\mathbb{E}[D'] = \frac{d}{d+1}D$ (expectation of the radial coordinate
of a uniform point in a $d$-ball). The resulting drift is
\[
\mathbb{E}[D_{t+1} - D_t \mid D_t = D]
\;\le\; -c_d\,\frac{w^2}{B^d}\,D^{d+1}.
\]
Substituting $Y_t = D_t^{-d}$ gives $Y_{t+1} \ge Y_t + dk$, so
$D_t = O(B/(w^2 t)^{1/d})$. Setting $t = \Theta(n)$:

\begin{theorem}[MITM beam for Vector SSP]
\label{thm:vector-mitm}
Under the same conditions as Theorem~\ref{thm:mitm-main} but in $d$
dimensions, the MITM beam search returns $\mathbf{S}^\star$ satisfying
\[
\mathbb{E}\!\left[\left\|\mathbf{S}^\star - \mathbf{T}\right\|_2\right]
= O\!\left(\frac{B}{(n\,w^2)^{1/d}}\right),
\]
in $O(nw\log w \cdot d)$ time and $O(wd)$ memory, using a KDTree as long as $n >> 2^d$.
\end{theorem}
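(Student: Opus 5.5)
We follow the one-dimensional template of Theorem~\ref{thm:mitm-main}, changing only the geometry. Three ingredients carry over: the Phase~A mesh, the per-step improvement probability, and the recursion for the gap. We then reduce $d$-dimensional decay to a one-dimensional recursion by the substitution $Y_t=D_t^{-d}$.

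\textbf{Step 1 (Phase~A in $d$ dimensions).} Partition the cube $[-B/2,B/2]^d$ into $w$ cells of side $\Delta=B/w^{1/d}$. Restrict to a checkerboard sublattice, with every coordinate index even, so that the translates $\mathcal I_i-a$ for distinct anchors $a$ are disjoint. This costs a factor $2^{-d}$ in the constant, which is treated as fixed. The analogue of \eqref{eq:phaseA-fill-prob} is then
\[
\Pr(\text{cell filled}\mid\text{unfilled})\ge c_d\,|A_j|/w .
\]
This gives the same logistic drift \eqref{eq:phaseA-logistic}, with $1/4$ replaced by $c_d$. Lemmas~\ref{lem:phaseA-half-explicit}, \ref{lem:phaseA-all-explicit} and Theorem~\ref{lem:phaseA-all-whp} then give full coverage in $O_d(\log w+\log(1/\delta))$ steps, so the mesh gap is $O(\sqrt d\,B/w^{1/d})$. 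Lemma~\ref{lem:bucket-uniform} transfers verbatim, coordinatewise, so each anchor is uniform in its cell. Burn-in (Lemma~\ref{lem:burnin-summary}) is handled the same way: a drift phase, coordinatewise via CLT, followed by logistic filling of the Voronoi cells.

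\textbf{Step 2 (improvement probability and conditional shrinkage).} Fix the current minimum gap $D$. Each pair (beam element $x_i$, anchor $Z_k$) contributes the improvement region
\[
\{s:\|x_i+s-Z_k\|_2\le D\},
\]
a ball of volume $V_dD^d$. We redo Lemma~\ref{lem:anchor-conv} in $d$ dimensions. Group the pairs by the macroscopic offset vector $i-j$. Under Assumption~5 the jitter $U_i-U_j$ has a bounded density on a box of side $2\Delta$, so two balls on the same diagonal have expected overlap $O(D^{2d}/\Delta^d)$. Bonferroni then gives
\[
\mathbb E\,\mathrm{Leb}(\mathcal U_D)\ge c'_d\,w^2V_dD^d \quad\text{whenever } D^d\le c\,\Delta^d/w .
\]
Dividing by $(2B)^d$ yields
\[
\Pr(\text{improve}\mid D)=\Omega\bigl(w^2D^d/B^d\bigr).
\]
For larger $D$, monotonicity gives $\Omega(1)$, and hence geometric decay for $O(\log w)$ steps, exactly as in the 1D transition argument. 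Conditional on improvement, the landing point is uniform in the ball. Its distance $R$ satisfies $\Pr(R\le r)=(r/D)^d$, so $\mathbb E[D']\le \frac{d}{d+1}D$; taking minima over overlaps only helps, as in Lemma~\ref{lem:uniform-improvement}. Combining the two facts,
\[
\mathbb E[D_{t+1}-D_t\mid D_t]\le -\frac{1}{d+1}\Pr(\text{improve})\,D\le -k\,D^{d+1},\qquad k=c_d\,w^2/B^d .
\]

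\textbf{Step 3 (solving the recursion).} The deterministic envelope $D_{t+1}=D_t(1-kD_t^d)$ gives
\[
D_{t+1}^{-d}=D_t^{-d}(1-kD_t^d)^{-d}\ge D_t^{-d}+dk ,
\]
by Bernoulli's inequality $(1-u)^{-d}\ge 1+du$. Hence $D_t\le (D_0^{-d}+dkt)^{-1/d}=O\bigl(B/(w^2t)^{1/d}\bigr)$, which is Lemma~\ref{lem:discrete-upper-single} with exponent $d$. For the stochastic version we mimic Lemma~\ref{lem:stochastic-leq-discrete}, using convexity of $x\mapsto x^{d+1}$ on $[0,\infty)$: Jensen gives $\mathbb E D_t^{d+1}\ge(\mathbb ED_t)^{d+1}$. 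The map $x\mapsto x-kx^{d+1}$ is increasing while $(d+1)kx^d<1$, and the entry condition from Step~2 guarantees this. Induction then bounds $\mathbb E D_t$ by the envelope. Step~1 and burn-in consume $o(n)$ elements, and the same Case~1/Case~2 split on $T$ as in Theorem~\ref{thm:mitm-main} leaves $t=\Theta(n)$. This gives
\[
\mathbb E\|\mathbf S^\star-\mathbf T\|_2=O\bigl(B/(nw^2)^{1/d}\bigr).
\]

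\textbf{Complexity and main obstacle.} Each step performs $2w$ nearest-anchor queries in a KD-tree built once on the $w$ residuals, followed by a sort. This costs $O(w\log w\cdot d)$ per step, provided $n\gg 2^d$ so that KD-tree queries stay logarithmic, and $O(wd)$ memory. The main obstacle is the $d$-dimensional Lemma~\ref{lem:anchor-conv}. One must check that the union of the $w^2$ balls still has volume $\Theta(w^2D^d)$ and does not collapse, and this is where Assumption~5 and the threshold $D^d\lesssim \Delta^d/w$ must be carefully tuned. I would first try the diagonal-grouping Bonferroni argument directly, bounding pairwise overlaps by $\|f_{U_i-U_j}\|_\infty\cdot(V_dD^d)^2$.
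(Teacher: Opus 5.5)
Your proposal is correct and follows essentially the same route as the paper. Both use checkerboard Phase~A with logistic filling, $\Theta(w^2)$ improvement balls of volume $V_dD^d$, and the $\frac{d}{d+1}$ radial shrinkage to get drift $-kD^{d+1}$ with $k=c_d\,w^2/B^d$, then the substitution $Y_t=D_t^{-d}$ to get $Y_{t+1}\ge Y_t+dk$ and $t=\Theta(n)$. You give more detail than the paper, which just asserts ``$\Theta(w^2)$ disjoint pairs'' where you sketch the $d$-dimensional diagonal-grouping Bonferroni bound, the Bernoulli-inequality step, and the Jensen comparison with $x^{d+1}$.
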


As illustrated in Figure \ref{fig:vector-ssp}, this is empirically consistent across $d \in \{2, 3, 5\}$. The exponent degrades from $w^{-2}$ to $w^{-2/d}$, reflecting that $d$-ball volume scales as $D^{d}$, making each unit of distance reduction harder. 

\begin{figure}[htbp]
    \centering
    \includegraphics[width=1\linewidth]{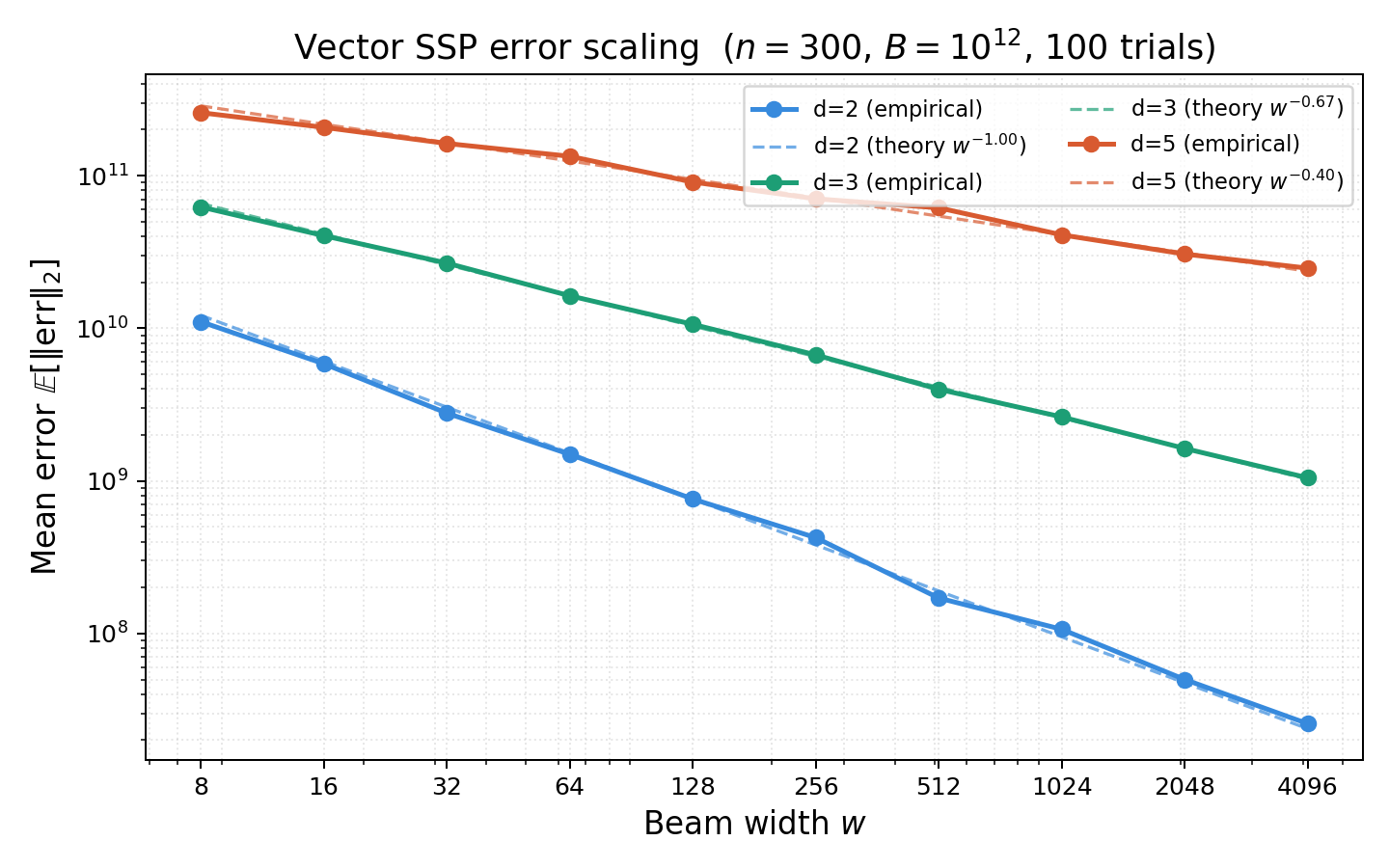}
    \caption{Empirical error scaling of the proposed MITM beam search applied to the Vector Subset Sum Problem across $d \in \{2, 3, 5\}$. Results confirm the theoretical $w^{-2/d}$ exponent degradation.}
    \label{fig:vector-ssp}
\end{figure}

\section{Proofs for Phase B Burn-in}
\label{app:burn-in}

This appendix provides the formal proofs for the burn-in dynamics summarized in Lemma~\ref{lem:burnin-summary}, specifically detailing the random walk into the target range and the monotonicity guarantees that allow Voronoi cells to fill.

\paragraph{Phase 1: reaching the anchor range.}
Without loss of generality assume all $z\in Z$ are positive. Before any beam element enters $Q$, the scoring rule preserves the largest beam values. Hence negative increments are never retained, and the maximum beam element evolves as
\[ M_k := \sum_{t=1}^k Y_t, \qquad Y_t := \max\{0,s_t\},\quad s_t\sim U([-B,B]). \]
The variables $\{Y_t\}$ are i.i.d.\ with exact mean and variance: $\mathbb{E}[Y_t]=\frac{B}{4}$, and $\operatorname{Var}(Y_t)=\frac{5B^2}{48}$. Let $m:=\min Z$. By the Central Limit Theorem, $M_k \sim \mathcal{N}\left(k\frac{B}{4},\; k\frac{5B^2}{48}\right)$. To ensure $M_k \ge m$ with high probability $1-\delta$, we solve for $k$ using the standard normal quantile $z_{1-\delta}$:
\[ k\frac{B}{4} - z_{1-\delta}\sqrt{k\frac{5B^2}{48}} \;\ge\; m. \]
This yields a required step count of $k \approx \frac{4m}{B} + O\!\left(\sqrt{\frac{m}{B}}\right)$.

\paragraph{Phase 2: expansion inside $Q$ (filling all Voronoi cells).}
Let $\{V_1,\dots,V_m\}$ be the Voronoi partition of the line induced by $Z$. Under the $\Theta(B/w)$-spacing guarantee from Phase A, there exists an absolute constant $\alpha\in(0,1)$ such that for all $j$, $\mathrm{Leb}(V_j \cap Q) \ge \alpha\cdot \frac{B}{w}$.

Define the set of occupied Voronoi cells at time $t$ as $\mathcal{F}_t := \bigl\{j : \exists\,x\in \mathcal{W}_t\cap Q\ \text{with}\ x\in V_j\bigr\}$, and let $K_t := |\mathcal{F}_t|$.

\begin{lemma}[Error Monotonicity]
\label{lem:err-monotone}
The minimum error/distance at time $t$, $D(t)$, monotonically decreases.
\end{lemma}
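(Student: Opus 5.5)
The plan is to prove that $D(t+1)\le D(t)$ for every step $t$ by exhibiting, at each step, a specific retained beam element whose distance to the residual set $Z$ is at most $D(t)$. Here $D(t)=\min_{x\in\mathcal{W}_t} d(x)$ with $d(x)=\min_{z\in Z}|x-z|$. The starting observation is structural: the expansion step sets $\text{expanded}=\mathcal{W}_t\cup\{x+s_{t+1}:x\in\mathcal{W}_t\}$. This is a superset of $\mathcal{W}_t$, because the ``not take'' branch copies every current element unchanged. Hence $\min_{x\in\text{expanded}} d(x)\le D(t)$. What remains is to show that the trimming rule never discards a minimizer of $d$ over the expanded set.

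I will split into the two regimes of Algorithm~\ref{alg:mitm-barebones}.

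\textbf{Pre-hit regime} ($H=\emptyset$). The algorithm keeps the $w$ elements of the expanded set with smallest $d(x)$. An element attaining $\min_{x\in\text{expanded}} d(x)$ is among the $w$ smallest, since $w\ge 1$ by the standing assumption $w>0$. Ties can be resolved by noting that any retained element then has the same value of $d$. Therefore $D(t+1)=\min_{x\in\text{expanded}} d(x)\le D(t)$.

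\textbf{Post-hit regime.} Every expanded element $x$ is assigned to its nearest residual $r(x)$, so $d(x)=|x-r(x)|$. Let $x^\ast$ be a minimizer of $d$ over the expanded set, and set $r^\ast=r(x^\ast)$. The cell $C_{r^\ast}$ is nonempty because it contains $x^\ast$. The algorithm therefore keeps some $x_{r^\ast}\in\arg\min_{x\in C_{r^\ast}}|x-r^\ast|$, which gives $d(x_{r^\ast})\le |x_{r^\ast}-r^\ast|\le |x^\ast-r^\ast|=d(x^\ast)\le D(t)$. This yields $D(t+1)\le D(t)$ again.

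The transition between regimes needs no separate argument. Both cases only use the fact that $\mathcal{W}_t\subseteq\text{expanded}$, together with the fact that the rule applied at step $t+1$ keeps a global $d$-minimizer. Induction on $t$ then gives monotonicity.

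The main obstacle I expect is a subtlety in how $D(t)$ is defined, not in the inequality itself. If $D$ is measured only over beam elements inside $Q$, or with respect to the representative anchor instead of the nearest residual, the tie-breaking in the post-hit rule must be checked for consistency. Both the paper's text (``distance to the representative anchor'') and the pseudocode ($\arg\min|x-r|$ inside the Voronoi cell $C_r$) pick the element closest to the cell's own residual. Inside a Voronoi cell, that residual is the nearest one, so the two notions coincide and I will state this explicitly. The other point to flag is the ``Delete even-indexed anchors'' step. It happens before Phase~B starts, so $Z$ is fixed throughout Phase~B and $d$ is a fixed function, which is what makes the comparison across steps legitimate.
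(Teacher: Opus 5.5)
Your proof is correct and takes the same route as the paper: the expanded set $\mathcal{W}_t\cup(\mathcal{W}_t+s_{t+1})$ contains $\mathcal{W}_t$, and the trimming rule never discards a global minimizer of the distance to the residual set. Your explicit case split (top-$w$ in the pre-hit regime, closest representative per Voronoi cell in the post-hit regime) spells out the paper's one-line justification that ``preserving local minima also preserves global minima,'' and it also avoids the paper's reversed subset condition on $\text{Dist}(H(\cdot))$.
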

\begin{proof}
Let $\text{Dist}(\mathcal{S})=\min |\mathcal{S}_i-T|$. After an expansion $\mathcal{W'}_{i}=\mathcal{W}_i\cup\mathcal{W}_i+s_i$, $W'_{i}$ will be a superset of $W_i$. If the scoring heuristic $H$ satisfies $\text{Dist}(H(\mathcal{S}^{(1)}))\leq \text{Dist}(H(\mathcal{S}^{(2)}))$ where $\mathcal{S}^{(1)} \subseteq \mathcal{S}^{(2)}$, error monotonically decreases. The scoring heuristics trivially satisfy this property because preserving local minima also preserves global minima.
\end{proof}

\begin{lemma}[Filling Monotonicity] 
\label{lem:fill-monotone}
After any beam element $\mathcal{W}_i$ enters $Q$, any filled Voronoi cell cannot be vacated.
\end{lemma}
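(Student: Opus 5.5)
The plan is to show that the post-hit update rule of Phase~B never leaves an occupied cell empty. I will fix a time $t$ after the first beam element has entered $Q$ and a cell $V_j$ occupied at time $t$. Then I will exhibit a specific candidate in the expanded set at time $t+1$ that is retained and still lies in $V_j$. The argument is a one-step induction, and it combines the monotonicity idea of Lemma~\ref{lem:err-monotone} with the per-cell structure of the post-hit selection rule.

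\textbf{Step 1 (regime).} I first check that once an element has entered $Q$, the algorithm is in the post-hit regime. Entering $Q$ should mean that some expanded element has $d(x)\le\Delta$, so $H\neq\emptyset$. This is the step that needs care. If the hit set $H$ can become empty again, the rule would revert to ``keep the $w$ closest'', and that rule can drop cells. I would try to settle it with Lemma~\ref{lem:err-monotone}: the global minimum distance $D(t)$ is nonincreasing, so once some element satisfies $d(x)\le\Delta$, some element keeps satisfying it. Hence $H\neq\emptyset$ at every later step and the algorithm stays in the post-hit rule permanently. If the definition of $Q$ allows elements at distance up to $c\Delta$ with $c>1$, I would instead define ``entering $Q$'' as the first hit time, which is what the burn-in argument actually uses.

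\textbf{Step 2 (superset).} The expansion satisfies $\mathcal{W}_{t}\subseteq \mathcal{W}_t\cup(\mathcal{W}_t+s_{t+1})$. So if $x\in\mathcal{W}_t\cap V_j$, then $x$ is again a candidate at step $t+1$, and its closest-anchor assignment is unchanged: $r(x)=r_j$ up to ties, which I will resolve by a fixed tie-breaking rule so that cells are well defined. Therefore $C_{r_j}\neq\emptyset$ at step $t+1$.

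\textbf{Step 3 (retention).} The post-hit rule keeps exactly one element $x_{r_j}\in\arg\min_{x\in C_{r_j}}|x-r_j|$, and by construction this element belongs to $C_{r_j}\subseteq V_j$. So $V_j$ remains occupied at step $t+1$, and its representative's distance to $r_j$ does not increase.

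\textbf{Step 4 (membership in $Q$).} Occupancy is defined via $\mathcal{W}_t\cap Q$, so I also need the retained element to stay in $Q$. The representative's distance to $r_j$ is at most the previous one, and $Q$ contains the neighbourhood of each anchor within the relevant radius, so I expect this to follow directly from Step~3.

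Induction on $t$ then completes the proof. The main obstacle is Step~1, namely ruling out a reversion to the pre-hit rule.
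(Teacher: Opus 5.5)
Your proposal is correct and is essentially the paper's argument: once the post-hit regime is entered, Lemma~\ref{lem:err-monotone} keeps $H\neq\emptyset$, so the one-per-cell rule is permanent, and that rule always retains a representative of every nonempty $C_r$. The paper gets your Step~1 from anchor spacing instead (it claims gaps of at most $2\Delta$, but kept anchors can be nearly $3\Delta$ apart, so your first-hit-time fallback is the safer formulation); it also skips your Step~4, which is immediate for interior cells since $V_j\subseteq[\min Z,\max Z]\subseteq Q$, and holds for the two end cells once the margin $cB/w$ of $Q$ is at least half the adjacent anchor gap.
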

\begin{proof}
Inside $[-\min Z, \max Z]$, having one anchor every two buckets guarantees a spacing of at most $2\Delta$. The condition that $\mathcal{W}_i$ enters $Q$ ensures it does not stray more than $c\frac{B}{w} \le 2\Delta$. A minimum spacing of $2\Delta$ guarantees $\mathcal{H}\neq \emptyset$. By Lemma~\ref{lem:err-monotone}, minimum error is monotonically decreasing, hence the scoring heuristic permanently switches to $\text{OnePerBucket}$. A filled Voronoi cell is not subject to truncation that would vacate it.
\end{proof}

\begin{lemma}[Voronoi-cell filling after entering $Q$]
\label{lem:burnin-phase2-fill}
Assume that at time $t_0$ the beam has at least one element in $Q$. There exists an absolute constant $c>0$ such that for all $t\ge t_0$,
\[ \mathbb{E}[K_{t+1}-K_t \mid \mathcal{F}_t] \ge c\cdot (m-K_t)\cdot \frac{K_t}{w}. \]
Consequently,
\[ \Pr\Big(K_{t_0 + C_1\log w} = m\Big) \ge 1-w^{-\Omega(1)}. \]
\end{lemma}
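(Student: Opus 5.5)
The plan is to mirror the Phase~A bucket-filling argument (Section~\ref{sec:phaseA-logw-coverage}), with Voronoi cells $V_j$ in place of buckets. Lemma~\ref{lem:fill-monotone} supplies the monotonicity that Phase~A got for free from its bucketing rule.

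\textbf{Step 1: one-step drift.} Condition on $\mathcal{F}_t$. For each occupied cell $k\in\mathcal{F}_t$, pick one beam element $x_k\in V_k\cap Q$. Fix an unoccupied cell $j$. It becomes occupied at step $t+1$ if the new shift $s_{t+1}$ lands in
\[
J_j := \bigcup_{k\in\mathcal{F}_t}\bigl((V_j\cap Q)-x_k\bigr)\cap[-B,B].
\]
The new child $x_k+s_{t+1}$ then lies in $V_j\cap Q$, so $C_{r_j}\neq\emptyset$. Since we are in the post-hit regime (Lemma~\ref{lem:fill-monotone}), the one-per-anchor rule retains a representative of $V_j$. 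By Lemma~\ref{lem:fill-monotone} no occupied cell is lost, so $K_{t+1}\ge K_t$ plus the number of newly hit cells.

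To lower-bound $\mathrm{Leb}(J_j)$, I would use the Phase~A parity trick. Restrict to a subfamily of occupied cells whose representatives are pairwise separated by more than $\mathrm{diam}(V_j\cap Q)=O(B/w)$, for instance every $c_0$-th occupied cell in sorted order. This retains at least $K_t/c_0$ cells and makes the translates $(V_j\cap Q)-x_k$ disjoint. Each translate has length at least $\alpha B/w$. Since $Q$ has length $O(B)$ and all the translates lie in $Q-Q$, at least a constant fraction of them lie inside $[-B,B]$; this is the boundary loss, handled as in Phase~A. Hence $\Pr(s_{t+1}\in J_j)\ge c\,K_t/w$. Summing over the $m-K_t$ unoccupied cells by linearity of expectation gives the drift inequality.

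\textbf{Step 2: from drift to high probability.} With $m=\Theta(w)$ and $x_t:=\mathbb{E}[K_t]/m$, Jensen does not apply directly to $K_t(m-K_t)$, so I would argue in two stages.

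\emph{Growth stage.} While $K_t\le m/2$, the drift gives $\mathbb{E}[K_{t+1}\mid\mathcal F_t]\ge (1+c')K_t$. Each new cell is filled by an independent indicator-like event of a single shift, so I would use a multiplicative Chernoff/Freedman bound on the increments. This shows $K_t$ doubles every $O(1)$ steps with constant probability, and reaches $m/2$ within $O(\log w)$ steps with probability $1-w^{-\Omega(1)}$. Once $K_t=\Omega(\log w)$, the failure probability per phase is already $w^{-\Omega(1)}$.

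\emph{Contraction stage.} Once $K_t\ge m/2$, each unoccupied cell is hit with probability at least $c/2\cdot m/w=\Omega(1)$ per step. By monotonicity, the probability that a given cell is still empty after $s$ more steps is at most $(1-\Omega(1))^s$. A union bound over $m\le w$ cells with $s=C\log w$ gives failure $w^{-\Omega(1)}$, exactly as in Theorem~\ref{lem:phaseA-all-whp}.

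\textbf{Main obstacle.} The hard part is the start of the growth stage, when $K_t$ is $O(1)$. There the expected growth is only multiplicative by a constant, and concentration is weak. For small $K$, I would first bound the waiting time for each doubling by a geometric variable with constant success probability. The sum of $O(\log w)$ such waiting times is $O(\log w)$ with probability $1-w^{-\Omega(1)}$, by a Chernoff bound for sums of geometrics. A secondary subtlety is ensuring $\mathrm{Leb}(V_j\cap Q)\ge\alpha B/w$ and that the chosen $x_k$ lie in $Q$ so that the translates mostly fall in $[-B,B]$. For both I would simply invoke the stated spacing guarantee and absorb the losses into $c$.
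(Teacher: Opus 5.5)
Your Step~1 follows the paper's proof: fix an empty cell $j$, translate $V_j\cap Q$ by each occupied representative, and sum over the $m-K_t$ empty cells. You are more careful at the one point where the paper slips. The paper obtains $\Pr(\exists i:\ x_i+s_{t+1}\in V_j\cap Q)\ge\min\{1,\beta K/w\}$ from a ``union bound'', which can only upper-bound a union. Your separated subfamily with disjoint translates is what actually justifies the lower bound. The paper also settles the high-probability claim with ``standard comparison''. You are right that the logistic recursion cannot simply be transferred to $\mathbb{E}[K_t]$: $K\mapsto K(m-K)$ is concave, so Jensen goes the wrong way. Your contraction stage is sound: per-cell hit probability $\Omega(1)$ once $K_t\ge m/2$, survival probability $(1-\Omega(1))^s$ by conditioning step by step, and a union bound over at most $w$ cells.

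The growth stage has a real flaw as written.
\begin{itemize}
\item The fill events of different cells at step $t+1$ are all functions of the same shift $s_{t+1}$, so they are not independent. There is no Chernoff bound across cells.
\item Freedman over a phase of $O(1)$ steps gives only constant-probability control, because one step's increment is of the same order as its drift.
\item Concretely, suppose the occupied cells form a contiguous block. A shift with $|s_{t+1}|\le\epsilon B$ (probability $\epsilon$) adds only $O(\epsilon w)$ cells, however large $K_t$ is. So your claim that the per-phase failure probability drops to $w^{-\Omega(1)}$ once $K_t=\Omega(\log w)$ is false.
\end{itemize}

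Your small-$K$ fallback is the right tool, but you should use it for the entire growth stage. Its ``constant success probability'' also does not follow from the drift, which is only an expectation. What is missing is an upper bound on the increment:
\begin{itemize}
\item Every beam element spawns exactly one child.
\item In the post-hit regime there is at most one beam element per Voronoi cell, and only the two unbounded extreme cells can hold one outside $Q$.
\item Hence $0\le K_{t+1}-K_t\le K_t+2\le 3K_t$.
\end{itemize}
Combine this with $\mathbb{E}[K_{t+1}-K_t\mid\mathcal F_t]\ge c'K_t$ for $K_t\le m/2$. Reverse Markov then gives $\Pr\bigl(K_{t+1}-K_t\ge c'K_t/2\mid\mathcal F_t\bigr)\ge c'/6$, uniformly in the past. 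Since $K_t$ never decreases, $O(\log w)$ such successes carry $K$ from $K_{t_0}\ge 1$ to $m/2$. The growth-stage duration is therefore dominated by a sum of $O(\log w)$ i.i.d.\ geometric variables, which is $O(\log w)$ with probability $1-w^{-\Omega(1)}$. Drop the Chernoff-across-cells branch.
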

\begin{proof}
Fix $t\ge t_0$ and condition on $\mathcal{F}_t$ with $K_t=K$. Pick any unoccupied cell index $j\notin \mathcal{F}_t$. For every $i\in\mathcal{F}_t$, there is $x_i\in \mathcal{W}_t\cap Q$ with $x_i\in V_i$. The event that $x_i+s_{t+1} \in V_j\cap Q$ requires $s_{t+1}\in (V_j\cap Q) - x_i$. Since $x_i\in Q$ and $Q$ has width $\Theta(B)$, the translate intersects $[-B,B]$ with measure $\Omega(\mathrm{Leb}(V_j\cap Q))$. There is an absolute constant $\beta>0$ such that $\Pr\big(x_i+s_{t+1}\in V_j\cap Q\ \big|\ \mathcal{F}_t\big) \ge \beta\cdot \frac{1}{w}$.

Taking the $K$ occupied representatives, a union bound gives $\Pr\big(\exists\, i\in\mathcal{F}_t:\ x_i+s_{t+1}\in V_j\cap Q\ \big|\ \mathcal{F}_t\big) \ge \min\!\left\{1,\ \beta\frac{K}{w}\right\}$. Summing over the $(m-K)$ unoccupied cells yields the discrete logistic growth lower bound. Standard comparison implies $K_t$ reaches $m$ in $O(\log w)$ additional steps.
\end{proof}
\section{Experimental Methodology and Parameters}
\label{app:experimental_methodology}

This appendix specifies the experimental protocol used to generate the
figures in Section~6.  The goal is reproducibility: the details below
give a summary of benchmark implementation used to generate all plots. Code for all experiments is provided in the supplementary material.

\subsection{Metric, averaging, and plots}
\label{app:metric_timing}

Each method run outputs a candidate subset sum $S$ and we report absolute
error
\[
\mathrm{err} \;=\; |S - T|.
\]
For each configuration (distribution, split rule, beam width), we run
independent trials and aggregate the sample mean and sample standard
deviation across trials. All reported scaling plots show $\mathbb{E}[\mathrm{err}]$
with error bars of one sample standard deviation. All scaling plots are
displayed on log--log axes.

\subsection{Instance generation}
\label{app:instances_generation}

Each trial samples $n$ i.i.d.\ integers from a specified family and a
target $T$ using one of two target rules.

\paragraph{Distribution families.}
We evaluate several qualitatively different i.i.d.\ input families.
Each family is evaluated in either a \emph{symmetric} form (approximately
supported on $[-B,B]$) or a \emph{nonnegative} form (supported on $[0,B]$).
All families are integer-valued via rounding.  Specifically:
\begin{itemize}[leftmargin=*]
  \item \textbf{Uniform.} Discrete uniform sampling on $[-B,B]$ (symmetric)
        or $[0,B]$ (nonnegative).
  \item \textbf{Normal.} Gaussian with mean $0$ and standard deviation
        $\sigma \approx B/3$, clipped to the relevant interval and rounded.
        The nonnegative form takes $|X|$ before clipping.
  \item \textbf{Lognormal.} A lognormal magnitude with moderate spread,
        rescaled so that typical values are comparable to $B$, then clipped
        and rounded.  The symmetric form assigns a random sign.  
  \item \textbf{Bimodal.} A two-component Gaussian mixture: in the symmetric
        setting the two modes are centered at approximately $\pm B/3$ with
        standard deviation about $B/10$; in the nonnegative setting the modes
        are centered near $0$ and $B$.
    \item \textbf{Student's $t$.} A Student's $t$ variate with $\nu$ degrees
        of freedom (default $\nu=2$, giving infinite variance), scaled by
        $B/4$ and clipped to $[-B,B]$ after rounding.  The nonneg\-ative
        form takes the absolute value before clipping.  Setting $\nu=1$
        yields the Cauchy distribution (infinite mean and variance).

\end{itemize}
All experiments use a fixed magnitude parameter $B$ across distributions
to keep dynamic range comparable (the benchmark uses $B=10^{12}$ unless
otherwise stated).

\paragraph{Target generation.}
Targets are generated as follows:
Choose a uniformly random subset of the sampled items (each item independently included with probability $1/2$) and set $T$ to the sum of that subset. This guarantees at least one exact solution exists.

\subsection{Proposed method configuration}
\label{app:proposed_method_config}

All experiments evaluate the meet-in-the-middle beam method described in
Section~5.  The item list is split into a left stage (Phase~A) and a right
stage (Phase~B).  Phase~A builds a set of anchors; Phase~B runs a width-$w$
beam search scored by distance to the set of residual targets induced by
the anchors.

\paragraph{Beam width grid.}
We evaluate a geometric grid of beam widths $w$ (powers-of-two style):
starting from a minimum value, repeatedly multiply by 2 up to a maximum.
Unless otherwise stated, this grid matches the one used in the benchmark
driver for the corresponding figure.

\paragraph{Split strategies.}
We evaluate several split rules:
\begin{itemize}[leftmargin=*]
  \item \textbf{Half split.} Split at $n/2$.
  \item \textbf{Fixed split.} Split at a fixed index (used to demonstrate
        failure when Phase~A is intentionally undersized).
  \item \textbf{Logarithmic split.} Split at $\lfloor c\log_2 w\rfloor$ for a
        chosen constant $c$, as motivated by the Phase~A analysis.
\end{itemize}

\paragraph{Phase~A anchoring rule.}
Phase~A expands the current anchor set by include/exclude of each left-half
item, then performs a width control step that retains a structured subset
of candidates.  We evaluate two Phase~A variants:
\begin{itemize}[leftmargin=*]
  \item \textbf{Bucketed random representative.} Partition the fixed domain
        $[-B/2,B/2]$ into $w$ equal-width buckets and keep one uniformly random
        candidate from each non-empty bucket.
  \item \textbf{Deterministic equi-sampling.} Sort the unique candidates and
        keep $w$ approximately evenly spaced representatives (max-spacing style).
\end{itemize}

\subsection{Parameter values}
\label{app:parameter-values}

Figures 1 through \ref{fig:heuristic_comparison_err_time} are run with 200 trials, $n=200$, $B=10^{12}$, with the target chosen as a random subset, and with the split point chosen as $\lfloor4\log(w)\rfloor$ unless otherwise stated. 

Figure \ref{fig:heuristic_comparison_err_time} is run with 100 trials, $n=300$, and $B=10^{15}$, with the target chosen as a random subset. 

Figure \ref{fig:vector-ssp} is run with 100 trials, $n=300$, and $B=10^{12}$.

Detailed hyperparameter breakdowns for each figure are available in the attached code.

\subsection{Fixed-exponent reference fits}
\label{app:aggregation_fits}

To visualize agreement with the predicted inverse-quadratic decay, we
overlay a fixed-exponent reference fit of the form
\[
\widehat{\mathbb{E}}[\mathrm{err}(w)] \approx \frac{c}{w^2}.
\]
The constant $c$ is estimated by least squares on the aggregated mean errors.
When reporting an implied constant in the theoretical scaling
$\mathbb{E}[\mathrm{err}] \approx C\cdot B/(n w^2)$, we convert via
\[
C \approx \frac{c\,n}{B}.
\]

\end{document}